\newtheorem{theorem}{Theorem}
\newtheorem{remark}{Remark}
\begin{document}

\title{\bfseries \normalsize A SPATIAL FUNCTIONAL COUNT MODEL FOR HETEROGENEITY ANALYSIS IN TIME}

\date{}
\author{\normalsize  A. Torres--Signes, M.P. Fr\'{\i}as,  J. Mateu and  M.D. Ruiz--Medina}
% \maketitle
\maketitle

\begin{abstract}      A spatial curve dynamical model framework is adopted for functional prediction of counts in a spatiotemporal log--Gaussian Cox process model. Our spatial functional estimation approach  handles both wavelet--based heterogeneity analysis in time, and spectral analysis in space. Specifically, model fitting is achieved by minimising the information divergence or relative entropy between the  multiscale model underlying  the data and the corresponding candidates  in the spatial spectral domain.
  A simulation study is carried out within the family of log--Gaussian Spatial Autoregressive $\ell^{2}$--valued  processes (SAR$\ell^{2}$  processes) to illustrate the  asymptotic properties of the proposed spatial functional estimators.  We apply our modelling strategy to spatiotemporal prediction  of respiratory  disease mortality.

\medskip

 \noindent Cox processes in Hilbert spaces;   Spatial functional estimation;  Spectral wavelet--based analysis
% \PACS{PACS code1 \and PACS code2 \and more}
\medskip

\noindent MSC code1 60G25;  60G60;  62J05;  MSC code2 62J10

\end{abstract}

\section{Introduction}
\label{Intro}
 Count and aggregated data can be generally found in problems of disease incidence, mortality,   population dynamics, or wildfire occurrences that span the scientific fields of Environmental Health, Ecology, Epidemiology, and Atmospheric Environment, to mention just a few. In such cases, stochastic modelling of counts allows for a deeper understanding and accurate predictions for risk assessment and management (see \cite{Choi03};  \cite{Christakos92}; \cite{Christakos98}; \cite{Christakos00}; \cite{Christakos05}; \cite{Christakos17};
\cite{Daley8808};  \cite{Diggleb}; \cite{He20}; \cite{Illian08}, and the references therein).

In most of these cases, the term aggregated point process data (or aggregated data, for short) is used to refer to  discretely observed data which in reality most likely arose from an
underlying spatially- or spatiotemporally-continuous point process (see \cite{Diggleetal10a} and \cite{Taylor18}). These later authors argue that it is possible to fit a discrete model and obtain spatially- or spatiotemporally-continuous inference via spatial prediction.
We refer  the reader to  \cite{MollerWaa04} for background material on spatial point processes and the corresponding theoretical details.

In particular, the family of spatial Cox processes (see \cite{Cox55},  \cite{Grandell76})  has been extensively considered in point pattern analysis.
 The log--normal intensity model adopted here provides a  flexible modelling framework  (see \cite{Diggle}; \cite{Gonzalez16}, and \cite{Moller98}, among others). Its complete characterisation by the intensity and pair correlation functions  makes possible its application to different environmental fields (see, e.g., \cite{Rathbun94} in pine forest; \cite{Serra14} in wildfire occurrences; \cite{Waller97}; \cite{WMZ} in epidemic dynamics modelling, or \cite{Li12} in disease mapping). Extended models can be found, for instance,  in  \cite{Moller14};     \cite{Simpson16}; and \cite{Waagepetersen16}.
 It is well-known that log--Gaussian Cox processes allow the application of  parametric (likelihood, pseudo-li\-ke\-li\-hood,  composite likelihood),   semi-parametric, and  classical and Bayesian estimation methodologies,  avoiding  biased estimations,  as observed in kernel estimators (see \cite{Baddeleyeta06}; \cite{Diggleetal10b}; \cite{GB18};  \cite{Guan06};    \cite{Jalilian19}, to mention a few).

The distribution of the hidden environmental  fields driving the counts usually displays significant variability and uncertainties across
space and time. The characterisation of these fields    depends on the spatial scale
at which the phenomenon is considered, that could be different from the measurement
scale. The effect of    heterogeneities  at different
geographical scales on the spatial distribution of counts has been already examined in \cite{Christakos01};  \cite{Congdon17}; and \cite{Li08}.
 Another issue to be addressed, when inference comes to play, is the size and resolution of the temporal window, quantifying  temporal rate fluctuations   at the spatial regions  (see, e.g.,  \cite{Banks07}; \cite{Kennedy95}; \cite{SalapAyca18}). The approach presented in this paper addresses this problem in a Functional Data Analysis (FDA) framework, incorporating spatial correlations between curve rate parameters, at  the considered regions. The resulting functional predictions reflect spatial point pattern evolution at any time.
 Note that  FDA techniques are well suited to estimate summary statistics, which are  functional in nature. In particular, point process data classification, based on second--order statistics, can be performed applying  FDA methodologies   (see, e.g., pp. 135--150 in \cite{Baddeleyeta06}, and \cite{Illian08}). However, FDA is a relatively new branch in point pattern analysis.  We note the contributions of \cite{WMZ}, where a functional statistical approach is adopted in the approximation of the  distribution  of the random event times observed over a fixed time interval, and the recent one by \cite{Cronie20}, where a new framework to handle functional marked point processes is derived.

   One of the most important  challenges in  point pattern analysis from a FDA framework is the suitable definition of the process that generates the points. An $\ell^{2}$-valued homogeneous Poisson process is introduced in \cite{Bosq14}, where its functional parameter estimation and prediction are addressed from both, a componentwise Bayesian and classical frameworks. The asymptotic efficiency and equivalence of both estimation approaches are also shown.  In \cite{Torres16}, sufficient conditions are derived  for the existence and proper definition of an $\ell^{2}$-valued temporal log-Gaussian Cox process, with infinite-dimensional log--intensity given by a Hilbert-valued Ornstein--Uhlenbeck  process. Its estimation is achieved using a discrete ARH(1) approximation of such process in time.

 The present paper establishes sufficient conditions to introduce a new class of spatial $\ell^{2}$-valued log-Gaussian Cox processes.   These conditions entail the corresponding random intensity process to live in a real separable Hilbert space. Note that, recently, in \cite{Frias2020}, under mild conditions, a new class of spatial Cox processes has been introduced, driven by  a  log-intensity process lying  in a real separable Hilbert space. However, its intensity process does not necessarily belongs to such a space.  This paper attempts to cover this gap. The derived conditions allow to perform a multiscale analysis of the functional variance of the random intensity process. The  range of
 temporal  fluctuations  is then analysed through different scales. In our case, we choose   a compactly supported wavelet basis. A more accurate fitting of  the local variability displayed by  curve data is obtained with  this  multiscale analysis. Note that,  B--splines bases have been widely used in Functional Data Analysis (FDA) preprocessing leading, in some cases, to an over--smoothing of the analysed curve data.

The present paper also proposes  an alternative  spectral--based  multiscale    spatial functional estimation methodology, in contrast with the Whittle-based parametric one adopted  in  \cite{Frias2020}. Indeed, this methodology involves the relative entropy minimization criterion, to obtain the optimal multiscale model, underlying  the data, in the spatial spectral domain, from the computation of the periodogram operator at different temporal resolution levels.  The properties of the derived multiscale estimators are analysed in the simulation study.  The validation results obtained in the real--data application illustrate the good properties of the estimation approach presented in the reconstruction of the log--intensity field at different temporal scales.

Summarising,  the main ingredients used in the introduction of a new class of multiscale spatial
log--Gaussian Cox processes in $\ell^{2}$ spaces can be found in Section \ref{s3b}.  The theoretical results for a  multiscale analysis of the  functional variance are provided in Section \ref{SSAS}.
In Section \ref{sec3vf}, a temporal multiresolution estimation  approach  is adopted in the spatial  spectral domain. The class of  log--Gaussian SAR$\ell^{2}$(1) intensity processes is considered in the implementation of this estimation framework.
  The multiscale analysis, and the asymptotic properties  of the proposed componentwise estimators, in the spectral domain, are illustrated  in  the simulation study carried out in Section \ref{s5}.
   The  introduced spatial functional estimation methodology is then implemented  for  prediction of respiratory  disease mortality,  in a real-data application in Section \ref{s6}.

\section{Spatial  log-Gaussian Cox processes in infinite dimensions}
\label{s3b}
Let  $(\Omega,\mathcal{A},P)$ be  the basic probability space, where all the random variables  are subsequently defined on. Denote by $\mathcal{H}$  an arbitrary  real separable Hilbert space of functions,  with the inner product $\left\langle \cdot,\cdot\right\rangle_{\mathcal{H}},$ and the associated norm $\|\cdot\|_{\mathcal{H}}.$
Let  $\mathbf{X}=\{X_{\mathbf{z}},\ \mathbf{z}\in \mathbb{R}^{d}\}$ be a spatial stationary zero--mean Gaussian random field,  with values in $\mathcal{H}.$
Hence,  $ \sigma^{2}=E\|X_{\mathbf{z}}\|_{\mathcal{H}}^{2}<\infty,$ and
$P\left[X_{\mathbf{z}}\in \mathcal{H}\right]=1,$
 for each $\mathbf{z}\in  \mathbb{R}^{d}.$ That is,  $X_{\mathbf{z}}$ defines a random element in $\mathcal{H},$ for every $\mathbf{z}\in \mathbb{R}^{d}.$

 The  nuclear cross--covariance  operator
 \begin{eqnarray}
\mathcal{R}_{\mathbf{z}-\mathbf{y}}^{\mathbf{X}}(f)(g)&=&
E\left(X_{\mathbf{z}}\otimes X_{\mathbf{y}}\right)(f)(g)=\left\langle E\left( X_{\mathbf{z}}\otimes X_{\mathbf{y}}\right)(f), g
\right\rangle_{\mathcal{H}},\ f,g\in \mathcal{H},
\nonumber\\
\label{SCO}
\end{eqnarray}
\noindent   defines the spatial functional dependence structure of the infinite--dimensional Gaussian random field $\mathbf{X}.$
 We have applied Riesz representation theorem to define $\mathcal{R}_{\mathbf{z}-\mathbf{y}}^{\mathbf{X}}(f)(g)$ as   the dual element of $\mathcal{R}_{\mathbf{z}-\mathbf{y}}^{\mathbf{X}}(f)$ acting on $g\in \mathcal{H},$ for every $f,g\in \mathcal{H}.$
Here, we are restricting our attention to the class of nuclear or trace operators, i.e.,  in the space $\ell^{1}(\mathcal{H}),$ satisfying
$$\|\mathcal{R}_{\mathbf{z}-\mathbf{y}}^{\mathbf{X}}\|_{\ell^{1}(\mathcal{H})}=\sum_{j=1}^{\infty}\left\langle \left(\left[\mathcal{R}_{\mathbf{z}-\mathbf{y}}^{\mathbf{X}}\right]^{\star}\mathcal{R}_{\mathbf{z}-\mathbf{y}}^{\mathbf{X}}\right)^{1/2}(\varphi_{j}),
\varphi_{j}\right\rangle_{\mathcal{H}}<\infty,$$
 \noindent  for any orthonormal basis $\{\varphi_{j}\}_{j\geq 1}$ in $\mathcal{H}.$

\begin{remark}
Note that the approach presented is focused on modelling spatial functional (curve) dependence and variability in an $\mathcal{H}$--framework. This is the reason why, in our initial assumptions, the infinite--dimensional spatial field $\mathbf{X}$ is considered to be previously detrended, under spatial homogeneity. We refer  the reader to  Section 4 in \cite{Bosq14}, for instance, where a componentwise approach is adopted in the estimation of the functional trend of an infinite--dimensional Gaussian population, under  classical and Bayesian frameworks.
\end{remark}

From (\ref{SCO}),  $\mathcal{R}_{\mathbf{0}}^{\mathbf{X}},$ with kernel $r_{\mathbf{0}}^{\mathbf{X}},$ is a self-adjoint (symmetric) trace operator, satisfying
\begin{equation} \mathcal{R}_{\mathbf{0}}^{\mathbf{X}}(\phi_{j})=\lambda_{j}(\mathcal{R}_{\mathbf{0}}^{\mathbf{X}})\phi_{j},\quad j\geq 1, \label{eid}
\end{equation}
\noindent where  $\{\phi_{j},\ j\geq 1\}$  denotes the orthonormal system of eigenvectors
of   $\mathcal{R}_{\mathbf{0}}^{\mathbf{X}}$ in $\mathcal{H}.$
For each $\mathbf{z}\in \mathbb{R}^{d},$ $X_{\mathbf{z}}$ admits the following orthogonal expansion in $\mathcal{L}^{2}_{\mathcal{H}}(\Omega,\mathcal{A},P)$
(see \cite{Angulo97}))
\begin{eqnarray}
X_{\mathbf{z}}&=&\sum_{j=1}^{\infty}\left\langle X_{\mathbf{z}},\phi_{j}\right\rangle_{\mathcal{H}}\phi_{j}=\sum_{j=1}^{\infty}X_{\mathbf{z}}(\phi_{j})\phi_{j}\nonumber\\
r_{\mathbf{0}}^{\mathbf{X}}&\underset{\mathcal{H}\otimes \mathcal{H}}{=}&\sum_{j=1}^{\infty}\lambda_{j}(\mathcal{R}_{\mathbf{0}}^{\mathbf{X}})\phi_{j}\otimes \phi_{j}.
\label{kle}
\end{eqnarray}
\noindent
That is, $$E\left\| X_{\mathbf{z}}-\sum_{j=1}^{M}\left\langle X_{\mathbf{z}},\phi_{j}\right\rangle_{\mathcal{H}}\phi_{j}\right\|_{\mathcal{H}}^{2}\to 0,\quad M\to \infty,$$  \noindent with  $E\left[ \left\langle X_{\mathbf{z}},\phi_{j}\right\rangle_{\mathcal{H}}\left\langle X_{\mathbf{z}},\phi_{p}\right\rangle_{\mathcal{H}}\right]=\delta_{j,p}\lambda_{j}(\mathcal{R}_{\mathbf{0}}^{\mathbf{X}}),$ $j\geq 1,$ for each $\mathbf{z}\in \mathbb{R}^{d}.$
Here, $\delta_{j,p}$ denotes the Kronecker delta function.

Assume that $\mathbf{X}$ is such that, for every $\mathbf{z}\in \mathbb{R}^{d},$ $X_{\mathbf{z}}$  almost surely (a.s.) has support in the  bounded temporal interval  $\mathcal{T}\subset \mathbb{R}_{+}.$ Define, for each fixed  $\mathbf{z}\in \mathbb{R}^{d},$

\begin{equation}
\Lambda_{\mathbf{z}}(t)=\exp\left( X_{\mathbf{z}}(t)\right)=\sum_{p=0}^{\infty }\frac{C_{p}}{p!}H_{p}(X_{\mathbf{z}}(t)),\quad \forall  t\in \mathcal{T},\quad  \mbox{a.s},
\label{pd}
\end{equation}
\noindent where the last equality follows from Hermite polynomial expansion in the space $L_{2}\left(\mathbb{R},\varphi (u)du\right),$ with $\varphi (u)=(1/\sqrt{2\pi})\exp\left(-u^{2}/2\right).$ Here,  $H_{p}$ denotes the $p$th Hermite polynomial, and $C_{p}$ is the associated coefficient of function $G(u)=\exp(u),$ by projection in the space $L_{2}\left(\mathbb{R},\varphi (u)du\right).$

The next condition on $\Lambda_{\mathbf{z}},$ $\mathbf{z}\in \mathbb{R}^{d},$ allows the introduction from (\ref{pd})  of our functional model for   the   spatial  counting random density in the $\mathcal{L}^{p}_{\mathcal{H}}(\Omega ,\mathcal{A}, P)$ sense, $p\geq 1.$

\medskip

\noindent \textbf{Condition C1}. Assume that for any bounded set  $A\in \mathcal{B}^{d}$ of the Borel
$\sigma$--algebra $\mathcal{B}^{d}$ of $\mathbb{R}^{d},$ the following almost surely (a.s.) integral is finite:
  \begin{equation}\boldsymbol{\Lambda }(A)=\int_{A}\int_{\mathcal{T}}\Lambda_{\mathbf{z}}(t)dtd\mathbf{z}<\infty,\quad  \mbox{(a.s)}.\label{ce}\end{equation}

   Given the observations $\Psi_{\mathbf{z},\omega_{0} }=\int_{\mathcal{T}}\Lambda_{\mathbf{z},\omega_{0} }(t)dt,$ $\mathbf{z}\in A\subset \mathbb{R}^{d},$
  for certain  $\omega_{0}\in \Omega,$
    the number of events $\boldsymbol{\mathcal{C}}(A),$ that occur, during the period $\mathcal{T},$  at the region  $A,$ follows a Poisson probability distribution with mean $\boldsymbol{\Lambda }(A).$
        Note that, the least--squares predictor of $\boldsymbol{\mathcal{C}}(A)$ is given by $\boldsymbol{\Lambda }(A) ,$ introduced in
  (\ref{ce}), for any bounded Borel set $A \in \mathcal{B}^{d}.$
  From  (\ref{pd}), equation (\ref{kle}) leads to the following expression of the second--order variation of $\Psi_{\mathbf{z}}:$
  \begin{eqnarray}
 E[\Psi_{\mathbf{z}}^{2}]&=&\sum_{p=0}^{\infty}\int_{\mathcal{T}\times \mathcal{T}}\frac{C_{p}(t)C_{p}(s)}{p!}\left[\sum_{j=1}^{\infty}\lambda_{j}(\mathcal{R}_{\mathbf{0}}^{\mathbf{X}})\phi_{j}\otimes \phi_{j}(t,s)\right]^{p}dtds.
\label{som}
\end{eqnarray}

  \section{Spatial second--order analysis at different temporal scales}
  \label{SSAS}
  Consider the special case where  $\mathcal{H}=L^{2}(\mathcal{T}),$ the space of square integrable functions on $\mathcal{T}.$

  \begin{theorem}
  \label{th1}
  Under \textbf{Condition C1}, if $\left\{ \phi_{j},\ j\geq 1\right\},$   in  equation (\ref{kle}), are uniformly bounded in $\mathcal{T},$ $\Psi_{\mathbf{z}}$ defines a spatial second--order random  density.

  \end{theorem}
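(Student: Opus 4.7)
My plan is to show $E[\Psi_{\mathbf{z}}^{2}]<\infty$, which by stationarity of $\mathbf{X}$ is the appropriate notion of ``second--order'' in space. Rather than summing the Hermite series in (\ref{som}) directly, I would compute $E[\Psi_{\mathbf{z}}^{2}]$ via a Tonelli argument and the Gaussian moment generating function, which delivers a closed--form integrand. Since $\Lambda_{\mathbf{z}}(t)\ge 0$ almost surely, Tonelli gives
\begin{equation*}
E[\Psi_{\mathbf{z}}^{2}]=\int_{\mathcal{T}\times \mathcal{T}} E\bigl[\exp(X_{\mathbf{z}}(t)+X_{\mathbf{z}}(s))\bigr]\,dt\,ds,
\end{equation*}
and since $X_{\mathbf{z}}(t)+X_{\mathbf{z}}(s)$ is zero--mean Gaussian with variance $r_{\mathbf{0}}^{\mathbf{X}}(t,t)+r_{\mathbf{0}}^{\mathbf{X}}(s,s)+2r_{\mathbf{0}}^{\mathbf{X}}(t,s)$, the MGF evaluated at $1$ yields $E[\exp(X_{\mathbf{z}}(t)+X_{\mathbf{z}}(s))]=\exp\{\tfrac{1}{2}[r_{\mathbf{0}}^{\mathbf{X}}(t,t)+r_{\mathbf{0}}^{\mathbf{X}}(s,s)]+r_{\mathbf{0}}^{\mathbf{X}}(t,s)\}$. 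Summing the series in (\ref{som}) via the standard Hermite orthogonality produces the same expression, so the two routes are equivalent.

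The core step is then to bound the kernel $r_{\mathbf{0}}^{\mathbf{X}}$ uniformly on $\mathcal{T}\times\mathcal{T}$. Let $K=\sup_{j\ge 1}\sup_{t\in\mathcal{T}}|\phi_{j}(t)|<\infty$ by hypothesis. Combining this with the trace--class property $\mathrm{tr}(\mathcal{R}_{\mathbf{0}}^{\mathbf{X}})=\sum_{j\ge 1}\lambda_{j}(\mathcal{R}_{\mathbf{0}}^{\mathbf{X}})<\infty$ (guaranteed by $\mathcal{R}_{\mathbf{0}}^{\mathbf{X}}\in \ell^{1}(\mathcal{H})$) and the Karhunen--Lo\`{e}ve--type expansion (\ref{kle}) of $r_{\mathbf{0}}^{\mathbf{X}}$, I would bound
\begin{equation*}
|r_{\mathbf{0}}^{\mathbf{X}}(t,s)|\le \sum_{j=1}^{\infty}\lambda_{j}(\mathcal{R}_{\mathbf{0}}^{\mathbf{X}})|\phi_{j}(t)||\phi_{j}(s)|\le K^{2}\,\mathrm{tr}(\mathcal{R}_{\mathbf{0}}^{\mathbf{X}})=:M<\infty,
\end{equation*}
uniformly in $(t,s)\in\mathcal{T}\times\mathcal{T}$; specialising to the diagonal also gives $0\le r_{\mathbf{0}}^{\mathbf{X}}(t,t)\le M$.

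Inserting these bounds into the MGF expression yields $E[\exp(X_{\mathbf{z}}(t)+X_{\mathbf{z}}(s))]\le e^{2M}$ uniformly in $(t,s)$, and since $\mathcal{T}$ is bounded,
\begin{equation*}
E[\Psi_{\mathbf{z}}^{2}]\le |\mathcal{T}|^{2}\, e^{2M}<\infty,
\end{equation*}
independently of $\mathbf{z}$. I do not anticipate any serious obstacle: the only point that truly requires care is that the coefficients $C_{p}$ appearing in (\ref{pd})--(\ref{som}), if one takes $X_{\mathbf{z}}(t)$ itself (rather than $X_{\mathbf{z}}(t)/\sigma(t)$) as the argument of $H_{p}$, must implicitly depend on $t$ through $\sigma(t)=\sqrt{r_{\mathbf{0}}^{\mathbf{X}}(t,t)}$, so termwise interchange of $\sum_{p}$ and $\int$ in (\ref{som}) needs to be justified (e.g.\ by dominated convergence using the bound $M$ above). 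The MGF route sidesteps this bookkeeping entirely, which is why I would prefer it; matching it against (\ref{som}) is a routine consistency check using $E[H_{p}(Y_{1})H_{q}(Y_{2})]=\delta_{p,q}p!\,\rho^{p}$ for jointly standard Gaussian $(Y_{1},Y_{2})$ with correlation $\rho$.
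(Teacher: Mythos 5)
Your argument is correct, but it takes a genuinely different route from the paper's. The paper stays inside the Hermite (Wiener chaos) expansion (\ref{som}): it bounds each order $p$ separately via Proposition 4.9 of Marinucci and Peccati and the Cauchy--Schwarz inequality, uses $E[H_{p}(X_{\mathbf{z}}(t))]^{2}=p!\,[r_{\mathbf{0}}^{\mathbf{X}}(t,t)]^{p}$, and only then resums the series, arriving at a bound of the form $|\mathcal{T}|^{2}\exp\left(4\|\mathcal{R}_{\mathbf{0}}^{\mathbf{X}}\|_{\ell^{1}(\mathcal{H})}\mathcal{M}^{2}\right)$. You instead collapse the whole exponential at once: Tonelli plus the Gaussian moment generating function give the closed form $E[\Psi_{\mathbf{z}}^{2}]=\int_{\mathcal{T}\times\mathcal{T}}\exp\left\{\tfrac{1}{2}[r_{\mathbf{0}}^{\mathbf{X}}(t,t)+r_{\mathbf{0}}^{\mathbf{X}}(s,s)]+r_{\mathbf{0}}^{\mathbf{X}}(t,s)\right\}dt\,ds$, and the uniform kernel bound $|r_{\mathbf{0}}^{\mathbf{X}}(t,s)|\leq \mathcal{M}^{2}\|\mathcal{R}_{\mathbf{0}}^{\mathbf{X}}\|_{\ell^{1}(\mathcal{H})}$ --- which uses exactly the two hypotheses (uniform boundedness of the $\phi_{j}$ and nuclearity of $\mathcal{R}_{\mathbf{0}}^{\mathbf{X}}$) in exactly the same way as the paper --- finishes the proof. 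Your route is shorter and sidesteps the sum--integral interchange and the $t$--dependence of the Hermite coefficients, a point you rightly flag: the paper writes $C_{p}$ in (\ref{pd}) but $C_{p}(t)$ in (\ref{som}) without comment, since $X_{\mathbf{z}}(t)$ is not standardised. What the paper's termwise route buys in exchange is an explicit accounting of each chaos order $p$, which is reused almost verbatim in the multiscale computation (\ref{normintp}) of Section \ref{SSAS}; your MGF argument would not transfer there as directly. Two small points to make explicit if you write this up: Tonelli needs joint measurability of $(t,\omega)\mapsto\Lambda_{\mathbf{z}}(t)$ (implicit in \textbf{Condition C1}), and the pointwise Gaussianity of $X_{\mathbf{z}}(t)+X_{\mathbf{z}}(s)$ should be justified from the $\mathcal{L}^{2}(\Omega)$--convergent expansion (\ref{kle}) evaluated at $t$ and $s$, which is where the uniform bound on the $\phi_{j}$ is again needed.
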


  \begin{proof} From equation
  (\ref{som}), applying  Proposition 4.9 in p. 92 in \cite{Marinucci}, after considering   Cauchy--Schwarz inequality, in terms of the inner product introduced in Formula (4.7)  in p.89 of \cite{Marinucci}, Hermite expansion properties lead to
 \begin{eqnarray} E[\Psi_{\mathbf{z}}^{2}]&=&\sum_{p=0}^{\infty}\int_{\mathcal{T}\times \mathcal{T}}\frac{C_{p}(t)C_{p}(s)}{p!}\left[\sum_{j=1}^{\infty}\lambda_{j}(\mathcal{R}_{\mathbf{0}}^{\mathbf{X}})\phi_{j}\otimes \phi_{j}(t,s)\right]^{p}dtds\nonumber\\
 &\leq & \sum_{p=0}^{\infty}\frac{1}{p!}\int_{\mathcal{T}\times \mathcal{T}}\sqrt{E[\Lambda_{\mathbf{z}}(t)]^{2}E[\Lambda_{\mathbf{z}}(s)]^{2}}
 \sqrt{E[H_{p}(X_{\mathbf{z}}(t))]^{2}E[H_{p}(X_{\mathbf{z}}(s))]^{2}}\nonumber\\
 &\times &
 \left[\sum_{j=1}^{\infty}\lambda_{j}(\mathcal{R}_{\mathbf{0}}^{\mathbf{X}})\phi_{j}\otimes \phi_{j}(t,t)
 \sum_{j=1}^{\infty}\lambda_{j}(\mathcal{R}_{\mathbf{0}}^{\mathbf{X}})\phi_{j}\otimes \phi_{j}(s,s)\right]^{p/2}
 dtds
 \nonumber\\
 &=&\sum_{p=0}^{\infty}\frac{1}{p!}\int_{\mathcal{T}\times \mathcal{T}}\exp\left(r_{0}(t,t)/2+r_{0}(s,s)/2\right)[r_{0}(t,t)r_{0}(s,s)]^{p/2}
 \nonumber\\&\times &
 \left[\sum_{j=1}^{\infty}\lambda_{j}(\mathcal{R}_{\mathbf{0}}^{\mathbf{X}})\phi_{j}\otimes \phi_{j}(t,t)
 \sum_{j=1}^{\infty}\lambda_{j}(\mathcal{R}_{\mathbf{0}}^{\mathbf{X}})\phi_{j}\otimes \phi_{j}(s,s)\right]^{p/2}
 dtds\nonumber\\
 &=& \sum_{p=0}^{\infty}\frac{1}{p!}\left\{\int_{\mathcal{T}}\exp\left(r_{0}(t,t)/2\right)\left[\sum_{j=1}^{\infty}\lambda_{j}(\mathcal{R}_{\mathbf{0}}^{\mathbf{X}})\phi_{j}\otimes \phi_{j}(t,t)\right]^{p}dt\right\}^{2}\nonumber\\
 &\leq & |\mathcal{T}|^{2}\exp\left(2\mathcal{M}^{2}\|\mathcal{R}_{\mathbf{0}}^{\mathbf{X}}\|_{\ell^{1}(\mathcal{H})}\right) \left\{\sum_{p=0}^{\infty}\frac{\mathcal{M}^{2p}}{p!}\left[\sum_{j=1}^{\infty}\lambda_{j}(\mathcal{R}_{\mathbf{0}}^{\mathbf{X}})\right]^{p}\right\}^{2}\nonumber\\ &= & |\mathcal{T}|^{2}\exp\left(4\|\mathcal{R}_{\mathbf{0}}^{\mathbf{X}}\|_{\ell^{1}(\mathcal{H})}\mathcal{M}^{2}\right)<\infty,\nonumber\label{equb}
 \end{eqnarray}
 \noindent where $\mathcal{M}>0,$ is such that $\sup_{t\in \mathcal{T}}\left|\phi_{j}(t)\right|\leq \mathcal{M},$ for any $j\geq 1.$
  \end{proof}

  Let $\{ \psi_{j:k},\ k\in \Gamma_{j},\ j\in \mathbb{Z}\}$ be an orthonormal basis of wavelets, providing a multiresolution analysis of $L^{2}(\mathcal{T})$  (see, e.g., \cite{RuizAng02}). For each $\mathbf{z}\in \mathbb{R}^{d},$ the zero--mean Gaussian random coefficient sequence $\{ X_{\mathbf{z}}(\psi_{j:k}),\ k\in \Gamma_{j},\ j\in \mathbb{Z}\},$ with $X_{\mathbf{z}}(\psi_{j:k})=\left\langle X_{\mathbf{z}},\psi_{j:k}\right\rangle_{L^{2}(\mathcal{T})},$  $k\in \Gamma_{j},$ $j\in \mathbb{Z},$ has covariance $$E[X_{\mathbf{z}}(\psi_{j_{1}:k_{1}})X_{\mathbf{z}}(\psi_{j_{2}:k_{2}})]=\mathcal{R}_{\mathbf{0}}^{\mathbf{X}}(\psi_{j_{1}:k_{1}})(\psi_{j_{2}:k_{2}}),\quad k\in \Gamma_{j_{i}},\ j_{i}\in \mathbb{Z},\ i=1,2,$$
  \noindent providing a multiscale analysis of the curve  dependence structure at spatial location $\mathbf{z},$ through  the autocovariance operator $\mathcal{R}_{\mathbf{0}}^{\mathbf{X}}.$ In a similar way,   for any $\mathbf{z},\mathbf{y}\in \mathbb{R}^{d},$  a multiscale analysis is induced by
   $$E[X_{\mathbf{z}}(\psi_{j_{1}:k_{1}})X_{\mathbf{y}}(\psi_{j_{2}:k_{2}})]=\mathcal{R}_{\mathbf{z}-\mathbf{y}}^{\mathbf{X}}(\psi_{j_{1}:k_{1}})(\psi_{j_{2}:k_{2}}),\quad k\in \Gamma_{j_{i}},\ j_{i}\in \mathbb{Z},\ i=1,2,$$
  \noindent on the curve cross--dependence structure  between the spatial locations $\mathbf{z}$ and $\mathbf{y},$ through  the cross-covariance operator $\mathcal{R}_{\mathbf{z}-\mathbf{y}}^{\mathbf{X}}.$
The covariance structure of the log--Gaussian sequence $\{ \exp\left(X_{\mathbf{z}}(\psi_{j:k})\right),\ k\in \Gamma_{j},\ j\in \mathbb{Z}\}$ also displays  a multiscale  analysis in time, in the space $L^{2}(\mathcal{T}),$ of the curve spatial dependence structure of the spatial infinite--dimensional intensity process  \linebreak $\left\{\Lambda_{\mathbf{z}}(t),\ t\in \mathcal{T},\ \mathbf{z}\in \mathbb{R}^{d}\right\}.$ Note that, from  (\ref{pd}),   applying Parseval identity in $[L^{2}(\mathcal{T})]^{\otimes p},$ $p\geq 1,$ and   Cauchy--Schwarz inequality in $L^{2}(\mathcal{T}),$
\begin{eqnarray}&&  \sum_{j=1}^{\infty}\sum_{k\in \Gamma_{j}}E\left[\exp\left(X_{\mathbf{z}}(\psi_{j:k})\right)\exp\left(X_{\mathbf{y}}(\psi_{j:k})\right)\right]\nonumber\\
&&=\sum_{j=1}^{\infty}\sum_{k\in \Gamma_{j}}
\exp\left(\mathcal{R}_{\mathbf{0}}^{\mathbf{X}}(\psi_{j:k})(\psi_{j:k})
+ \frac{R_{\mathbf{z}-\mathbf{y}}^{\mathbf{X}}(\psi_{j:k})(\psi_{j:k})+R_{\mathbf{y}-\mathbf{z}}^{\mathbf{X}}(\psi_{j:k})(\psi_{j:k})}{2}\right)
\nonumber\\
&&= \sum_{j=1}^{\infty}\sum_{k\in \Gamma_{j}}\sum_{p_{1},p_{2},p_{3}}\frac{(1/2)^{p_{2}+p_{3}}}{p_{1}!p_{2}!p_{3}!} \sum_{h_{1},\dots, h_{p_{1}}}
\sum_{l_{1},\dots, l_{p_{2}}}
\sum_{q_{1},\dots, q_{p_{3}}}
\left|\lambda_{h_{1}}\left(\mathcal{R}_{\mathbf{0}}^{\mathbf{X}}\right)\cdots \lambda_{h_{p_{1}}}\left(\mathcal{R}_{\mathbf{0}}^{\mathbf{X}}\right)\right|
\nonumber\\ &&
\hspace*{1cm}\times \left|\lambda_{l_{1}}\left(\mathcal{R}_{\mathbf{z}-\mathbf{y}}^{\mathbf{X}}\right)\cdots \lambda_{l_{p_{2}}}\left(\mathcal{R}_{\mathbf{z}-\mathbf{y}}^{\mathbf{X}}\right)\right|\left|\lambda_{q_{1}}\left(\mathcal{R}_{\mathbf{y}-\mathbf{z}}^{\mathbf{X}}\right)\cdots \lambda_{q_{p_{3}}}\left(\mathcal{R}_{\mathbf{y}-\mathbf{z}}^{\mathbf{X}}\right)\right|
\nonumber\\&&\hspace*{1cm}\times \left|\phi_{h_{1}}(\psi_{j:k})\cdots \phi_{h_{p_{1}}}(\psi_{j:k})\right|^{2}\nonumber\\
&&\hspace*{1cm}\times\left|\psi_{l_{1}}^{\mathbf{z}-\mathbf{y}}(\psi_{j:k})\cdots
\psi_{l_{p_{2}}}^{\mathbf{z}-\mathbf{y}}(\psi_{j:k})
\varphi_{l_{1}}^{\mathbf{z}-\mathbf{y}}(\psi_{j:k})\cdots\varphi_{l_{p_{2}}}^{\mathbf{z}-\mathbf{y}}(\psi_{j:k})\right|
\nonumber\\
&&\hspace*{1cm}\times\left|\psi_{q_{1}}^{\mathbf{y}-\mathbf{z}}(\psi_{j:k})\cdots
\psi_{q_{p_{3}}}^{\mathbf{y}-\mathbf{z}}(\psi_{j:k})
\varphi_{q_{1}}^{\mathbf{y}-\mathbf{z}}(\psi_{j:k})\cdots\varphi_{p_{3}}^{\mathbf{y}-\mathbf{z}}(\psi_{j:k}) \right|\nonumber\\
&&\leq \sum_{p_{1},p_{2},p_{3}}\frac{(1/2)^{p_{2}+p_{3}}}{p_{1}!p_{2}!p_{3}!}
\left[\sum_{h=1}^{\infty}\left|\lambda_{h}\left(\mathcal{R}_{\mathbf{0}}^{\mathbf{X}}\right)\right|\right]^{p_{1}}
\left[\sum_{l=1}^{\infty}\left|\lambda_{l}\left(\mathcal{R}_{\mathbf{z}-\mathbf{y}}^{\mathbf{X}}\right)\right|\right]^{p_{2}}
\left[\sum_{q=1}^{\infty}\left|\lambda_{q}\left(\mathcal{R}_{\mathbf{y}-\mathbf{z}}^{\mathbf{X}}\right)\right|\right]^{p_{3}}
\nonumber\\
&&
=\exp\left(
\|\mathcal{R}_{\mathbf{0}}^{\mathbf{X}}\|_{\ell^{1}(\mathcal{H})}+\frac{1}{2}\left[
\|\mathcal{R}_{\mathbf{z}-\mathbf{y}}^{\mathbf{X}}\|_{\ell^{1}(\mathcal{H})}+\|\mathcal{R}_{\mathbf{y}-\mathbf{z}}^{\mathbf{X}}\|_{\ell^{1}(\mathcal{H})}\right]
\right)
<\infty,\quad \forall \mathbf{z},\mathbf{y}\in \mathbb{R}^{d},\label{normintp}
\end{eqnarray}
\noindent which implies that the series  $$\sum_{j=1}^{\infty}\sum_{k\in \Gamma_{j}}E\left[\exp\left(X_{\mathbf{z}}(\psi_{j:k})\right)\right]^{2}=
\sum_{j=1}^{\infty}\sum_{k\in \Gamma_{j}}\exp\left(\mathcal{R}_{\mathbf{0}}^{\mathbf{X}}(\psi_{j:k})(\psi_{j:k})\right)
$$ \noindent is convergent, for every $\mathbf{z}\in \mathbb{R}^{d}.$   In (\ref{normintp}), we have considered (\ref{kle}), i.e.,
\begin{equation}
\mathcal{R}_{\mathbf{0}}^{\mathbf{X}}=\sum_{h=1}^{\infty}\lambda_{h}\left(\mathcal{R}_{\mathbf{0}}^{\mathbf{X}}\right)\phi_{h}\otimes \phi_{h}.
\label{eqedac}
\end{equation}
\noindent Also, we have applied that, for any $\mathbf{z},$ $\mathbf{y}\in \mathbb{R}^{d},$  $\mathcal{R}_{\mathbf{z}-\mathbf{y}}^{\mathbf{X}}$ and $\mathcal{R}_{\mathbf{y}-\mathbf{z}}^{\mathbf{X}}$ are nuclear operators admitting a singular value  decomposition, given by
\begin{eqnarray}
\mathcal{R}_{\mathbf{z}-\mathbf{y}}^{\mathbf{X}} &=& \sum_{l=1}^{\infty}\lambda_{l}\left(\mathcal{R}_{\mathbf{z}-\mathbf{y}}^{\mathbf{X}}\right)\psi_{l}^{\mathbf{z}-\mathbf{y}}\otimes \varphi_{l}^{\mathbf{z}-\mathbf{y}}\nonumber\\\mathcal{R}_{\mathbf{y}-\mathbf{z}}^{\mathbf{X}} &=& \sum_{q=1}^{\infty}\lambda_{q}\left(\mathcal{R}_{\mathbf{y}-\mathbf{z}}^{\mathbf{X}}\right)\psi_{q}^{\mathbf{y}-\mathbf{z}}\otimes \varphi_{q}^{\mathbf{y}-\mathbf{z}}.
\label{svd}
\end{eqnarray}

\section{Multiresolution spatial functional estimation in the spectral domain}
\label{sec3vf}
This section introduces the spatial functional estimation approach adopted in the spatial spectral domain following a multiscale componentwise parametric framework. In the next section, the Spatial Autoregressive Hilbertian model of order one (SAR$\mathcal{H}(1)$ model) is first introduced, in a spatial curve  and spectral model frameworks.
\subsection{A spatial  curve  state space equation}

Let $\mathbf{X}=\{X_{\mathbf{z}},\ \mathbf{z}\in \mathbb{R}^{d}\}$ be the Gaussian spatial  curve process introduced in Section \ref{s3b}.
Without loss of generality, we restrict our attention  here to the case $d=2,$  and  $\mathcal{H}=L^{2}(\mathcal{T}),$ $\mathcal{T}=[0,1].$   Assume $\mathbf{X}$ obeys a  Spatial Autoregressive  Hilbertian  State Equation (SAR$\mathcal{H}(1)$ equation), as given in \cite{Ruiz11a}.    Thus,
  \begin{equation}
X_{p,q}=Y_{p,q}-R=L_{1}(X_{p-1,q})+L_{2}(X_{p,q-1})+L_{3}(X_{p-1,q-1})+\epsilon_{p,q},\quad (p,q)\in \mathbb{Z}^{2},\label{Xij}
\end{equation}
\noindent where
 $R\in \mathcal{H}$ is the  functional mean, that is estimated  applying the methodology proposed in \cite{Bosq14}, from a compactly supported orthonormal wavelet basis  $\{ \psi_{j:k},\ k\in \Gamma_{j},\ j\in \mathbb{Z}\}$ in $L^{2}([0,1]).$
   The autocorrelation operators $L_{i},$ $i=1,2,3,$ are assumed to be bounded on $L^{2}([0,1])$. Random fluctuations, introduced by the external force, are represented in terms of the  $L^{2}([0,1])$--valued zero--mean Gaussian innovation process  $\epsilon=\{\epsilon_{p,q}, (p,q)\in \mathbb{Z}^2\}.$ Under spatial homogeneity, this process displays constant functional variance  $E||\epsilon_{p,q}||^2_{L^{2}([0,1])}=\sigma^{2},$ through the spatial locations  $(p,q)\in \mathbb{Z}^{2}.$ The spatial functional dependence structure of $\mathbf{X}$ is represented in terms of a  nuclear covariance operator, given by
$\mathcal{R}_{p,q}^{\epsilon }=E\left(\epsilon_{p+k,q+l}\bigotimes \epsilon_{k,l}\right)=E\left(\epsilon_{p,q}\bigotimes \epsilon_{0,0}\right),$ for every   $(p,q), (k,l) \in \mathbb{Z}^{2}.$
 In the following, we will work under the assumption  of  $\{\epsilon_{p,q},\ (p,q)\in \mathbb{Z}^{2}\}$ being  a  strong Gaussian white noise in $L^{2}([0,1]).$  Hence, $\mathcal{R}_{p,q}^{\epsilon }=0,$ for $p\neq q.$ In our framework, equation  (\ref{Xij}) is interpreted as the  discrete approximation of a spatial functional  log--intensity process over continuous space,  by considering constant values    within the  quadrants of the regular grid defining the spatial observation network  (see, e.g.,  \cite{Rathbun94}, in the real--valued case).  See also \cite{Ogata88} on spline
function approximation, to represent
the first-order intensity of a marked
inhomogeneous Poisson point process.

 In the implementation of our  wavelet based estimation, in the spectral domain,   of the spatial functional dependence structure of $\{\Lambda_{\mathbf{z}}(\cdot ),$ $\mathbf{z}\in \mathbb{R}^{d}\},$  we work under the conditions assumed in  Propositions 3 and 4 in \cite{Ruiz11a}, for the existence of a unique stationary solution to equation (\ref{Xij}); additionally, we also consider the following assumption:

\medskip

 \noindent \textbf{Condition C2}.
  $\mathcal{R}_{p,q}^{\mathbf{X}}$ is such that
  $\sum_{(p,q)\in \mathbb{Z}^{2}}\|\mathcal{R}_{p,q}^{\mathbf{X}}\|_{l^{1}(\mathcal{H})}<\infty.$

\medskip

 Under \textbf{Condition C2}, the spectral density operator is given by
 \begin{equation}\mathcal{F}_{\omega_{1},\omega_{2} }:=\frac{1}{(2\pi)^{2}}\sum_{(p,q)\in \mathbb{Z}^{2}}\mathcal{R}_{p,q}^{\mathbf{X}}
 \exp\left(-i(p\omega_{1}+q\omega_{2})\right),\quad (\omega_{1},\omega_{2})\in [0,2\pi)\times [0,2\pi),\label{sdo}\end{equation}
 \noindent which is a trace non--negative
 self--adjoint   operator.

  For a given functional sample of size
  $N=S_{1}\times S_{2},$ $\{X_{p,q},\ p=1,\dots, S_{1},\ q=1,\dots, S_{2}\},$ its functional Discrete Fourier Transform (fDFT) is defined as
\begin{equation}\widetilde{X}_{\omega_{1},\omega_{2} }^{N}(\cdot):=\frac{1}{2\pi \sqrt{N}}\sum_{p=1}^{S_{1}}\sum_{q=1}^{S_{2}}
 X_{p,q}(\cdot)\exp\left(-i(p\omega_{1}+q\omega_{2})\right).\label{fdft}\end{equation}
 \noindent This  transform is linear, periodic and Hermitian.
 Under suitable cumulant kernel conditions (see Theorem 2.2 in \cite{Panaretos13}), the fDFT  (\ref{fdft}) at  frequencies $\omega_{1}:=\omega_{1,N}=0,$ $\omega_{2,N}:=\omega_{2}=\pi,$
 $\omega_{j,N}\in \left\{\frac{2\pi}{N},\dots, \frac{2\pi\left[(N-1)/2\right]^{-}}{N}\right\};$ $\omega_{j,N}\to \omega_{j},$ $N\to \infty,$ $j=3,\dots,J,$ converges, as $N\to \infty,$ to
 independent Gaussian elements in $L^{2}\left( [0,1],\mathbb{R}\right),$ for $j=1,2,$ and in $L^{2}\left( [0,1],\mathbb{C}\right),$ for $j=3,\dots,J,$ with respective covariance operators
 $\mathcal{F}_{\omega_{j}},$ $j=1,\dots,J$ (see equation (\ref{sdo})).

 From a functional sample of size $N,$
the periodogram operator  at frequency $(\omega_{1},\omega_{2})\in [0,2\pi)\times  [0,2\pi)$ is given by

\begin{eqnarray}&&\mathcal{I}_{\omega_{1},\omega_{2}}^{N}(\cdot, \cdot)\nonumber\\
  &&:=
\sum_{p=1}^{S_{1}}\sum_{q=1}^{S_{2}}\sum_{p^{\prime }=1}^{S_{1}}\sum_{q^{\prime }=1}^{S_{2}}\frac{X_{p,q}\otimes X_{p^{\prime },q^{\prime }}(\cdot, \cdot)\exp\left(-i(p-p^{\prime })\omega_{1}-(q-q^{\prime })\omega_{2}\right)}{(2\pi)^{2}N},\nonumber\\
\label{kpo}
\end{eqnarray}

\noindent or, equivalently by

\begin{equation}
\mathcal{I}_{\omega_{1},\omega_{2}}^{N}:=\widetilde{X}_{\omega_{1},\omega_{2} }^{N}
\otimes \overline{\widetilde{X}_{\omega_{1},\omega_{2} }^{N}.
}\label{po}\end{equation}

For a given orthonormal   basis of compactly supported wavelets $\{ \psi_{j:k},\ k\in \Gamma_{j},\ j\in \mathbb{Z}\}$ in $L^{2}(\mathcal{T}),$
from equations (\ref{fdft})--(\ref{po}),
\begin{eqnarray}&&
\widetilde{X}_{\omega_{1},\omega_{2} }^{N}(\psi_{j:k})=
\frac{1}{2\pi \sqrt{N}}\sum_{p=1}^{S_{1}}\sum_{q=1}^{S_{2}}
 X_{p,q}(\psi_{j:k})\exp\left(-i(p\omega_{1}+q\omega_{2})\right),\ k\in \Gamma_{j},\ j\in \mathbb{Z}\nonumber\\ \label{fdft2a}\\&&
 \mathcal{I}_{\omega_{1},\omega_{2}}^{N}(\psi_{j_{1}:k_{1}})(\psi_{j_{2}:k_{2}})=\widetilde{X}_{\omega_{1},\omega_{2} }^{N}
(\psi_{j_{1}:k_{1}}) \overline{\widetilde{X}_{\omega_{1},\omega_{2} }^{N}}(\psi_{j_{2}:k_{2}})\nonumber\\
&&=\sum_{p=1}^{S_{1}}\sum_{q=1}^{S_{2}}\sum_{p^{\prime }=1}^{S_{1}}\sum_{q^{\prime }=1}^{S_{2}}\frac{X_{p,q}(\psi_{j_{1}:k_{1}}) X_{p^{\prime },q^{\prime }}(\psi_{j_{2}:k_{2}})\exp\left(-i(p-p^{\prime })\omega_{1}-(q-q^{\prime })\omega_{2}\right)}{(2\pi)^{2}N},\nonumber\\
\label{fdft2c}
 \end{eqnarray}
\noindent for any $k_{i}\in \Gamma_{j_{i}},$ $j_{i}\in \mathbb{Z},$ $i=1,2.$ Thus, a multiscale analysis in time is considered in the spatial spectral domain.

\subsection{The estimation approach}
 \label{sec_par}

From (\ref{Xij})--(\ref{fdft2c}), define the diagonal wavelet parameter vector sequence \begin{eqnarray}&&\boldsymbol{\theta}_{j:k} = \left(\theta_{j:k,1},\theta_{j:k, 2},\theta_{j:k,3}\right)\label{ps}\\ &&=      \left(L_{1}(\psi_{j:k})(\psi_{j:k}), L_{2}(\psi_{j:k})(\psi_{j:k}), L_{3}(\psi_{j:k})(\psi_{j:k})\right)\in \Theta_{j:k}\subset \Theta,\ k\in \Gamma_{j},\ j\in \mathbb{Z}\nonumber
\end{eqnarray}
 \noindent
  and the multiresolution approximation  $$\left\{X_{p,q}(\psi_{j:k}),\ p=1,\dots,S_{1},\ q=1,\dots, S_{2},\  k\in \Gamma_{j},\ j\in \mathbb{Z}\right\}$$\noindent   in time of the  spatial sample information. Note that, here, for every  $k\in \Gamma_{j},$ $j\in \mathbb{Z},$ $\Theta_{j:k}$ is finite, and $\Theta=\cup_{j\in \mathbb{Z}}\cup_{k\in \Gamma_{j}}\Theta_{j:k}$ is a compact set. For  $k\in \Gamma_{j},$ $j\geq 1,$ we also assume that the true parameter value $\boldsymbol{\theta}_{0,j:k}$ always lies in the interior of $\Theta_{j:k},$  and our spatial spectral model is identifiable in the wavelet domain.

 For  any   node $k\in \Gamma_{j},$ at  resolution level $j\in \mathbb{Z},$  one can consider
    the  parameter
    estimator    $\widehat{\boldsymbol{\theta}}_{N,j:k} = \left(\widehat{\theta}_{N,j:k,1},\widehat{\theta}_{N,j:k, 2},\widehat{\theta}_{N,j:k, 3}\right)$  of $\boldsymbol{\theta}_{j:k} = \left(\theta_{j:k,1},\theta_{j:k, 2},\theta_{j:k,3}\right),$   computed from the loss function
    \begin{eqnarray}&&K_{j:k}(\boldsymbol{\theta}_{0,j:k},\boldsymbol{\theta}_{j:k}):=\int_{[0,2\pi)\times [0,2\pi)}f_{j:k}(\boldsymbol{\varpi},\boldsymbol{\theta}_{0,j:k})
\eta_{j:k}(\boldsymbol{\varpi})\log\frac{\Psi_{j:k}(\boldsymbol{\varpi},\boldsymbol{\theta}_{0,j:k})}{\Psi_{j:k}(\boldsymbol{\varpi},\boldsymbol{\theta}_{j:k})}
d\boldsymbol{\varpi}\nonumber\\
&&= U_{j:k}(\boldsymbol{\theta}_{j:k})- U_{j:k}(\boldsymbol{\theta}_{0,j:k}),\label{eqcontrastf}
\end{eqnarray}
    \noindent where $\boldsymbol{\theta}_{0,j:k}$ denotes the true parameter value, associated with  node $k$ at scale $j\in \Gamma_{j}.$ The    multiscale normalised spatial spectral density $$\left\{\Psi_{j:k}(\boldsymbol{\varpi},\boldsymbol{\theta}_{j:k}),\ k\in \Gamma_{j},\ j\in \mathbb{Z},\ \boldsymbol{\varpi}\in [0,2\pi)\times [0,2\pi) \right\}$$ \noindent  is obtained from the identities  \begin{eqnarray}f_{j:k}(\boldsymbol{\varpi},\boldsymbol{\theta}_{j:k})&=&\sigma^{2}(\boldsymbol{\theta}_{j:k})
    \Psi_{j:k}(\boldsymbol{\varpi},\boldsymbol{\theta}_{j:k})\nonumber\\
    &=&\left[\int_{[0,2\pi)\times [0,2\pi)}
f_{j:k}(\boldsymbol{\varpi},\boldsymbol{\theta}_{j:k})\eta_{j:k}(\boldsymbol{\varpi})d\boldsymbol{\varpi}\right]
\Psi_{j:k}(\boldsymbol{\varpi},\boldsymbol{\theta}_{j:k})\nonumber\\
f_{j:k}(\boldsymbol{\varpi},\boldsymbol{\theta}_{j:k})&=&\frac{\sigma^{2}_{\epsilon (\psi_{j:k})}}{2\pi^{2}} \left|1-\theta_{j:k,1}e^{i\varpi_{1}}-\theta_{j:k,2}e^{i\varpi_{2}}-\theta_{j:k,3}e^{i(\varpi_{1}+\varpi_{2})}\right|^{-2},
\label{sdob0}\end{eqnarray}
   \noindent for every $\boldsymbol{\varpi}=(\varpi_{1},\varpi_{2})\in [0,2\pi)\times [0,2\pi),$ with, for each  $k\in \Gamma_{j},$ and $j\in \mathbb{Z},$
   $\eta_{j:k}(\boldsymbol{\varpi})$ being a nonnegative symmetric spatial  function, such that $\eta_{j:k}(\boldsymbol{\varpi})f_{j:k}(\boldsymbol{\varpi},\boldsymbol{\theta}_{j:k})\in L_{1}\left([0,2\pi)\times [0,2\pi)\right),$  the space of absolute integrable functions on $[0,2\pi)\times [0,2\pi),$ for each  $\boldsymbol{\theta}_{j:k} \in \Theta_{j:k}\subset \Theta.$

   The loss functions in (\ref{eqcontrastf}) measure the discrepancy, at different temporal resolution levels, between the true spatial spectral parametric model \linebreak  $\Psi_{j:k}(\boldsymbol{\varpi},\boldsymbol{\theta}_{0,j:k}),$ underlying  the data, and the parametric candidates \linebreak $\Psi_{j:k}(\boldsymbol{\varpi},\boldsymbol{\theta}_{j:k}),$  $\boldsymbol{\theta}_{j:k}\in \Theta_{j:k}\subset \Theta,$ at node $k,$ within  the temporal variation  scale $j\in \mathbb{Z}.$
In the last identity in equation (\ref{eqcontrastf}), for each scale $j\in \mathbb{Z},$
\begin{equation}U_{j:k}(\boldsymbol{\theta}_{j:k}):=-\int_{[0,2\pi)\times [0,2\pi)}f_{j:k}(\boldsymbol{\varpi},\boldsymbol{\theta}_{0,j:k})
 \eta_{j:k}(\boldsymbol{\varpi})\log\Psi_{j:k}(\boldsymbol{\varpi},\boldsymbol{\theta}_{j:k})d\boldsymbol{\varpi},\ k\in \Gamma_{j}.\label{eqcfield}
 \end{equation}

 In practice, we can then consider the \emph{empirical  multiscale  functional}
   \begin{equation}\hat{U}_{N,j:k}(\boldsymbol{\theta}_{j:k}):=-\int_{[0,2\pi)\times [0,2\pi)}I_{N,j:k}(\boldsymbol{\varpi})
 \eta_{j:k}(\boldsymbol{\varpi})\log\Psi_{j:k}(\boldsymbol{\varpi},\boldsymbol{\theta}_{j:k})d\boldsymbol{\varpi},\label{ecf}\end{equation}
\noindent where $I_{N,j:k}(\boldsymbol{\varpi})=\mathcal{I}_{\varpi_{1},\varpi_{2}}^{N}(\psi_{j:k})(\psi_{j:k})$ denotes, as before, the multiscale   periodogram   introduced in (\ref{fdft2c}), for  $k\in \Gamma_{j},$ and $j\in \mathbb{Z}.$

  For each  $k\in \Gamma_{j},$ and  $j\in \mathbb{Z},$    $\eta_{j:k}$  must satisfy   suitable conditions (see Theorem 2.1 in \cite{Alomari17}), such that the loss function (\ref{eqcontrastf}) has a minimum at the true parameter value, for each node at any scale, and the following asymptotic behaviour holds (see, e.g.,  \cite{Alomari17}):
 \begin{equation}\hat{U}_{N,j:k}(\boldsymbol{\theta}_{j:k})-\hat{U}_{N,j:k}(\boldsymbol{\theta}_{0,j:k}) \to_{P_{0,j:k}} K_{j:k}(\boldsymbol{\theta}_{0,j:k},\boldsymbol{\theta}_{j:k}),\quad N\to \infty,\label{confcv}
 \end{equation}
\noindent for  each $\boldsymbol{\theta}_{j:k}\in \Theta_{j:k}\subset \Theta,$   where $P_{0,j:k}$ denotes the measure  associated with density function $f_{j:k}(\boldsymbol{\varpi},\boldsymbol{\theta}_{0,j:k}),$ for each $k\in \Gamma_{j},$ and  $j\in \mathbb{Z}.$ To minimize the divergence in (\ref{eqcontrastf}), in practice,  we can   compute the minimum of $\hat{U}_{N,j:k}(\boldsymbol{\theta}_{j:k})$ over $\boldsymbol{\theta}_{j:k}\in \Theta_{j:k},$  through the different nodes $k$ at each scale $j\in \mathbb{Z}.$ That is, we will consider the multiscale parameter estimators

\begin{equation}
\widehat{\boldsymbol{\theta }}_{N,j:k}= \mbox{arg}\ \min_{\boldsymbol{\theta}_{j:k}\in \Theta_{j:k}}\hat{U}_{N,j:k}(\boldsymbol{\theta}_{j:k}),\quad k\in \Gamma_{j},\ j\in \mathbb{Z}.
  \label{mce}
  \end{equation}

The same estimation  procedure, based on   the multiscale   periodogram  in (\ref{fdft2c}), is applied for the remaining coefficients   in the two--dimensional wavelet transforms of operators $L_{i},$ $i=1,2,3,$, including  the scaling function coefficients, with respect to the basis $\left\{\varphi_{j_{0}:\widetilde{k}},\ \widetilde{k}\in \Upsilon_{j_{0}}\right\}$  of the space $V_{0}\subset L^{2}([0,1])$. That is, similar estimators are computed for the parameters
\begin{eqnarray}&&\boldsymbol{\theta}_{j_{0}:\widetilde{k},\widetilde{l}} = \left(\theta_{j_{0}:\widetilde{k},\widetilde{l},1},\theta_{j_{0}:\widetilde{k},\widetilde{l}, 2},\theta_{j_{0}:\widetilde{k},\widetilde{l},3}\right)\nonumber\\ &&=      \left(L_{1}(\varphi_{j_{0}:\widetilde{l}})(\varphi_{j_{0}:\widetilde{k}}), L_{2}(\varphi_{j_{0}:\widetilde{l}})(\varphi_{j_{0}:\widetilde{k}}), L_{3}(\varphi_{j_{0}:\widetilde{l}})(\varphi_{j_{0}:\widetilde{k}})\right)\in \Theta_{j_{0}:\widetilde{k}}\times \Theta_{j_{0}:\widetilde{l}}\nonumber\\
&&\hspace*{1cm}
\Theta_{j_{0}:\widetilde{k}}\times \Theta_{j_{0}:\widetilde{l}}
\subset \Theta\times \Theta,\quad \widetilde{k},\widetilde{l}\in \Upsilon_{j_{0}}
\nonumber\\
&&\boldsymbol{\theta}_{j_{0}:\widetilde{k}; j:k} = \left(\theta_{j_{0}:\widetilde{k}; j:k,1},\theta_{j_{0}:\widetilde{k}; j:k, 2},\theta_{j_{0}:\widetilde{k}; j:k,3}\right)\nonumber\\ &&=      \left(L_{1}(\psi_{j:k})(\varphi_{j_{0}:\widetilde{k}}), L_{2}(\psi_{j:k})(\varphi_{j_{0}:\widetilde{k}}), L_{3}(\psi_{j:k})(\varphi_{j_{0}:\widetilde{k}})\right)\in \Theta_{j_{0}:\widetilde{k}}\times \Theta_{j:k}\nonumber\\
&&\hspace*{1cm}
\Theta_{j_{0}:\widetilde{k}}\times \Theta_{j:k}
\subset \Theta\times \Theta,\quad \widetilde{k}\in \Upsilon_{j_{0}};\ k\in \Gamma_{j}, \ j\geq j_{0}
\nonumber\\
&&\boldsymbol{\theta}_{j:k; j_{0}:\widetilde{k}} = \left(\theta_{j:k; j_{0}:\widetilde{k},1},\theta_{j:k; j_{0}:\widetilde{k}, 2},\theta_{j:k; j_{0}:\widetilde{k},3}\right)\nonumber\\ &&=      \left(L_{1}(\varphi_{j_{0}:\widetilde{k}})(\psi_{j:k}), L_{2}(\varphi_{j_{0}:\widetilde{k}})(\psi_{j:k}), L_{3}(\varphi_{j_{0}:\widetilde{k}})(\psi_{j:k})\right)\in \Theta_{j:k}\times \Theta_{j_{0}:\widetilde{k}} \nonumber\\
&&\hspace*{1cm}\Theta_{j:k}\times
\Theta_{j_{0}:\widetilde{k}}
\subset \Theta\times \Theta,\quad   k\in \Gamma_{j}, \ j\geq j_{0}; \   \widetilde{k}\in \Upsilon_{j_{0}}
\nonumber\\
&&\boldsymbol{\theta}_{j:k;l:h} = \left(\theta_{j:k; l:h,1},\theta_{j:k; l:h, 2},\theta_{j:k; l:h,3}\right)\nonumber\\ &&=      \left(L_{1}(\psi_{j:k})(\psi_{l:h}), L_{2}(\psi_{j:k})(\psi_{l:h}), L_{3}(\psi_{j:k})(\psi_{l:h})\right)\in \Theta_{j:k}\times \Theta_{l:h} \nonumber\\
&&\hspace*{0.5cm}\Theta_{j:k}\times
\Theta_{l:h}
\subset \Theta\times \Theta,\quad   k\in \Gamma_{j}, \ j\geq j_{0}; \    h\in \Gamma_{l}, \ l\geq j_{0};\ (j,k)\neq (h,l).
\nonumber\\\label{ps2}
\end{eqnarray}

  The resulting multiscale SARH$\ell^{2}$(1)  plug--in predictor is  computed, for any   spatial location  $(p,q),$ as

\begin{eqnarray}\widehat{X}_{N,p,q}(\cdot)&=&
\sum_{\widetilde{k}\in \Upsilon_{j_{0}}}\sum_{\widetilde{l}\in \Upsilon_{j_{0}}}\widehat{\theta}_{N,j_{0}:\widetilde{k},\widetilde{l},1}X_{p-1,q}(\varphi_{j_{0}:\widetilde{l}})\varphi_{j_{0}:\widetilde{k}}(\cdot)\nonumber\\
&+&\sum_{\widetilde{k}\in \Upsilon_{j_{0}}}\sum_{j\geq j_{0}}\sum_{k\in \Gamma_{j}} \widehat{\theta }_{N,j_{0}:\widetilde{k}; j:k,1} X_{p-1,q}(\psi_{j:k})\varphi_{j_{0}:\widetilde{k}}(\cdot)
\nonumber\\
&+&\sum_{j\geq j_{0}}\sum_{k\in \Gamma_{j}}
\sum_{\widetilde{k}\in \Upsilon_{j_{0}}} \widehat{\theta }_{N,j:k; j_{0}:\widetilde{k},1} X_{p-1,q}(\varphi_{j_{0}:\widetilde{k}})\psi_{j:k}(\cdot)
\nonumber\\
&+&\sum_{j\geq j_{0}}\sum_{k\in \Gamma_{j}}\sum_{l\geq j_{0}}\sum_{h\in \Gamma_{l}}
\widehat{\theta }_{N,j:k; l:h,1} X_{p-1,q}(\psi_{l:h})\psi_{j:k}(\cdot)
\nonumber\\
&+&
\sum_{\widetilde{k}\in \Upsilon_{j_{0}}}\sum_{\widetilde{l}\in \Upsilon_{j_{0}}}\widehat{\theta}_{N,j_{0}:\widetilde{k},\widetilde{l},2}X_{p,q-1}(\varphi_{j_{0}:\widetilde{l}})\varphi_{j_{0}:\widetilde{k}}(\cdot)\nonumber\\
&+&\sum_{\widetilde{k}\in \Upsilon_{j_{0}}}\sum_{j\geq j_{0}}\sum_{k\in \Gamma_{j}} \widehat{\theta }_{N,j_{0}:\widetilde{k}; j:k,2} X_{p,q-1}(\psi_{j:k})\varphi_{j_{0}:\widetilde{k}}(\cdot)
\nonumber\end{eqnarray}
\begin{eqnarray}
&+&\sum_{j\geq j_{0}}\sum_{k\in \Gamma_{j}}
\sum_{\widetilde{k}\in \Upsilon_{j_{0}}} \widehat{\theta }_{N,j:k; j_{0}:\widetilde{k},2} X_{p,q-1}(\varphi_{j_{0}:\widetilde{k}})\psi_{j:k}(\cdot)
\nonumber\\
&+&\sum_{j\geq j_{0}}\sum_{k\in \Gamma_{j}}\sum_{l\geq j_{0}}\sum_{h\in \Gamma_{l}}
\widehat{\theta }_{N,j:k; l:h,2} X_{p,q-1}(\psi_{l:h})\psi_{j:k}(\cdot)
\nonumber\\
&+&
\sum_{\widetilde{k}\in \Upsilon_{j_{0}}}\sum_{\widetilde{l}\in \Upsilon_{j_{0}}}\widehat{\theta}_{N,j_{0}:\widetilde{k},\widetilde{l},3}X_{p-1,q-1}(\varphi_{j_{0}:\widetilde{l}})\varphi_{j_{0}:\widetilde{k}}(\cdot)\nonumber\\
&+&\sum_{\widetilde{k}\in \Upsilon_{j_{0}}}\sum_{j\geq j_{0}}\sum_{k\in \Gamma_{j}} \widehat{\theta }_{N,j_{0}:\widetilde{k}; j:k,3} X_{p-1,q-1}(\psi_{j:k})\varphi_{j_{0}:\widetilde{k}}(\cdot)
\nonumber\\
&+&\sum_{j\geq j_{0}}\sum_{k\in \Gamma_{j}}
\sum_{\widetilde{k}\in \Upsilon_{j_{0}}} \widehat{\theta }_{N,j:k; j_{0}:\widetilde{k},3} X_{p-1,q-1}(\varphi_{j_{0}:\widetilde{k}})\psi_{j:k}(\cdot)
\nonumber\\
&+&\sum_{j\geq j_{0}}\sum_{k\in \Gamma_{j}}\sum_{l\geq j_{0}}\sum_{h\in \Gamma_{l}}
\widehat{\theta }_{N,j:k; l:h,3} X_{p-1,q-1}(\psi_{l:h})\psi_{j:k}(\cdot).
\label{SARHpp}
\end{eqnarray}

\noindent In practice, we select a finite number $D$  of scales, according to the  adopted discretisation step size in time,  in the preprocessing procedure involved in the  construction of our curve data set. Note that, as commented before, for a given scale  $j\in \{1,\dots,D\},$  the corresponding number of  nodes $k(j)$ is finite.

  \section{Simulation study}
\label{s5}

To illustrate the asymptotic properties of the formulated  multiscale estimators, an increasing  spatial curve  sample  size  sequence $N=100, 900, 2500, 4900,$ $8100,$ $12100, 16900, 22500,$ has been considered.
The   Haar  wavelet system has been selected for our implementation (see, e.g., \cite{Daubechies92}). In particular,  let $L_{3}=-L_{1}L_{2},$ and, as before,  $\mathcal{T}=[0,1].$  Operators $L_{1}$ and $L_{2}$ are defined in terms of the common eigenvectors
\begin{eqnarray}
\phi_{p} \left( t\right) &=&  \displaystyle \sin \left(
\pi p t\right),\quad t\in (0,1),\quad p\geq 1,\label{eigvalLap}
\end{eqnarray}
\noindent with $\phi_{p}(0)=\phi_{p}(1)=0.$  The corresponding  systems of eigenvalues  $\{\lambda_{pl},\ p\geq 1,\ l=1,2\}$ satisfy  conditions  (i)--(iii) in Proposition 3  of \cite{Ruiz11a}, for the existence of a unique stationary solution to the SAR$\mathcal{H}(1)$ equation.
Note that   the conditions assumed in Theorem \ref{th1}, and \textbf{Condition C2} also hold, under this scenario.
In the orthogonal decomposition
(\ref{kle}), we have considered the truncation parameter  $k_{N}=k_{22500}=[\ln\left(N\right)]^{-}=[\ln\left(22500\right)]^{-}=10,$ where we have selected the most unfavorable case (i.e., the largest truncation order corresponding to the functional sample size $N=22500$).
Table \ref{t1} displays
the  $k_{N}=10$ eigenvalues $\left\{\lambda_{pi},\ p=1,\ldots,10\right\}$  of operators $L_{i},$ $i=1,2.$

\begin{table}[t!] %***
\caption{Eigenvalues $\lambda_{p1}$, $\lambda_{p2},$
$p=1,\ldots,10$.}
\label{t1}\par
\vskip .2cm
\centerline{\tabcolsep=3truept \begin{tabular}{|lcc|}
\hline
&$\lambda_{p1}^{0}$&$\lambda_{p2}^{0}$\\\hline
$p=1$&0.300 &0.500 \\
$p=2$&0.270 &0.470 \\
$p=3$&0.230 &0.430 \\
$p=4$&0.200 &0.400 \\
$p=5$&0.170 &0.370 \\
$p=6$&0.130 &0.330 \\
$p=7$&0.100 &0.300 \\
$p=8$&0.030 &0.230 \\
$p=9$&0.010 &0.200 \\
$p=10$&0.005 &0.150.\\\hline
\end{tabular}}
\end{table}

  The  large--scale   sample  properties (the draft)  of  $\mathbf{X}$ are obtained by its projection onto  the space $V_{0}\subset L^{2}([0,1]),$
 generated by the  scaling functions $\left\{\varphi_{j_{0}:\widetilde{k}},\ \widetilde{k}\in \Upsilon_{j_{0}}\right\}$ at the coarser scale $j_{0}.$
  The sample local variability (\emph{details}) of $\mathbf{X}$ is reproduced at different resolution levels, by its projection onto the subspaces $W_{j}\subset L^{2}([0,1]),$ $j=j_{0},\dots,D,$ generated by the wavelet bases
$\left\{\psi_{j:k},\ k\in \Gamma_{j}\right\},$ $j=j_{0},\dots,D,$ respectively.  Figures \ref{D10}--\ref{D7} show the displayed  temporal variability at different scales of the generated curve data, over some of the nodes of a $30\times 30$ spatial regular grid  ($N=900$).

 In the estimation of  the multiscale parameters
(\ref{ps}) and (\ref{ps2}),   equations (\ref{fdft2c})--(\ref{mce}), and their non--diagonal counterparts  are respectively computed.
Function    $\eta $ is constant over the nodes of the $D$ scales considered in the two-dimensional wavelet transform of operators $L_{i},$ $i=1,2.$ In particular, the choice $\eta (\boldsymbol{\varpi })=|\varpi_{1}|^{2}|\varpi_{2}|^{2}$ has been made, for every  $\boldsymbol{\varpi }= (\varpi_{1},\varpi_{2})\in [0,2\pi)\times [0,2\pi).$  The average by scale of the  empirical mean quadratic  errors, associated with the multiscale parameter  estimators of (\ref{ps}) and (\ref{ps2}), for
$j_{0}=6,$ $D=9,$  based on $100$ generations of the functional samples  of size
 $$N=100, 900, 2500, 4900, 8100, 12100, 16900, 22500,$$
 \noindent are displayed  in Tables \ref{L1}--\ref{L2}.

\begin{figure}[h!]
\includegraphics[width=12cm,height=10cm]{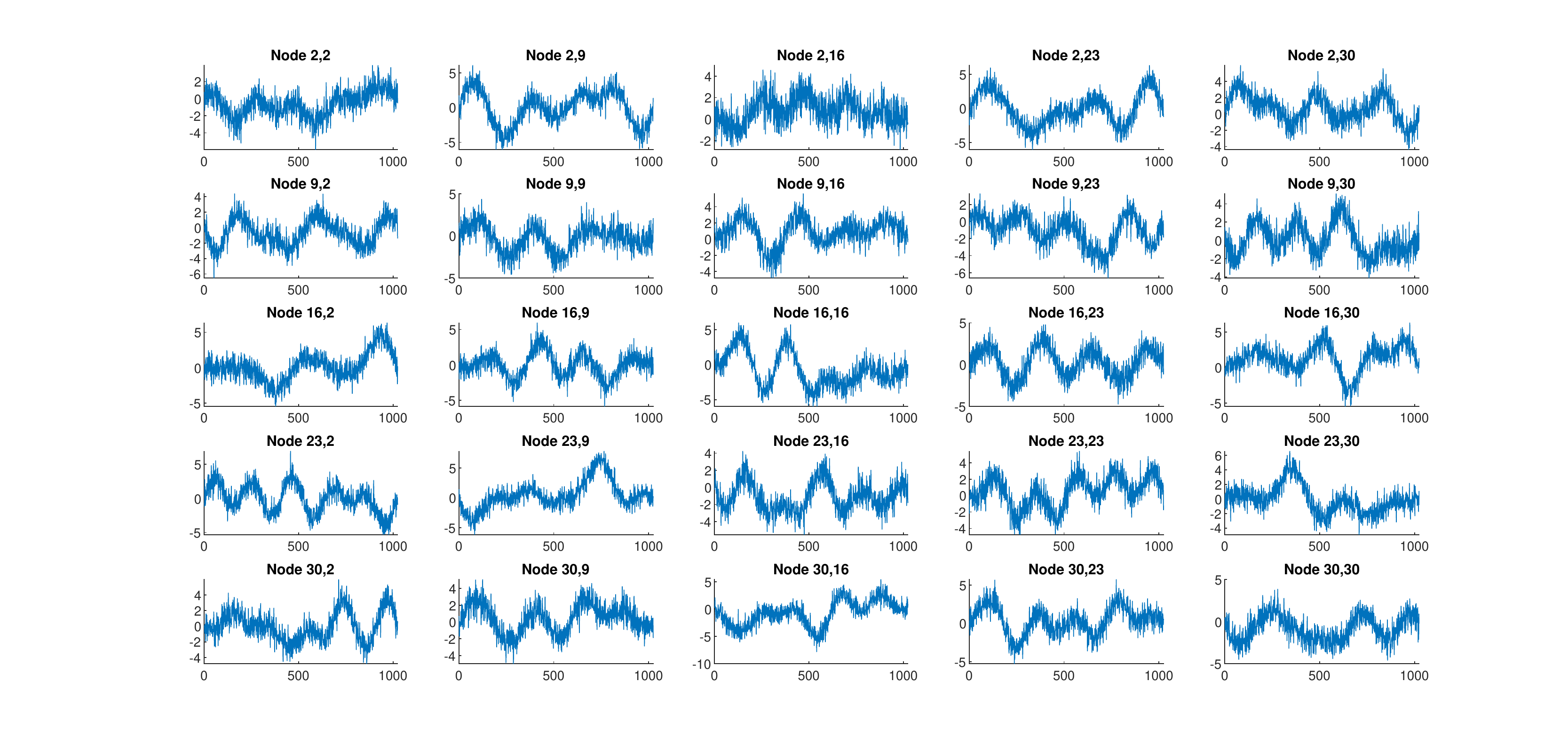}
\caption{Scale 10, $N=900.$ Curve data  over some nodes of a $30\times 30$ spatial regular  grid}\label{D10}
 \end{figure}

\begin{figure}[h!]
\includegraphics[width=12cm,height=10cm]{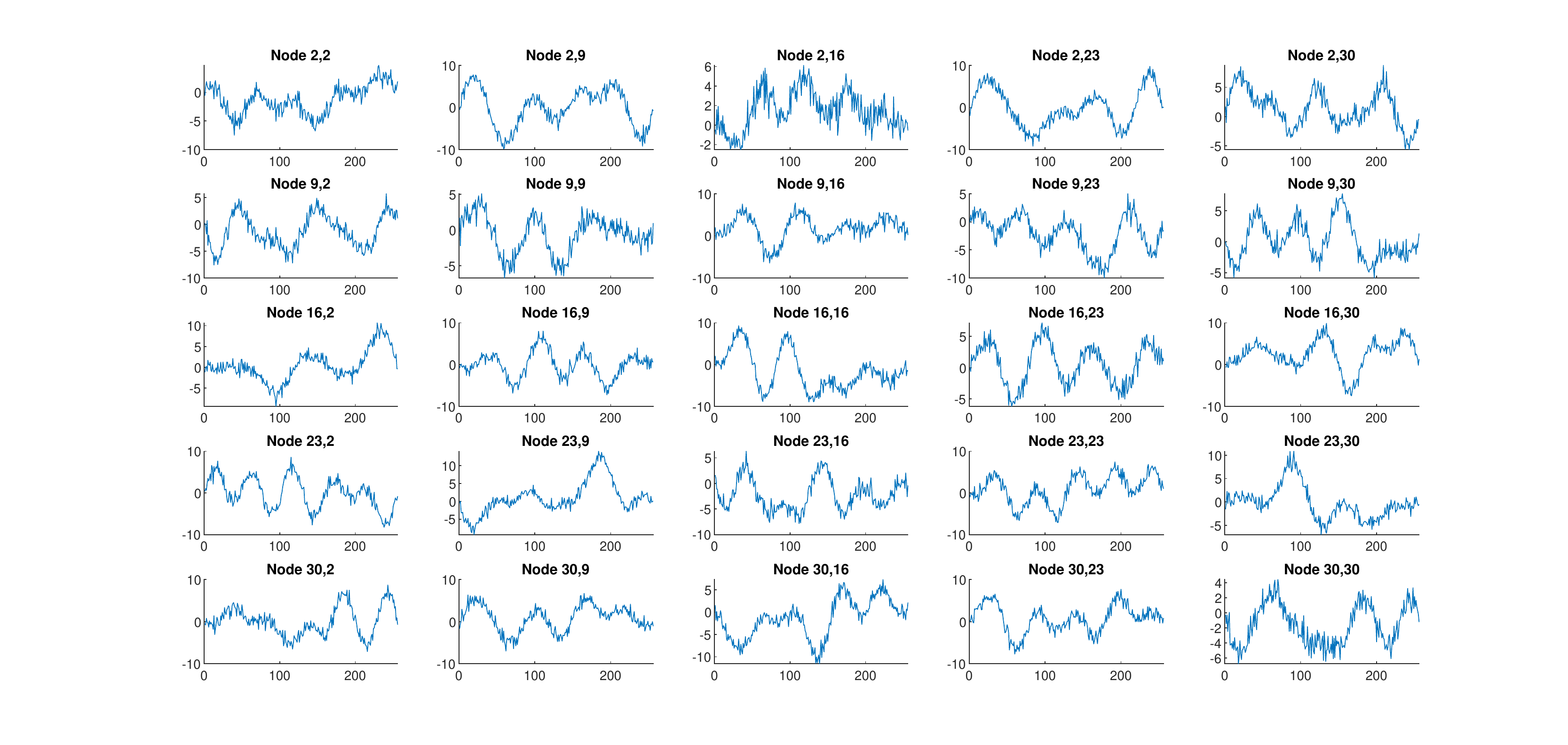}
\caption{Scale 9, $N=900.$  Curve data  over some nodes of a $30\times 30$ spatial regular  grid}\label{D9}
 \end{figure}

 \begin{figure}[h!]
\includegraphics[width=12cm,height=10cm]{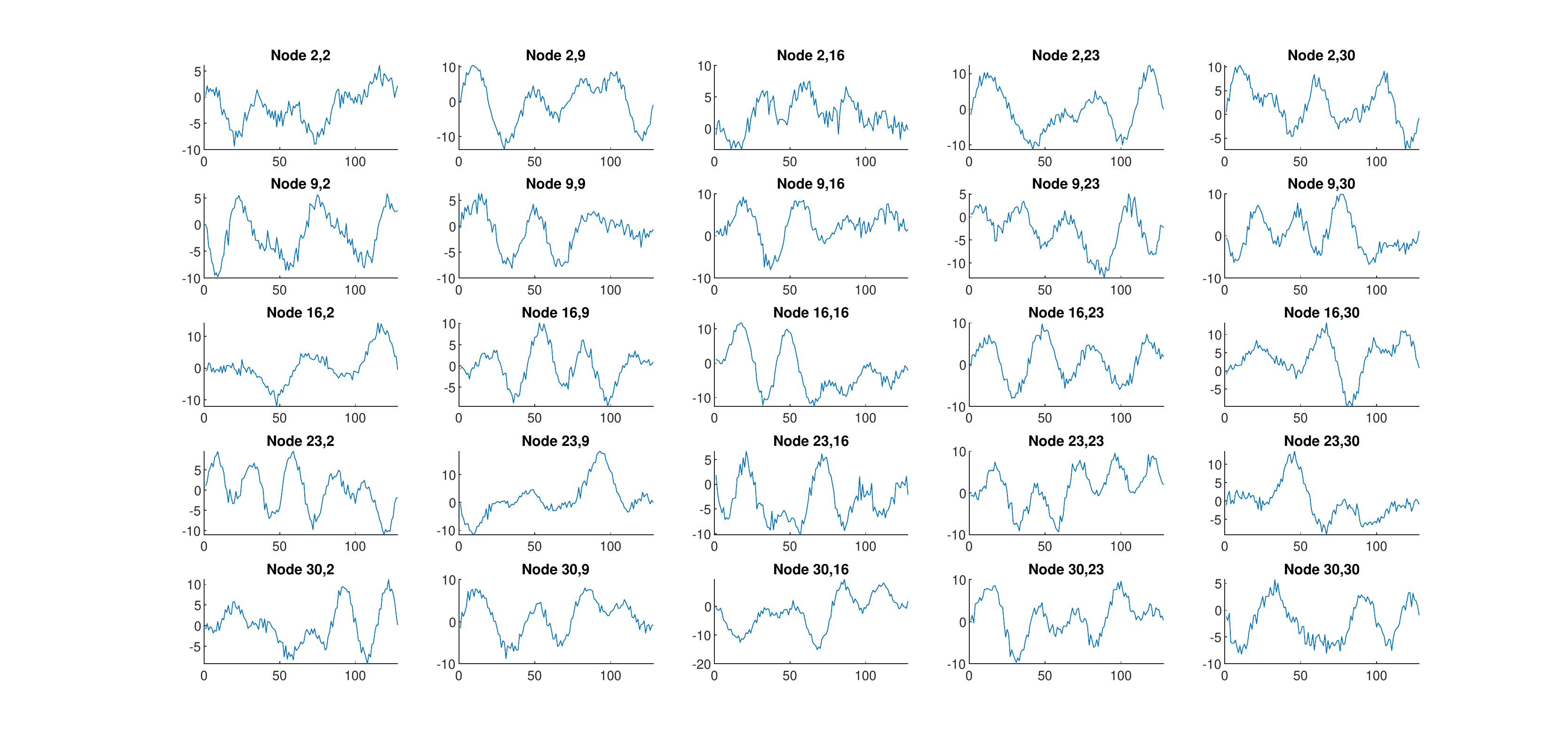}
\caption{Scale 8, $N=900.$  Curve data  over some nodes of a $30\times 30$ spatial regular  grid}\label{D8}
 \end{figure}

 \begin{figure}[h!]
\includegraphics[width=12cm,height=10cm]{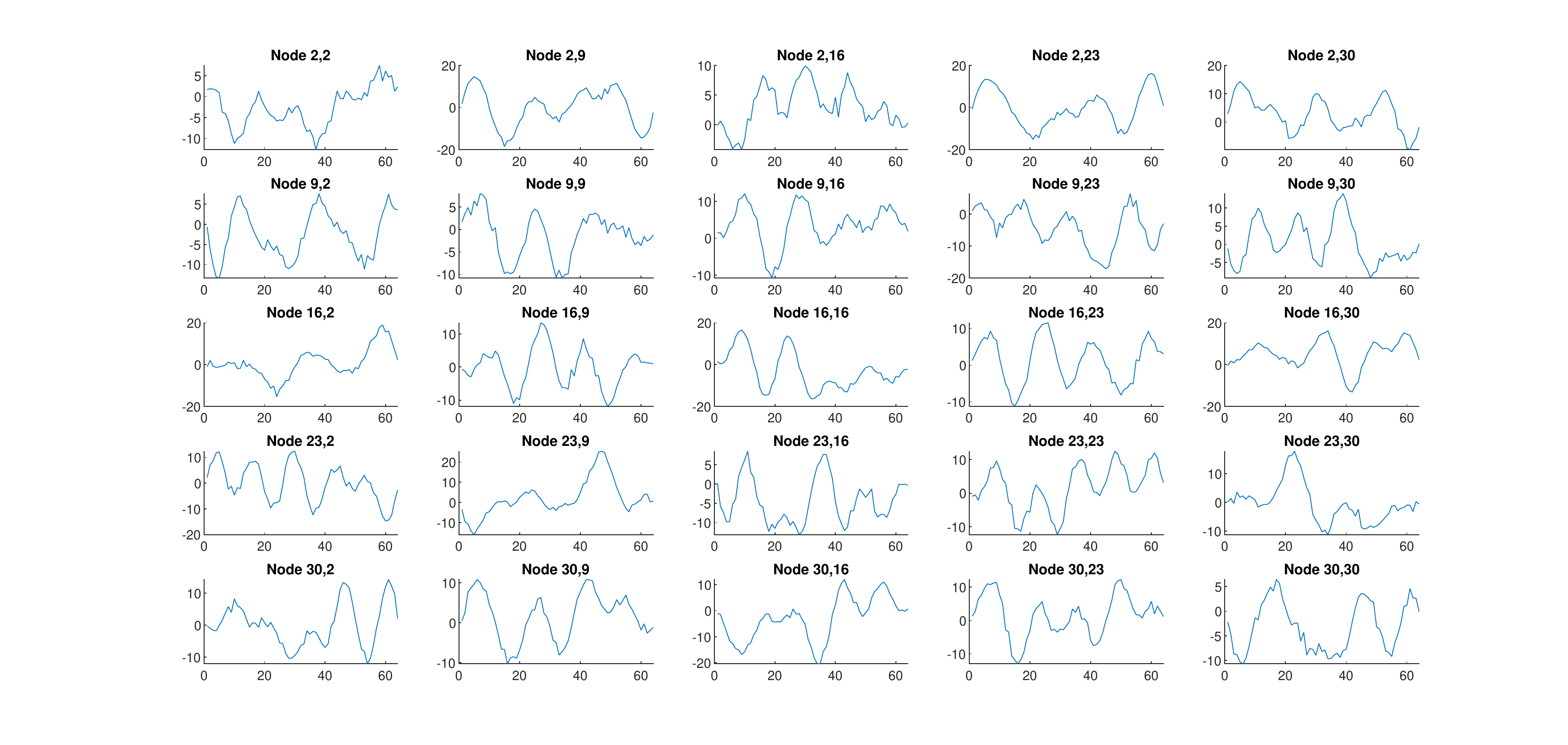}
\caption{Scale 7, $N=900.$  Curve data   over some nodes of a $30\times 30$ spatial regular  grid}\label{D7}
 \end{figure}

 \begin{table}[hptb]
\caption{Empirical mean quadratic errors by scale of the parameter estimators for $L_{1},$  based on 100 generations of the functional samples of size $N=100, 900, 2500, 4900, 8100, 12100, 16900, 22500$. Scales  $j=6, 7, 8, 9,$ and truncation parameter $k_{N}=10$ are considered}
\begin{center}
\begin{tabular}{ccccc}
\hline
$N$& Scale 6& Scale 7&Scale 8&Scale 9\\\hline
100&6.828e-02 & 4.158e-03  &2.606e-04&1.630e-05\\
900&9.171e-03 & 5.678e-04  &3.558e-05&2.226e-06\\
2500&3.540e-03 & 2.174e-04  &1.362e-05&8.520e-07\\
4900&1.930e-03 & 1.309e-04  &8.205e-06&5.132e-07\\
8100&1.151e-03 & 7.640e-05  &4.789e-06&2.995e-07\\
12100&9.036e-04 & 5.234e-05  &3.280e-06&2.052e-07\\
16900&6.120e-04 & 3.712e-05  &2.326e-06&1.455e-07\\
22500&4.663e-04 & 3.066e-05  &1.921e-06&1.202e-07\\
\hline
\end{tabular}
\end{center}\label{L1}
\end{table}

\begin{table}[hptb]
\caption{Empirical mean quadratic errors by scale of the parameter estimators,  for $L_{2},$  based on 100 generations of the functional samples of size $N=100, 900, 2500, 4900, 8100, 12100, 16900, 22500$. Scales  $j=6, 7, 8, 9,$ and truncation parameter $k_{N}=10$ are considered}
\begin{center}
\begin{tabular}{ccccc}
\hline
$N$&Scale 6&Scale 7&Scale 8&Scale 9\\\hline
100&7.080e-02 & 4.476e-03  &2.806e-04&1.755e-05\\
900&9.360e-03 & 5.916e-04  &3.708e-05&2.319e-06\\
2500&3.642e-03 & 2.302e-04  &1.443e-05&9.025e-07\\
4900&1.787e-03 & 1.129e-04  &7.078e-06&4.427e-07\\
8100&1.093e-03 & 6.909e-05  &4.330e-06&2.708e-07\\
12100&9.795e-04 & 6.188e-05  &3.878e-06&2.425e-07\\
16900&6.367e-04 & 4.022e-05  &2.521e-06&1.577e-07\\
22500&4.472e-04 & 2.825e-05  &1.771e-06&1.107e-07\\
\hline
\end{tabular}\label{L2}
\end{center}
\end{table}

  Figure \ref{fig.mse} shows the empirical mean quadratic errors, associated with   the estimates
  $\left\{ \widehat{\lambda}_{N,p,1},\widehat{\lambda}_{N,p,2},\ p=1,\dots,k_{N}\right\}$ of the pure point spectra of $L_{1}$ and $L_{2},$ computed   from the  empirical  two--dimensional wavelet reconstructions  of $L_{1}$ and $L_{2}$ at scale $D=10,$ based on $100$ realisations of the multiscale parameter estimators. The boxplots of their sample values can be found  in Figure \ref{f1}. Finally, the true operators $L_{1}$ and $L_{2},$ and  their functional estimates,  at scales $j=7,8,9, 10,$  are displayed in
Figure \ref{OPL1}. The contour plots in Figure \ref{figoest}  provide the multiscale (scales 7--10) description of the original and estimated spatial log--intensity field $\mathbf{X},$ at  $t=1/2.$ At the same time, Figure \ref{figoest2} displays the  smoothed   original and estimated log--intensity values at different scales or resolution levels ($j=7,8,9,10$). One  can  observe the effect of the Functional Data Analysis (FDA)
preprocessing procedure, and the effect of  increasing the number of spatial nodes, when comparing the spatial patterns observed at different temporal  scales  in Figures \ref{figoest2} and \ref{figoest}.

\begin{figure}[h!]
\begin{center}
\includegraphics[width=5cm,height=4.5cm]{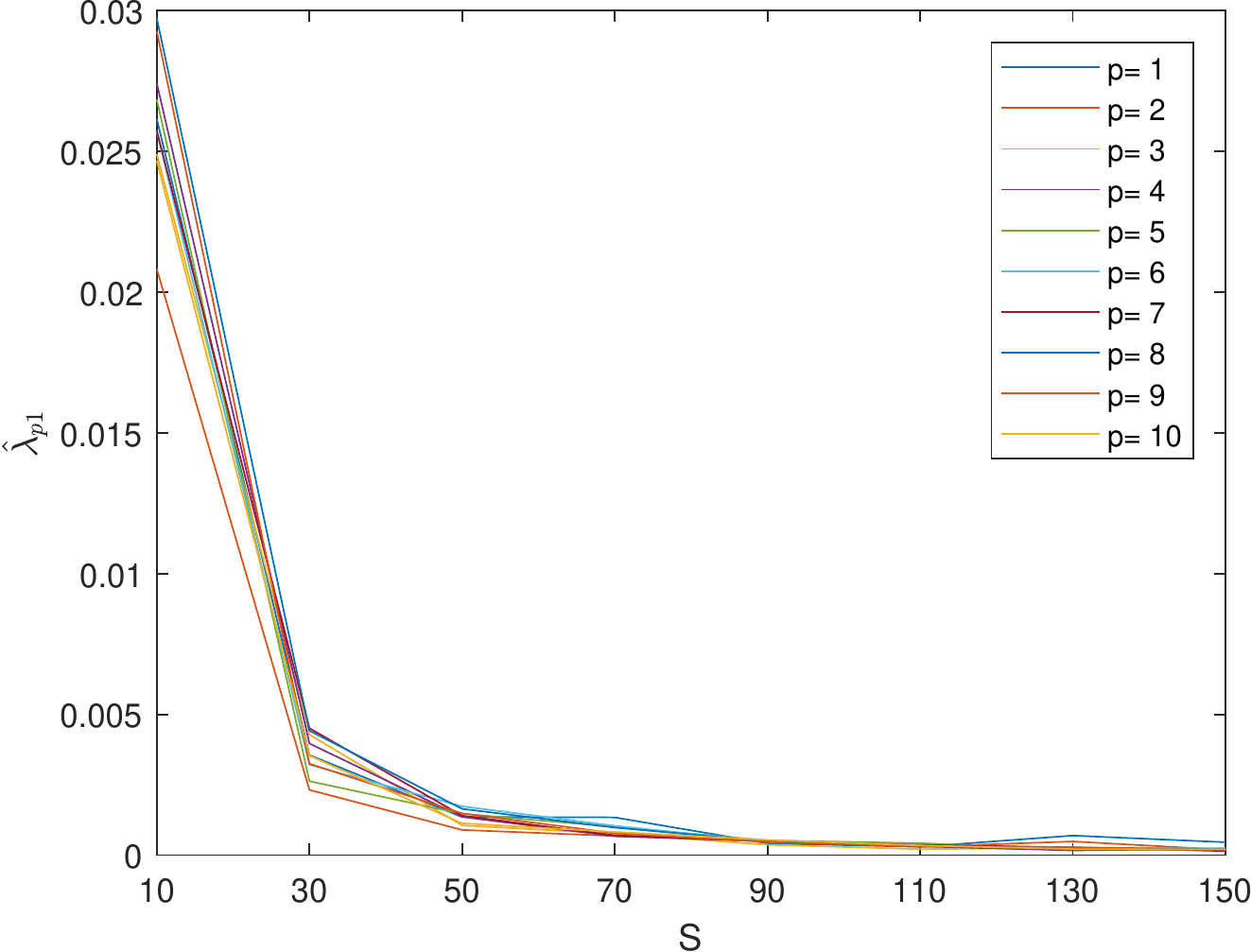}
\includegraphics[width=5cm,height=4.5cm]{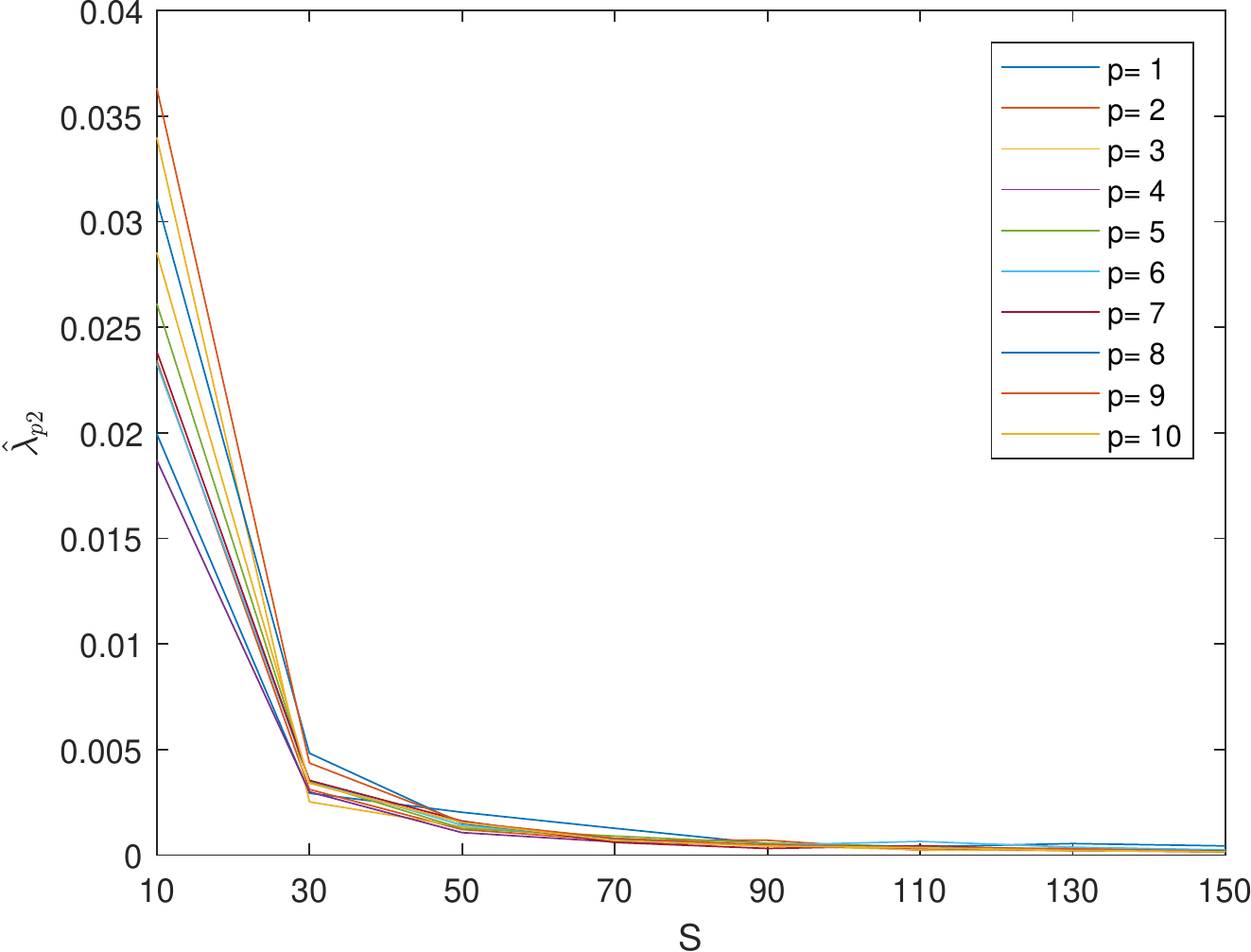}
\end{center}
\caption{Empirical mean quadratic errors (E.M.S.Es), associated with  $\left\{ \widehat{\lambda}_{N,p,1},\widehat{\lambda}_{N,p,2},\ p=1,\dots,k_{N}\right\},$  $N=100, 900, 2500, 4900, 8100, 12100, 16900, 22500$}\label{fig.mse}
\end{figure}

\begin{figure}[h!]
\includegraphics[width=3.9cm,height=3.9cm]{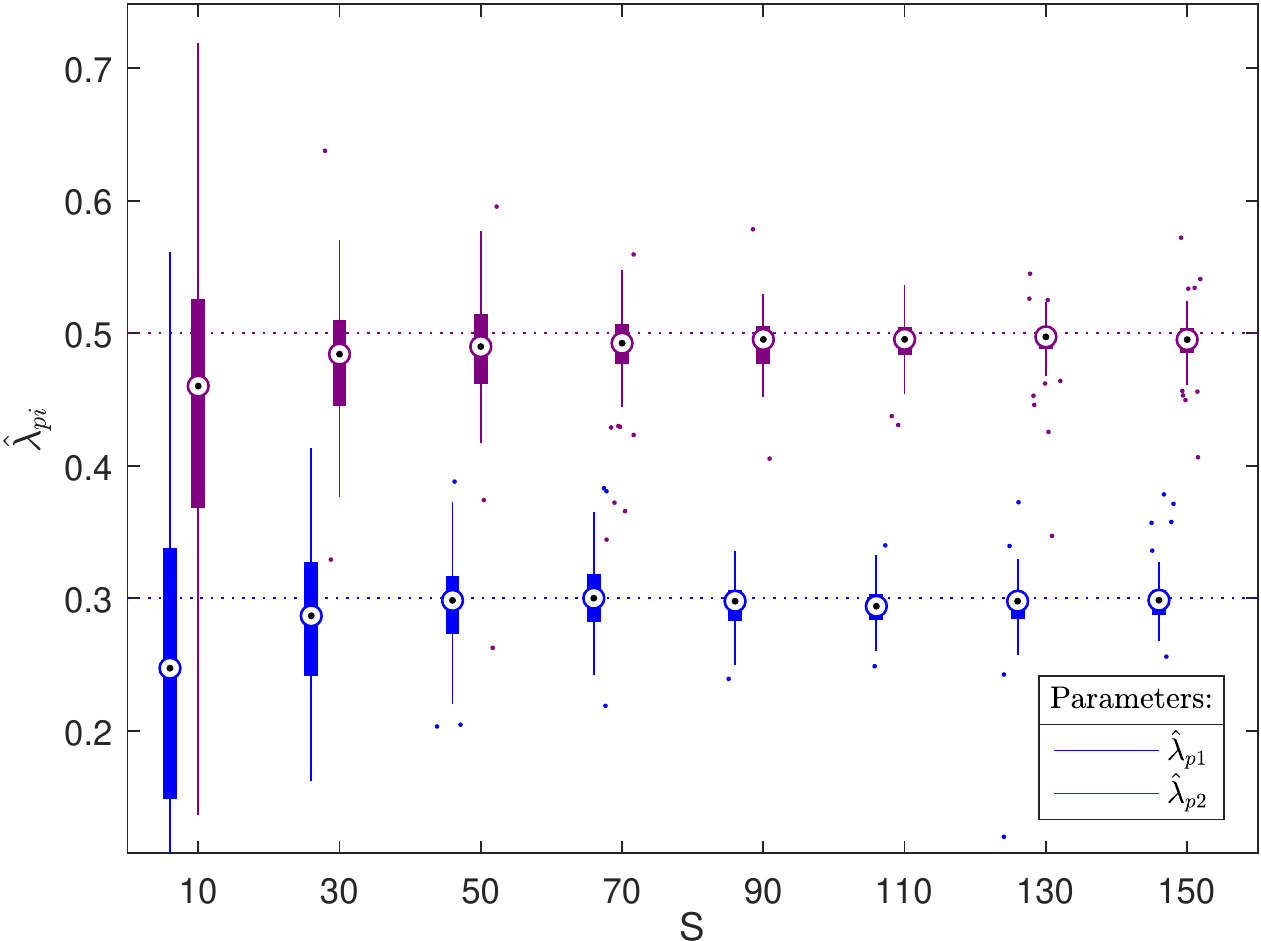}
\includegraphics[width=3.9cm,height=3.9cm]{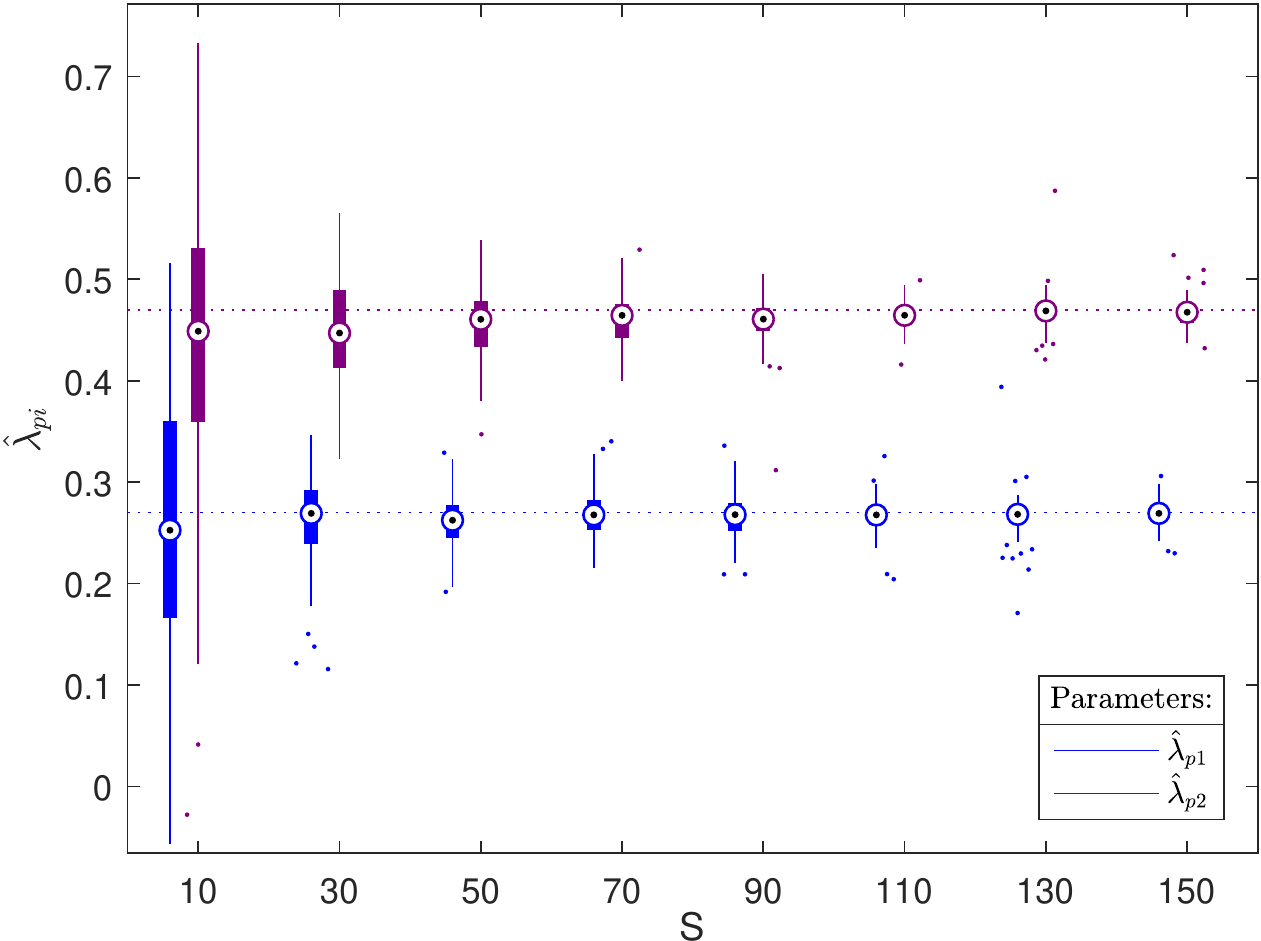}
\includegraphics[width=3.9cm,height=3.9cm]{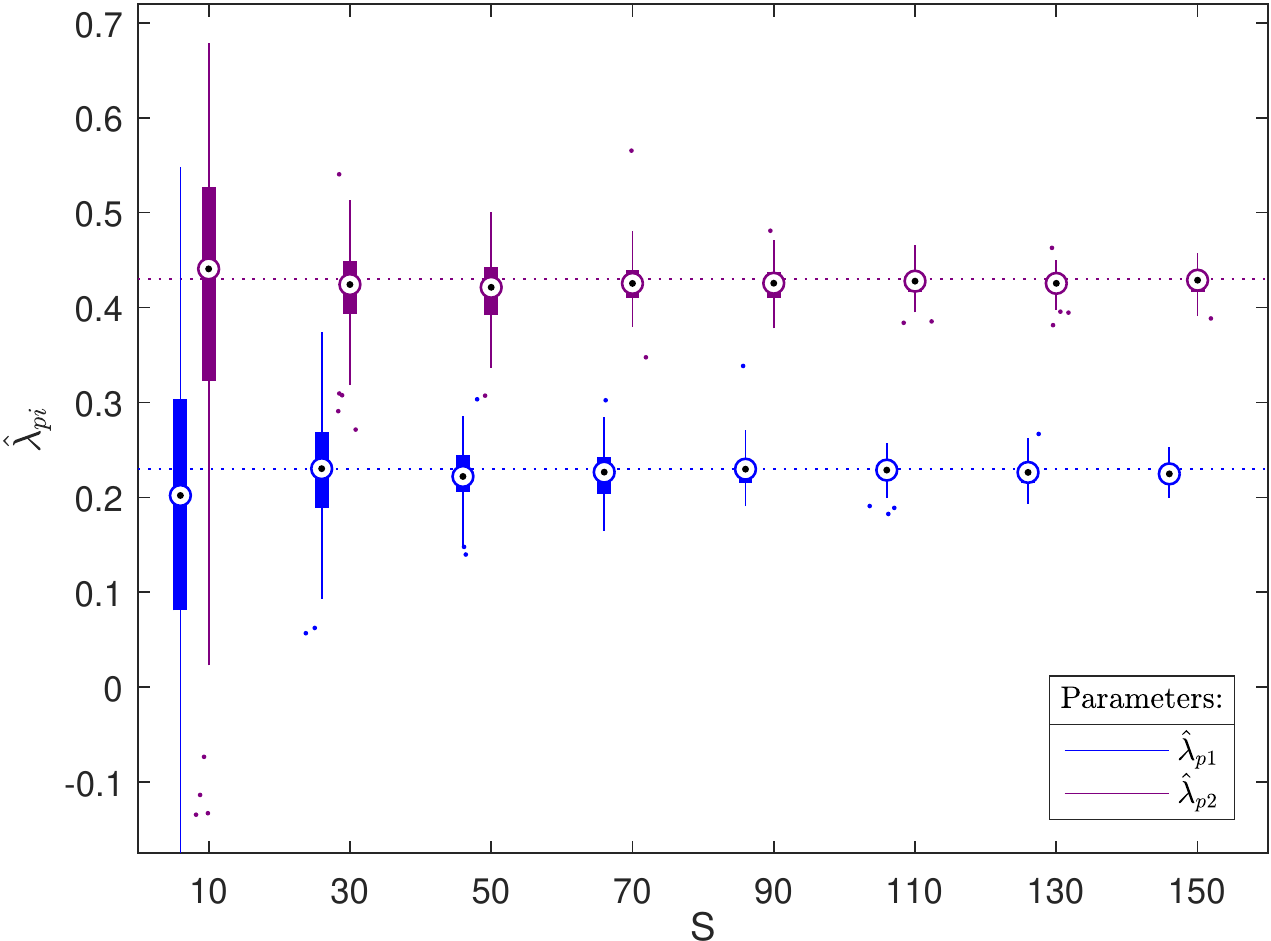}
\includegraphics[width=3.9cm,height=3.9cm]{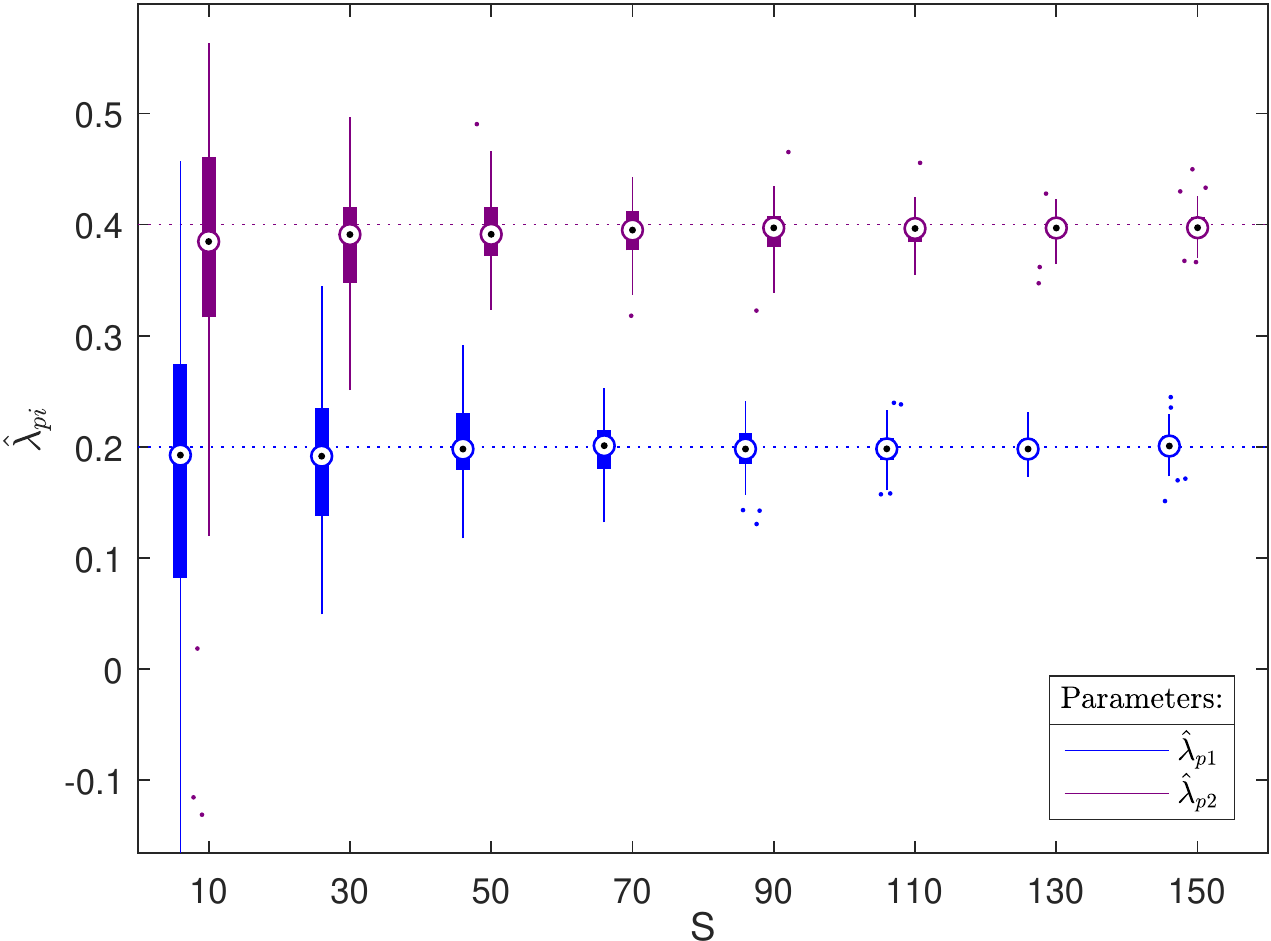}
\includegraphics[width=3.9cm,height=3.9cm]{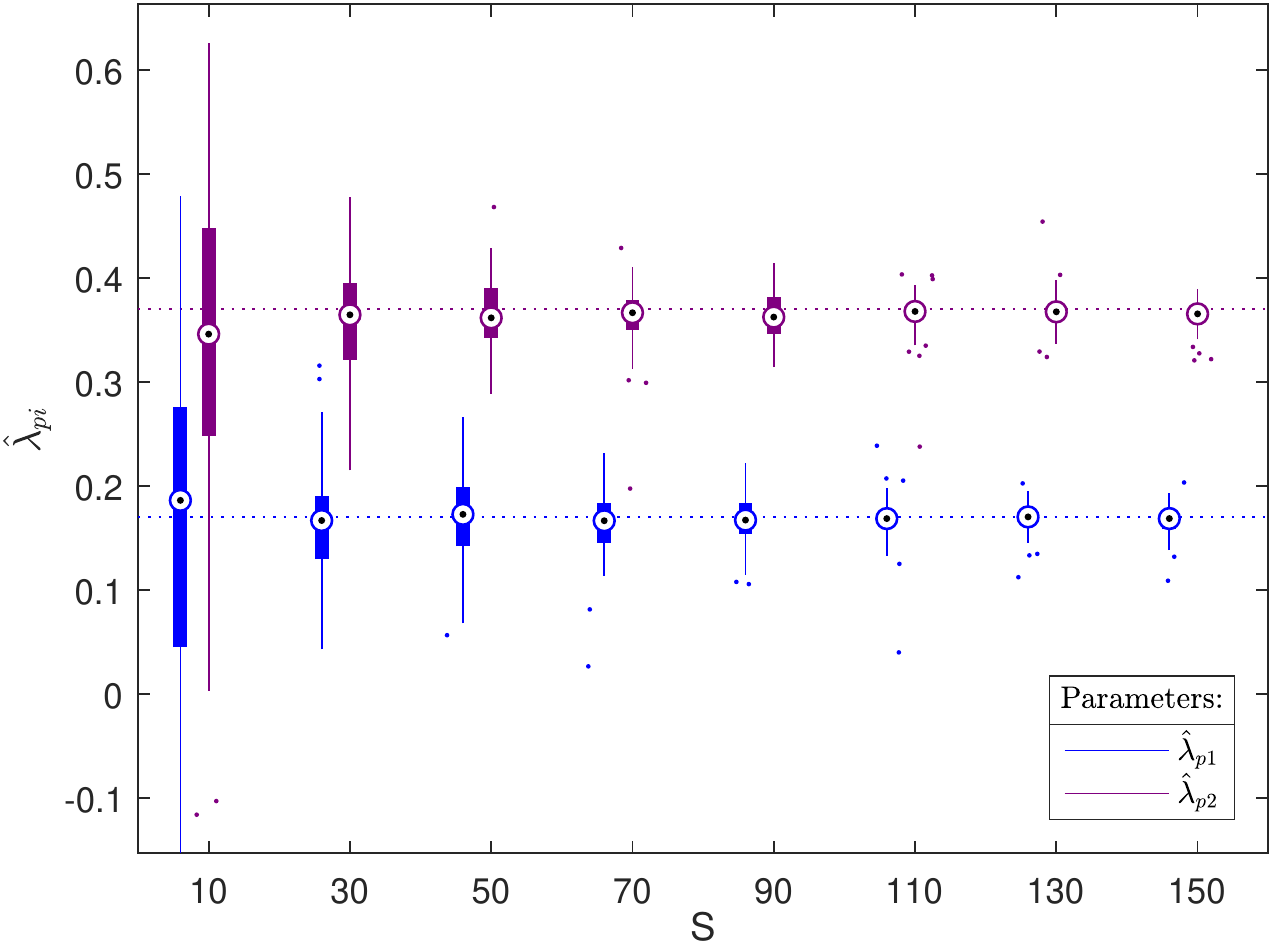}
\includegraphics[width=3.9cm,height=3.9cm]{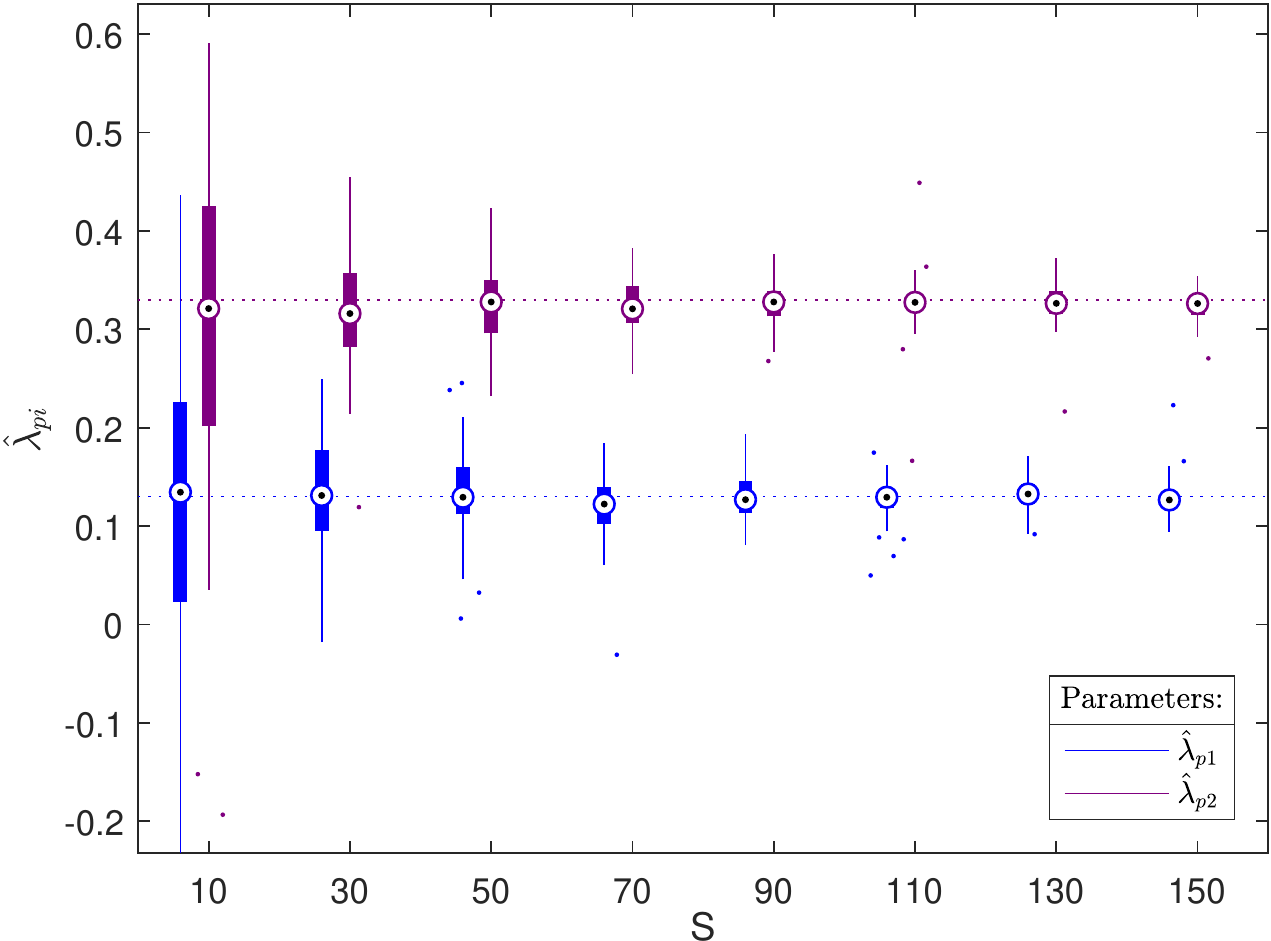}
\includegraphics[width=3.9cm,height=3.9cm]{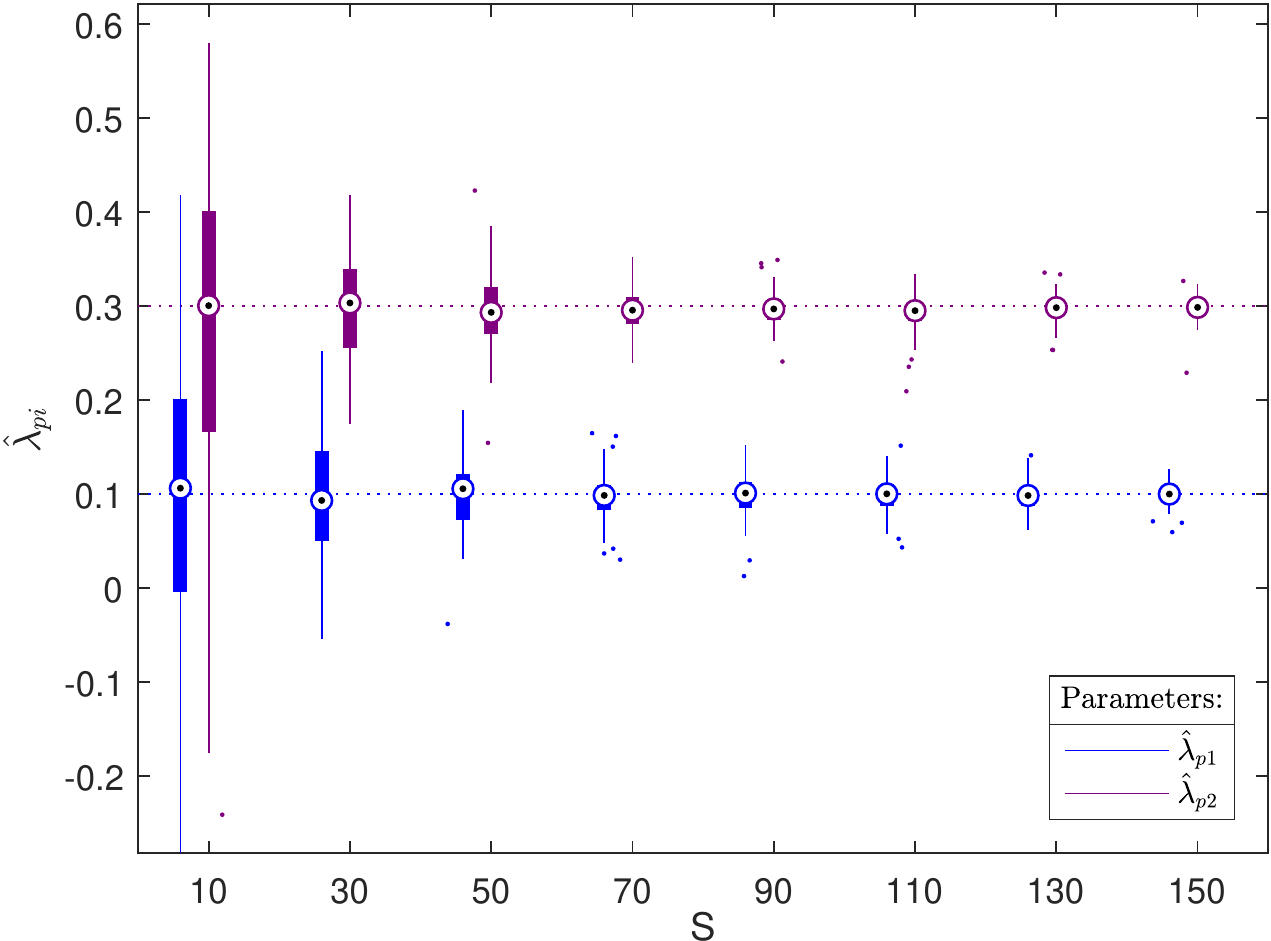}
\includegraphics[width=3.9cm,height=3.9cm]{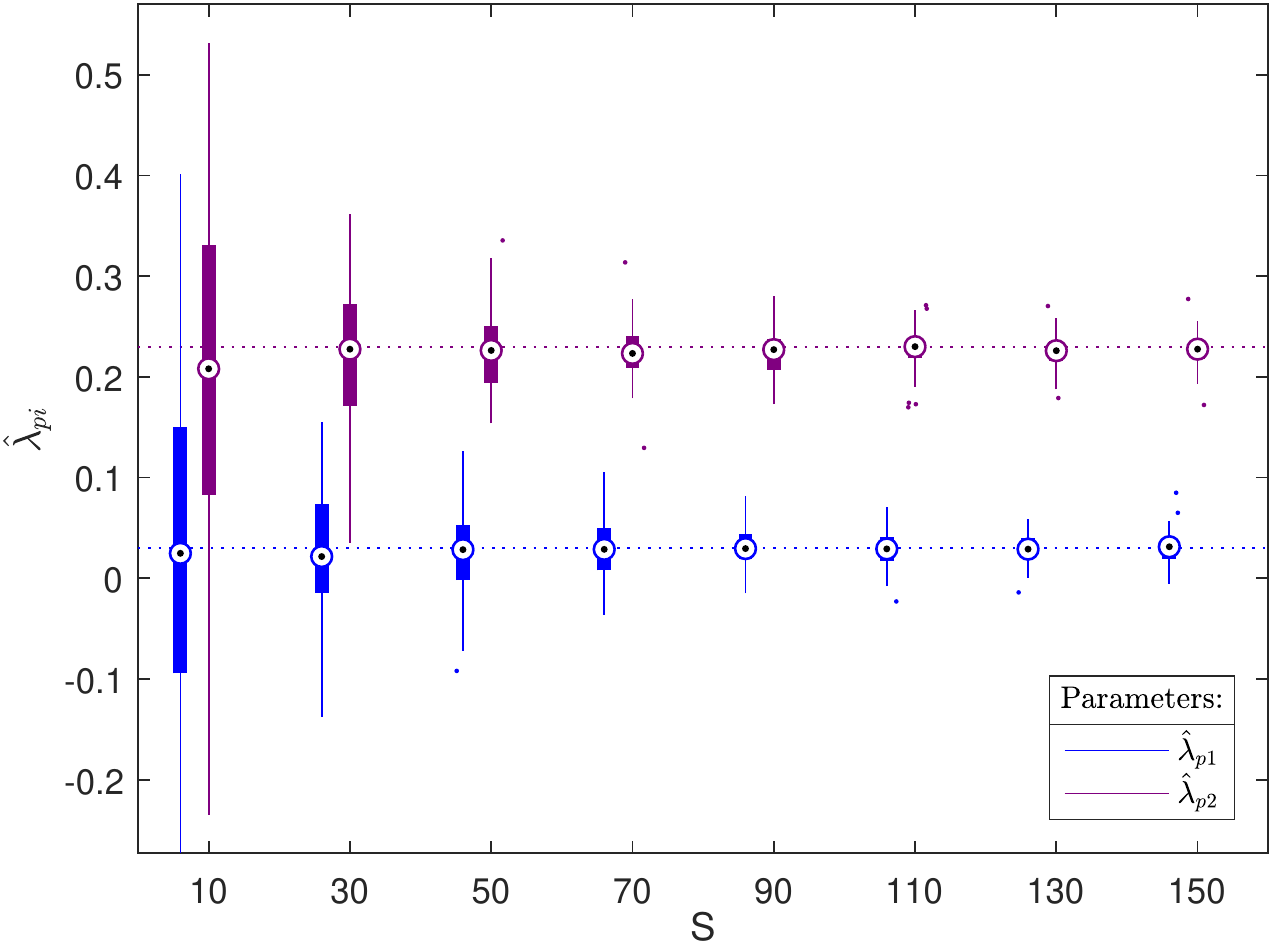}
\includegraphics[width=3.9cm,height=3.9cm]{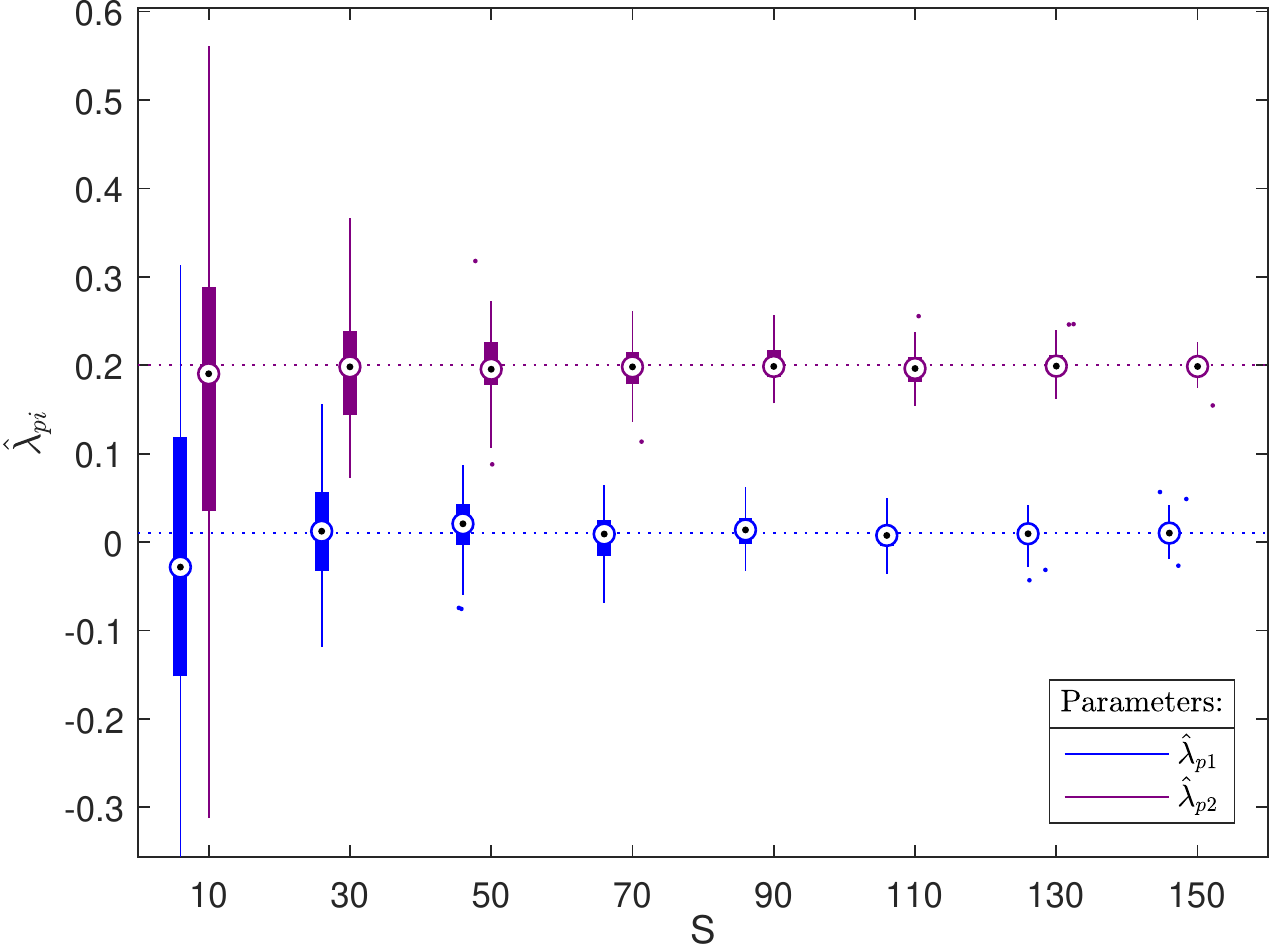}
\includegraphics[width=3.9cm,height=3.9cm]{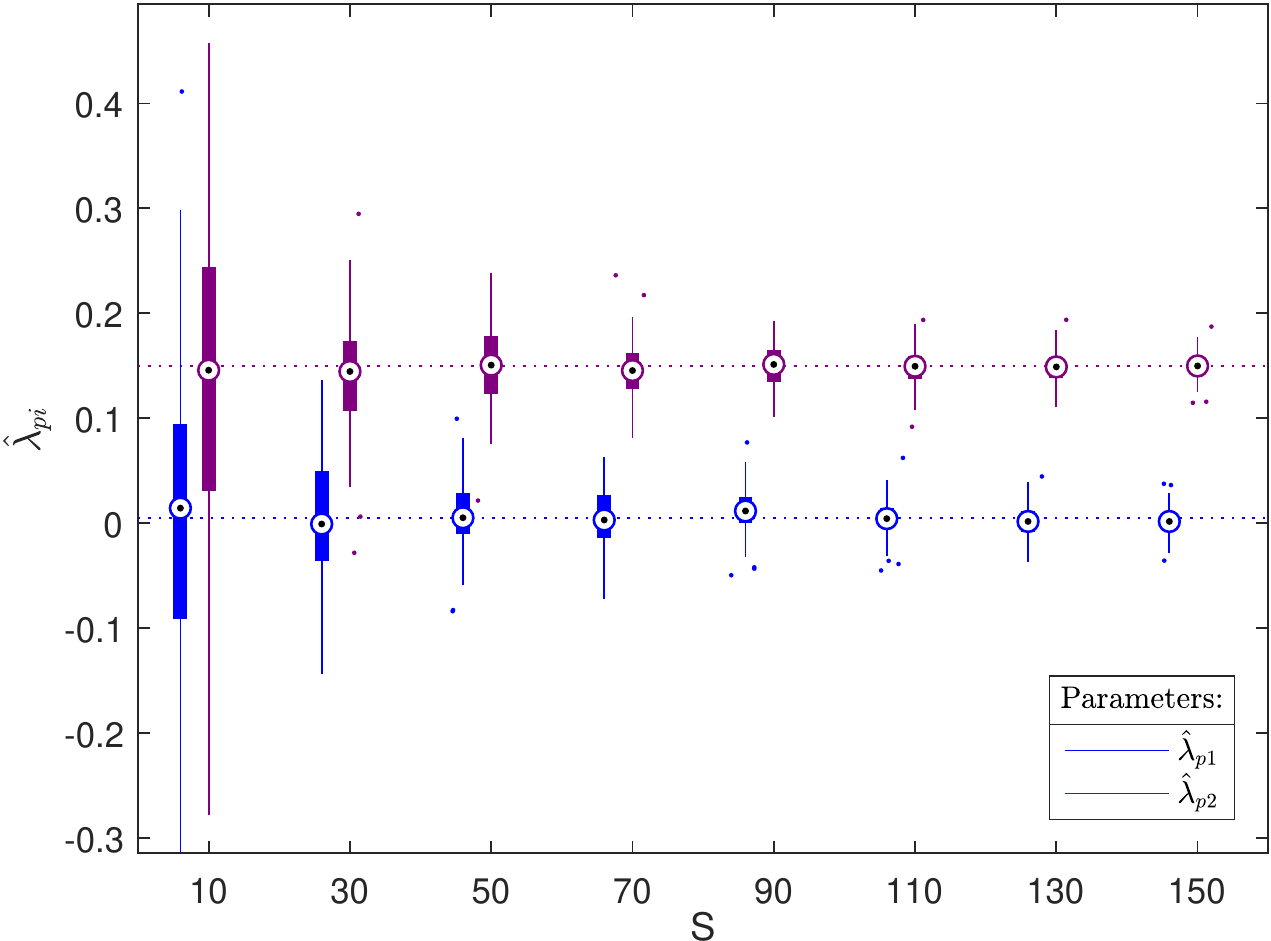}
        \caption{Boxplots of the sample values of $\widehat{\lambda}_{N,p,i}$,  $p=1,\ldots,10,$ $i=1,2$ (from  left to  right, and from  top to   bottom), based on $100$ generations of the functional samples of size  $N=100, 900, 2500, 4900,$ $8100,$ $12100, 16900, 22500$. The  true parameter value is reflected in dotted line}\label{f1}
\end{figure}

 \begin{figure}[h!]
\includegraphics[width=12cm,height=10cm]{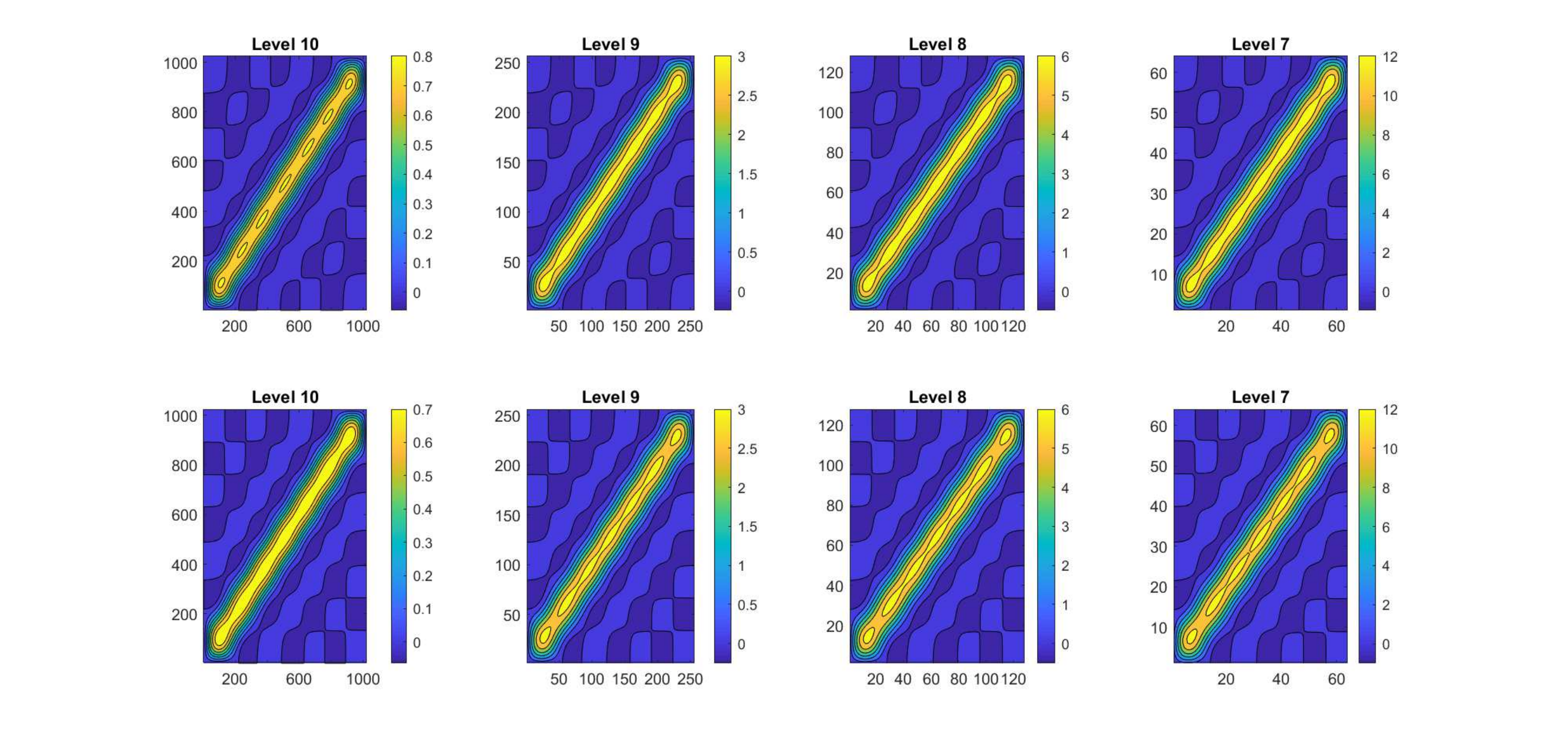}
\includegraphics[width=12cm,height=10cm]{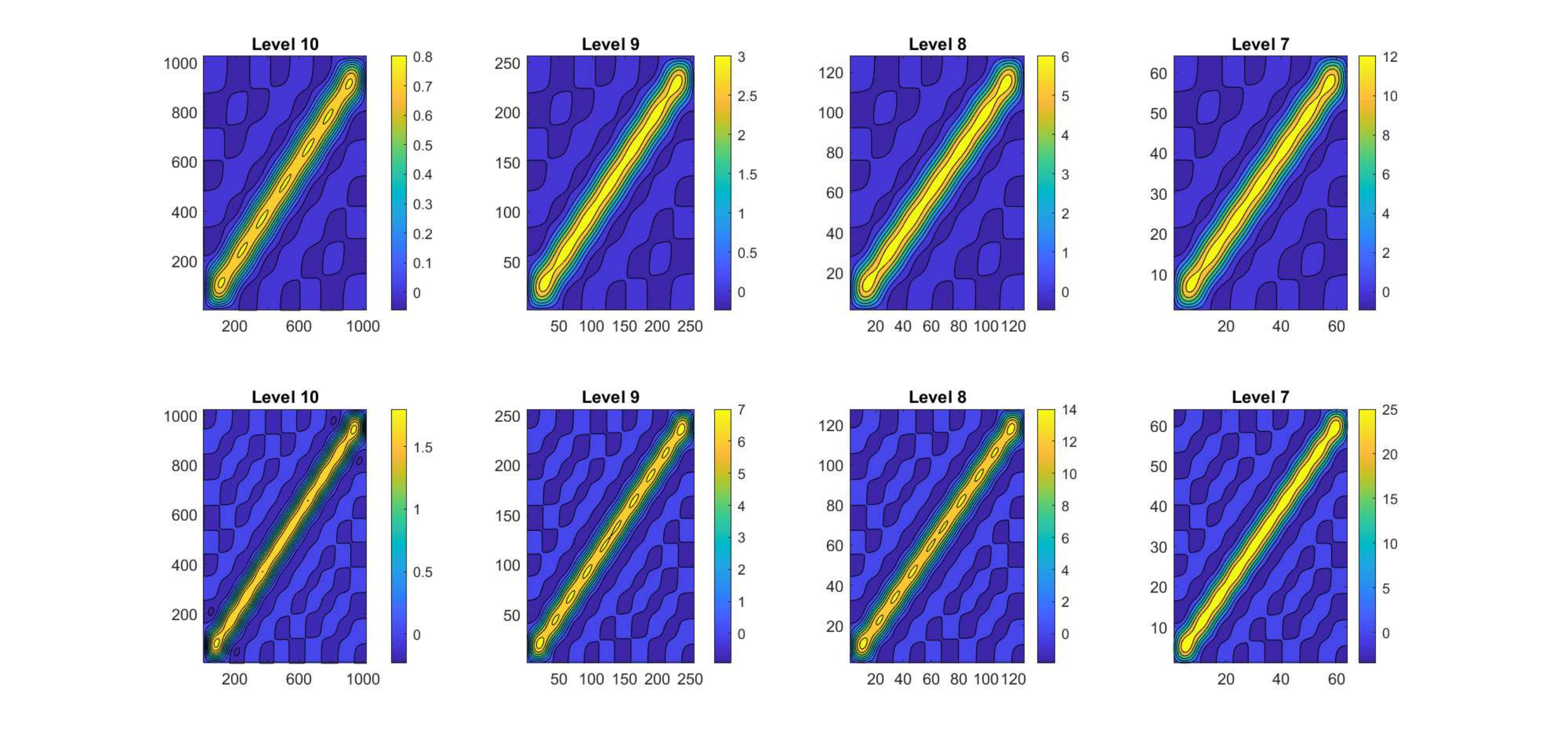}
\caption{True $L_{1}$ at the top row, and its multiscale  estimate at the second row, for scales $j=7, 8, 9, 10$ (from  right to  left).
True  $L_{2}$ at the third row, and its multiscale  estimation at the bottom row, for scales $j=7, 8, 9, 10$  (from right to  left), over a $30\times 30$ spatial regular grid}\label{OPL1}
 \end{figure}

\begin{center}
 \begin{figure}[!h]
 \centering
\includegraphics[width=2.5cm,height=2.5cm]{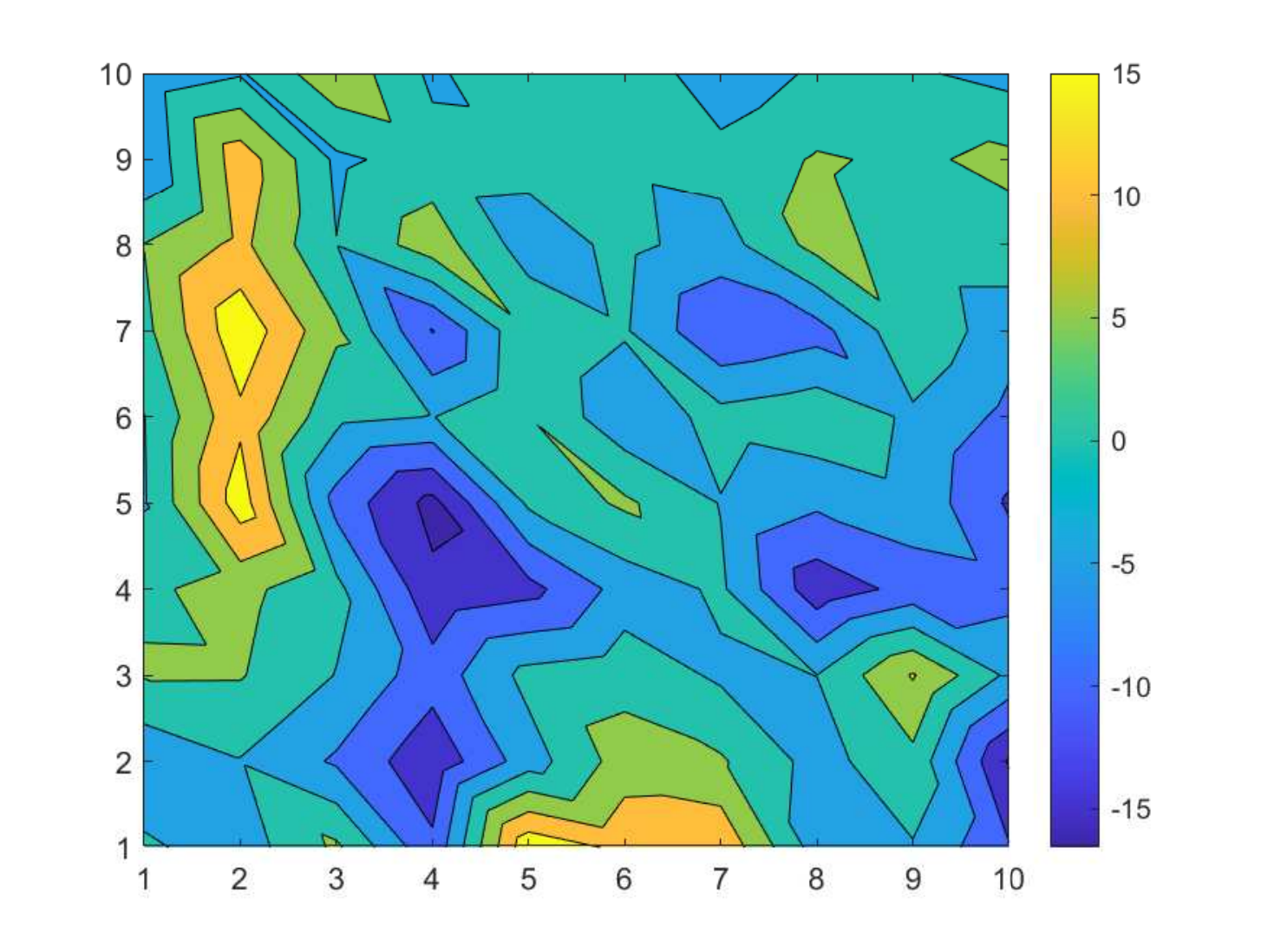}
\includegraphics[width=2.5cm,height=2.5cm]{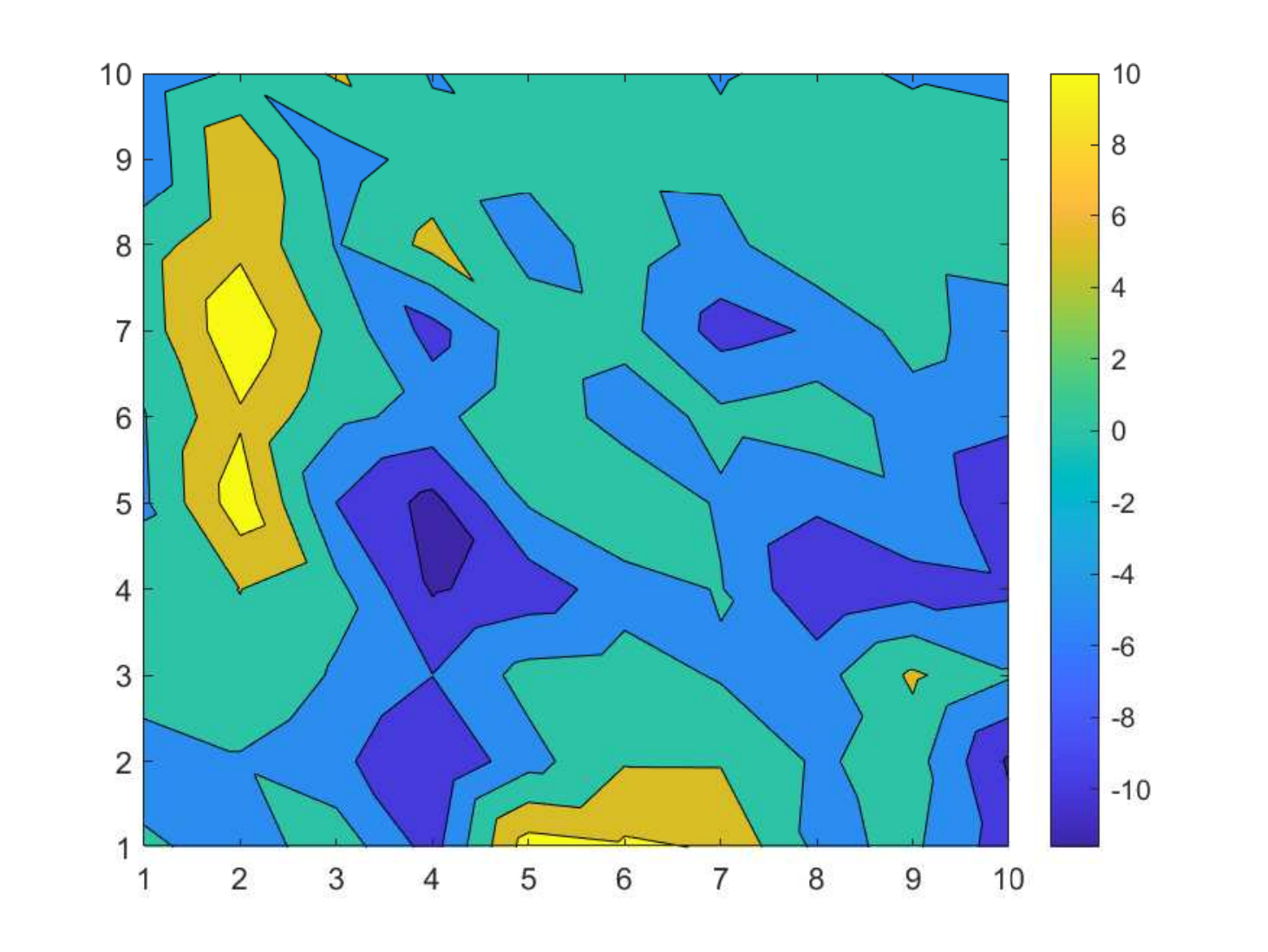}
\includegraphics[width=2.5cm,height=2.5cm]{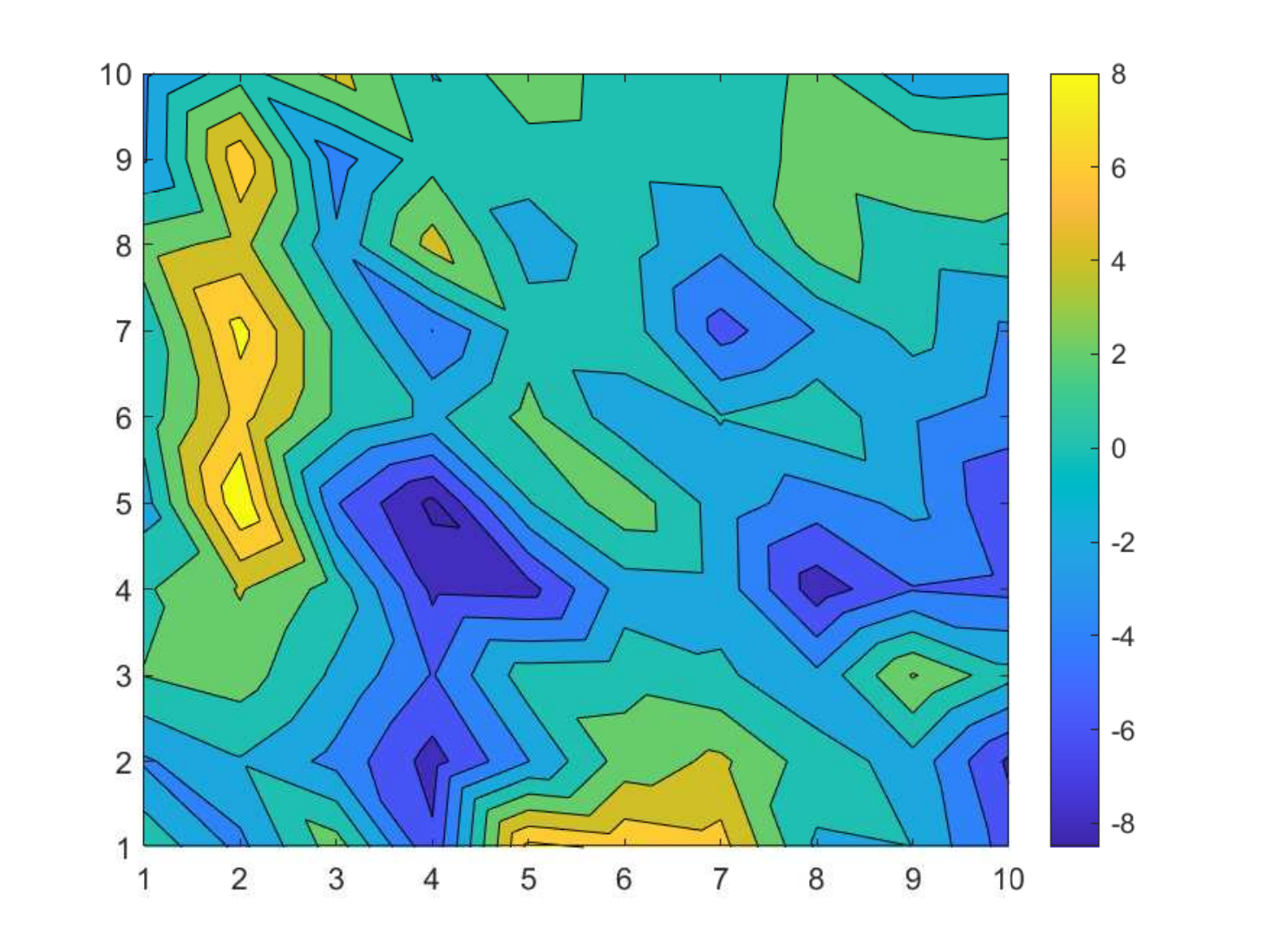}
\includegraphics[width=2.5cm,height=2.5cm]{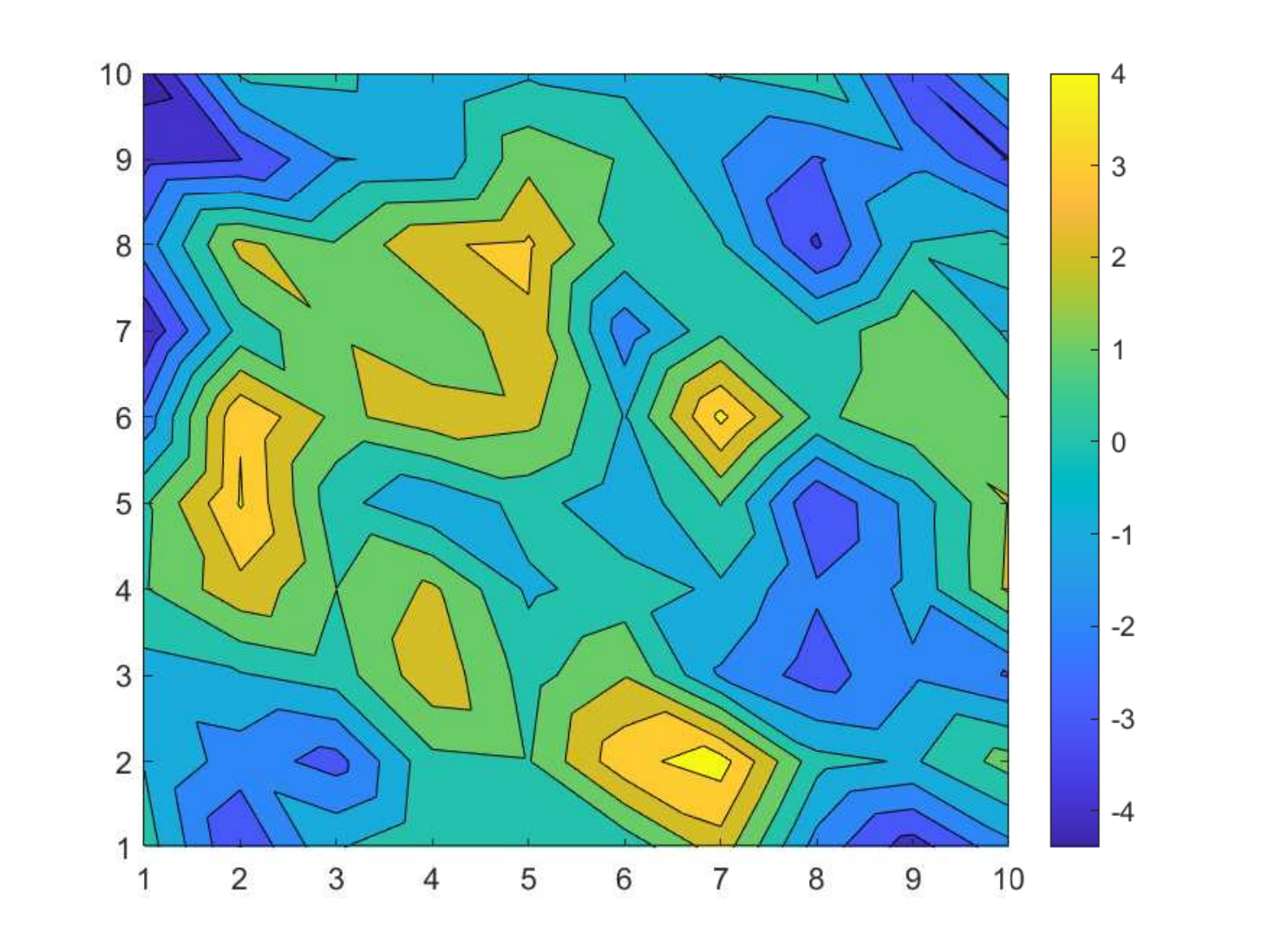}
\hspace*{0.56cm}
\includegraphics[width=2.5cm,height=2.5cm]{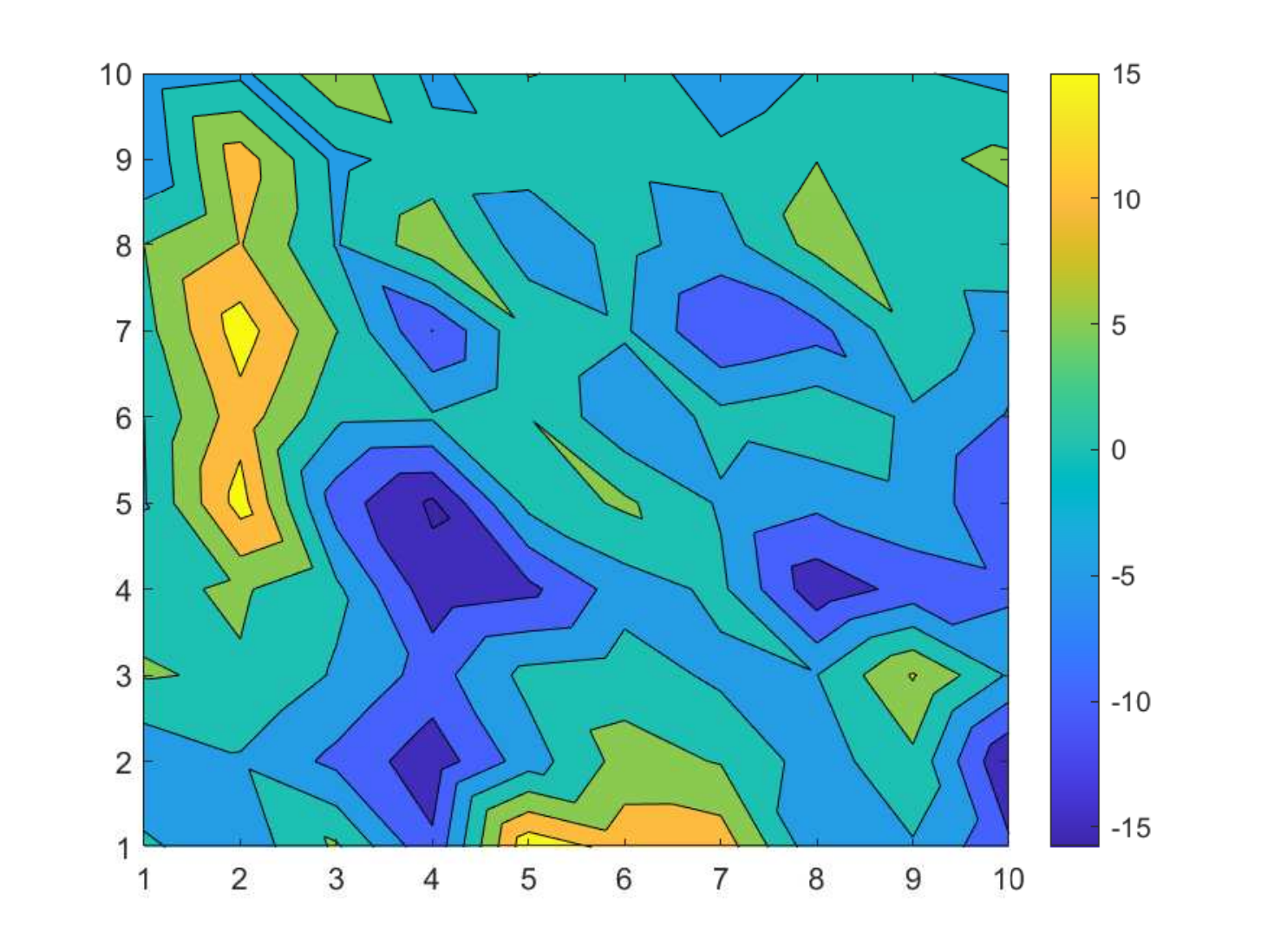}
\includegraphics[width=2.5cm,height=2.5cm]{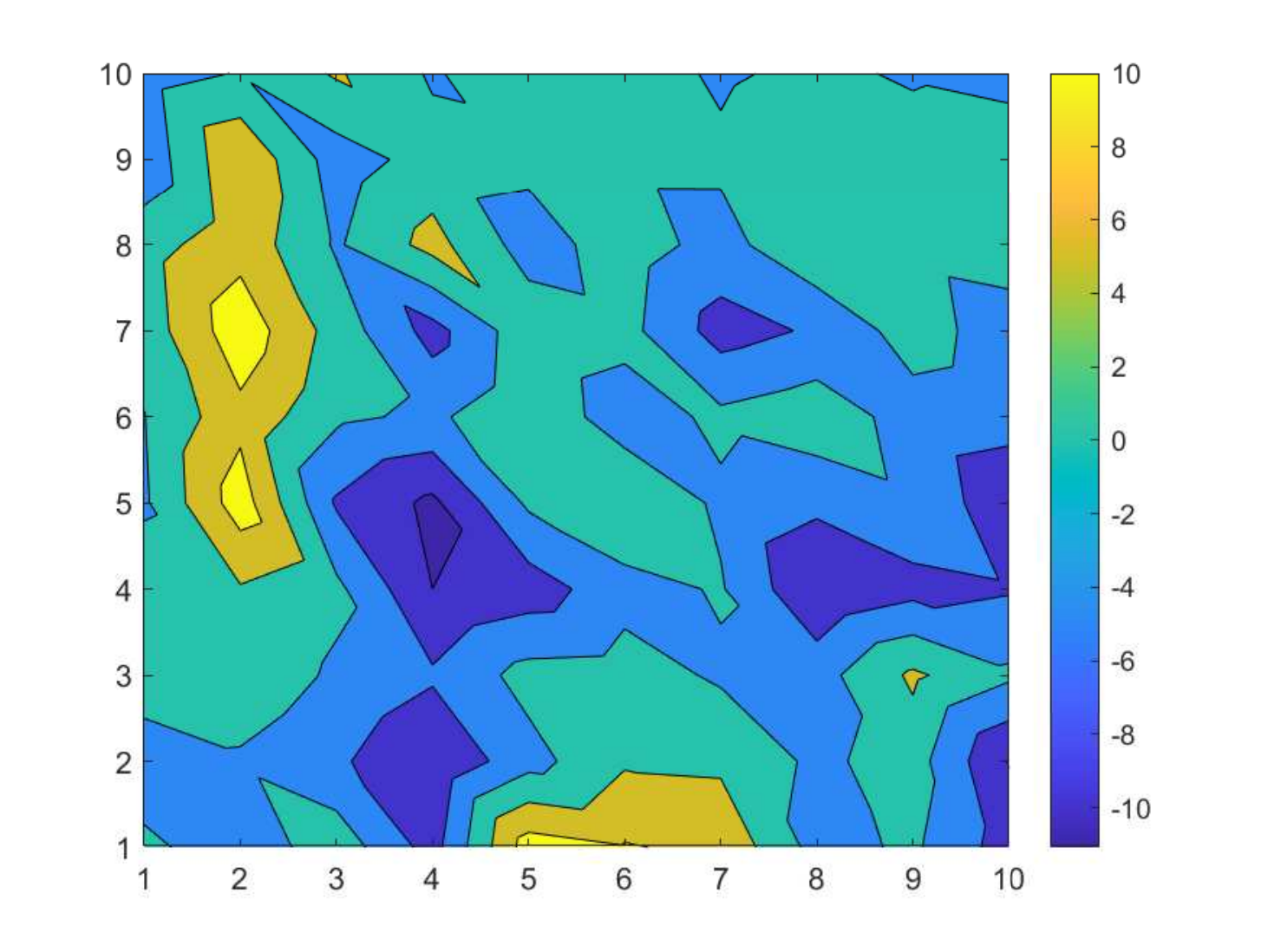}
\includegraphics[width=2.5cm,height=2.5cm]{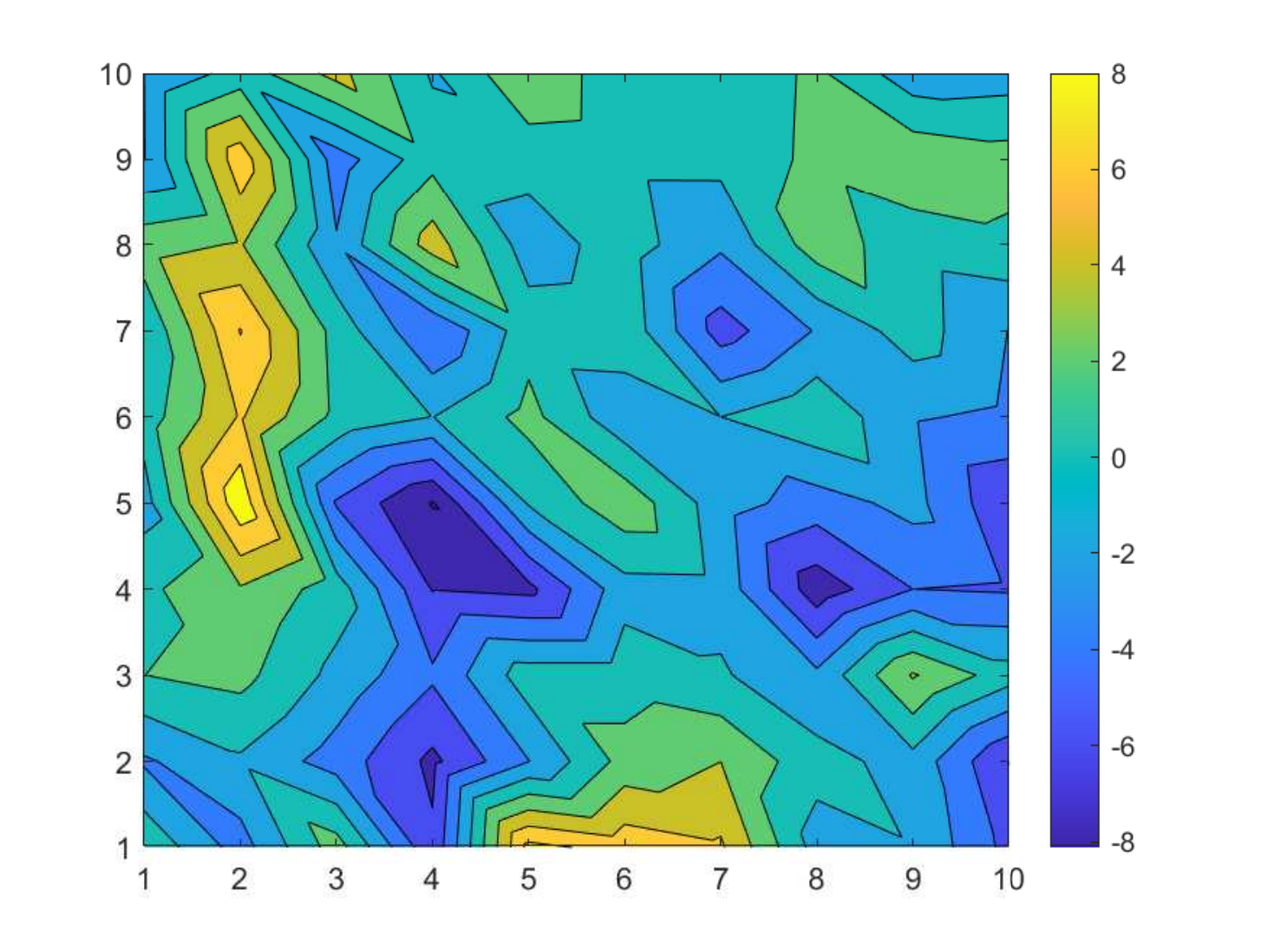}
\includegraphics[width=2.5cm,height=2.5cm]{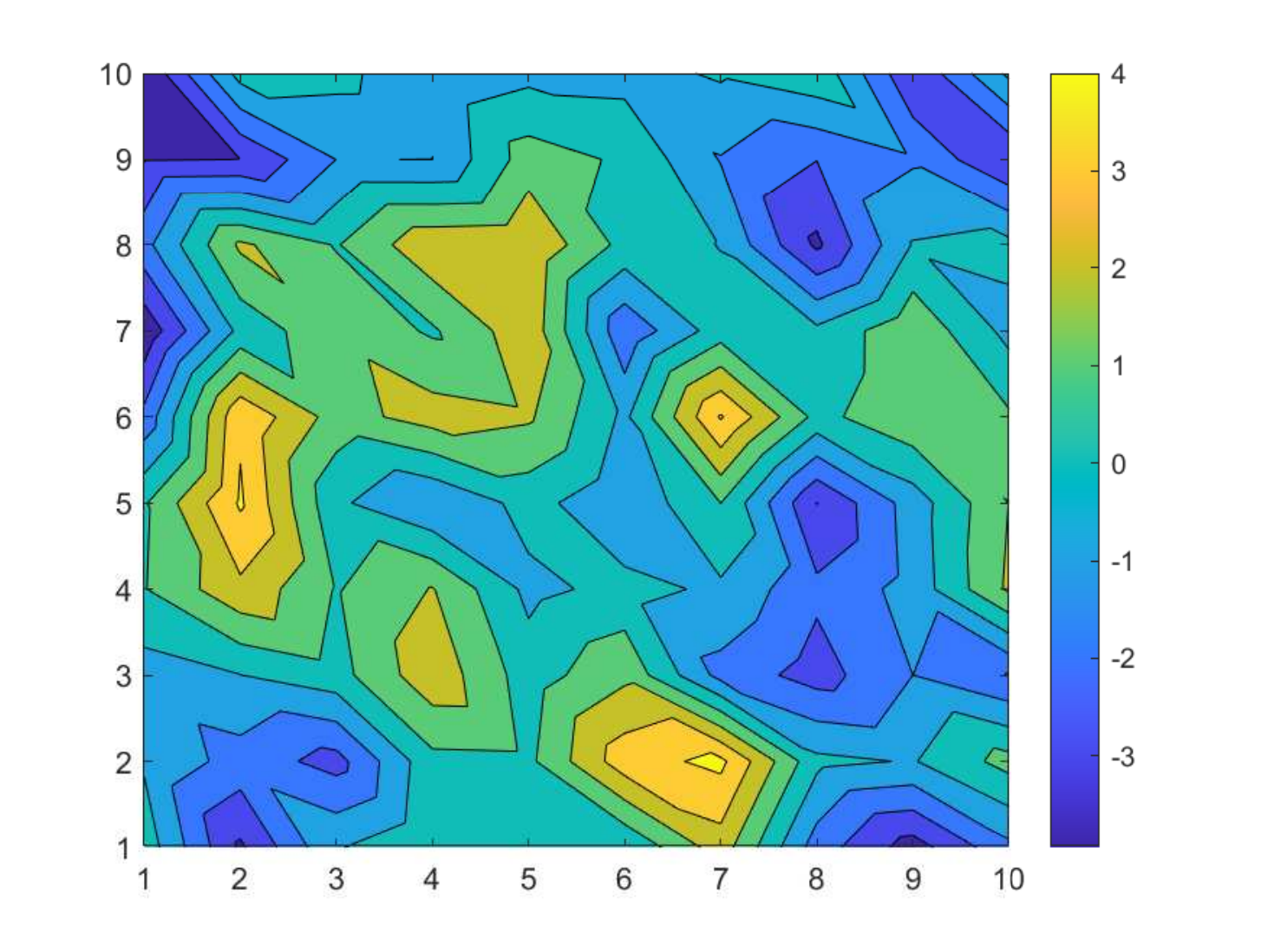}
 \caption{Original (top--row) and estimated (bottom--row) spatial log--intensity field $\mathbf{X},$ at time $t=1/2,$ through the scales $j=7,8,9,10$ (from  left to  right), over a $10\times 10$ spatial regular grid, from smoothed curve data}
 \label{figoest2}
 \end{figure}
\end{center}

\begin{center}
 \begin{figure}[!h]
 \centering
        \includegraphics[width=2.5cm,height=2.5cm]{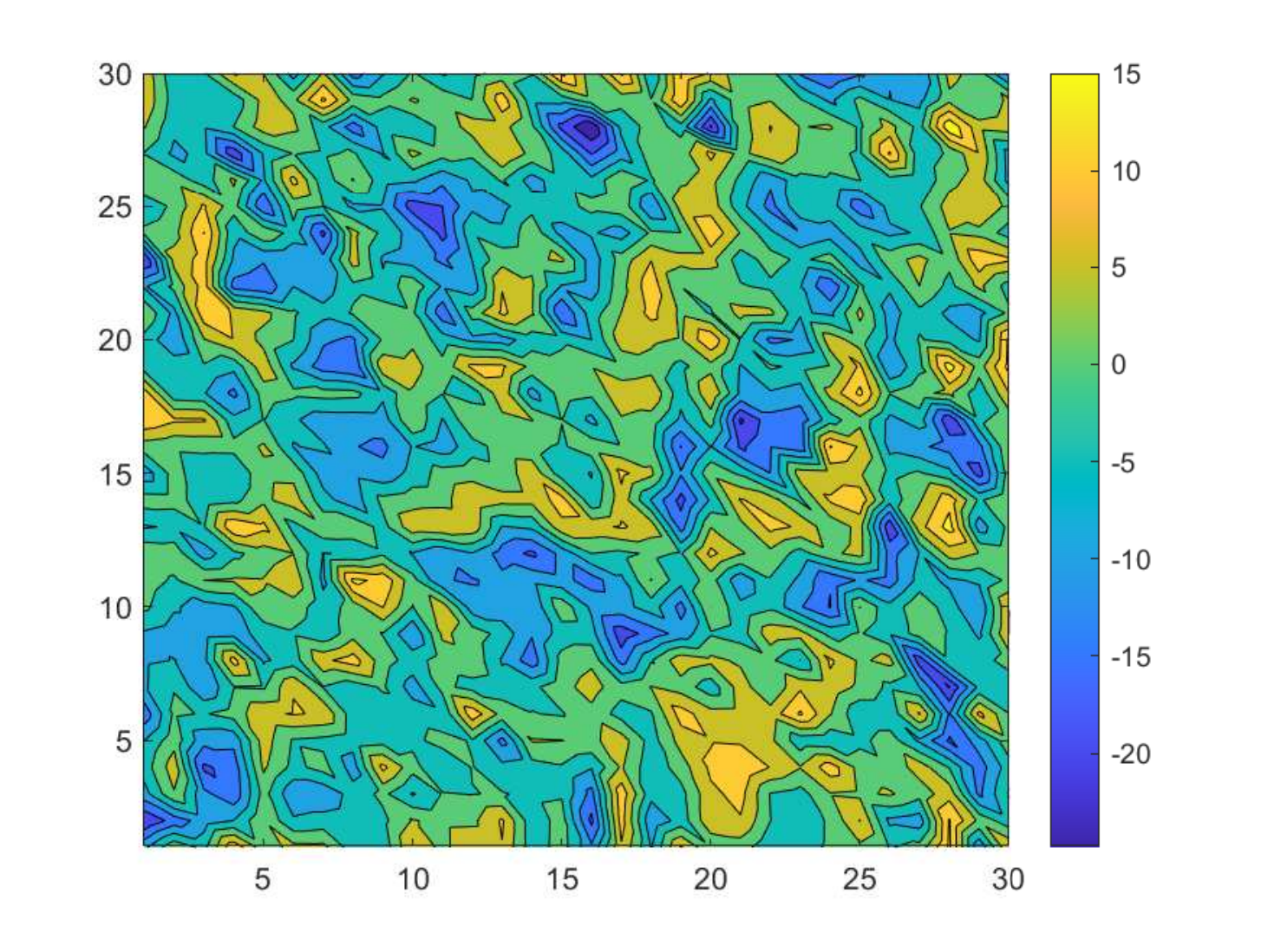}
        \includegraphics[width=2.5cm,height=2.5cm]{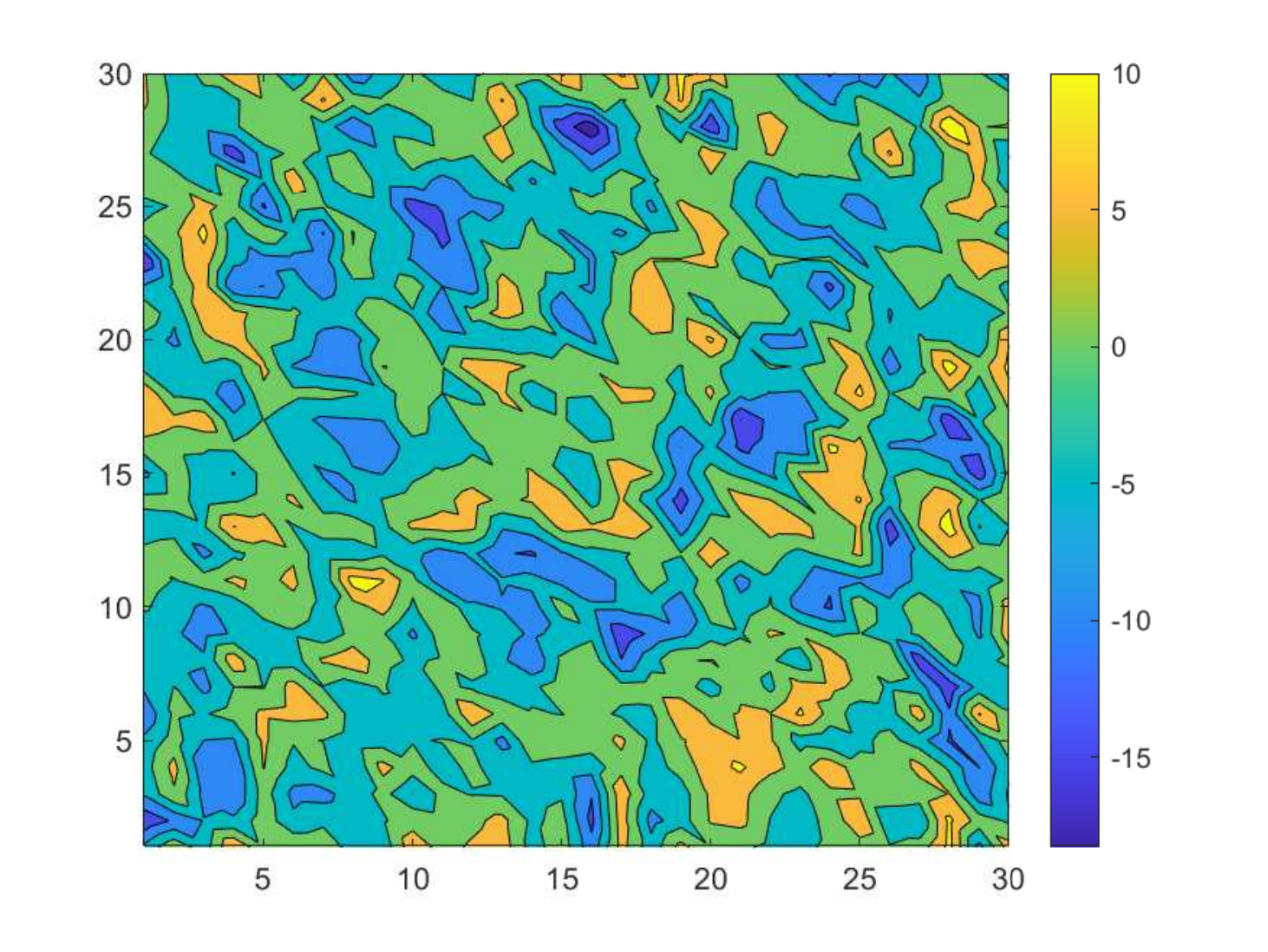}
        \includegraphics[width=2.5cm,height=2.5cm]{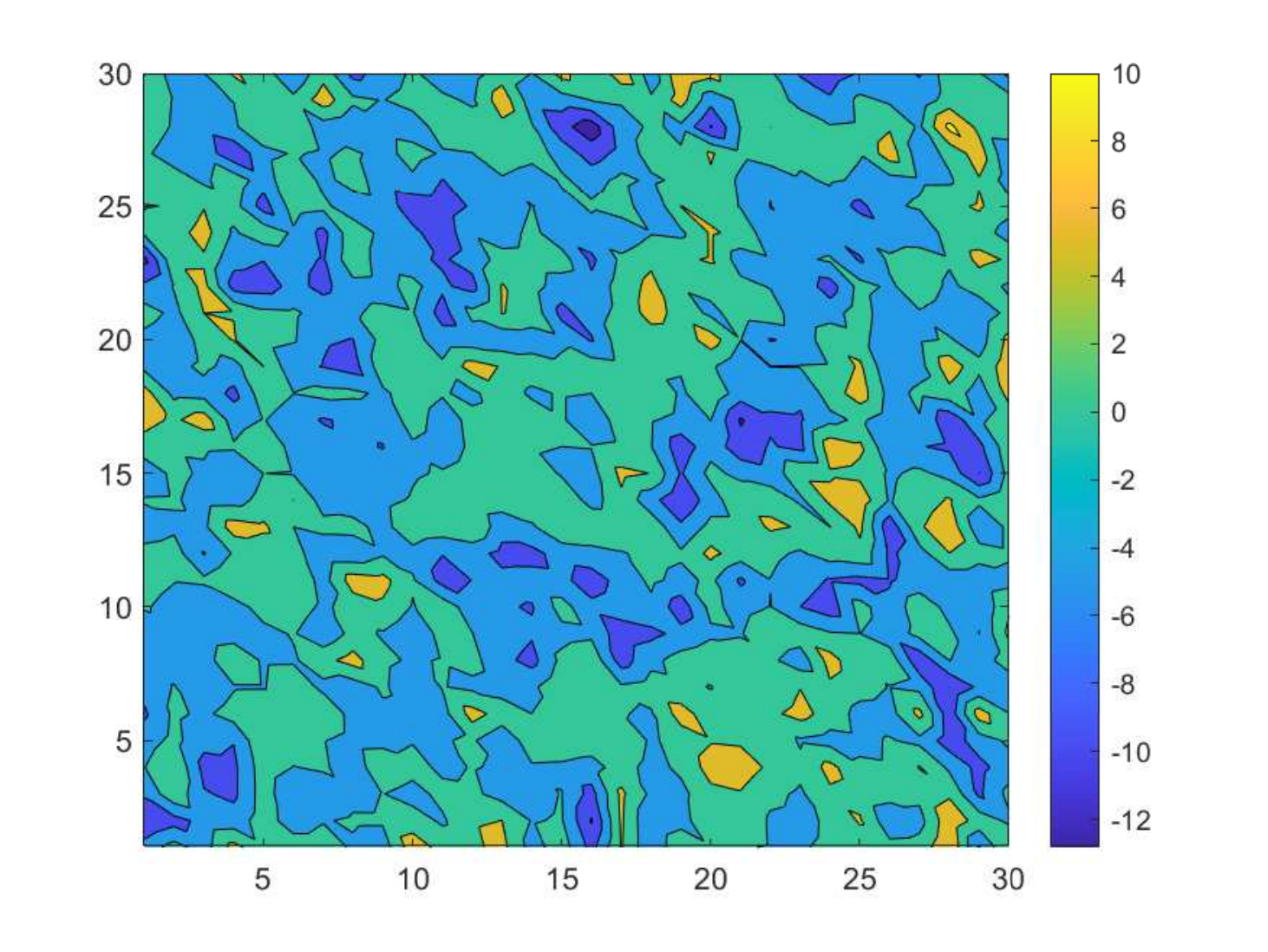}
        \includegraphics[width=2.5cm,height=2.5cm]{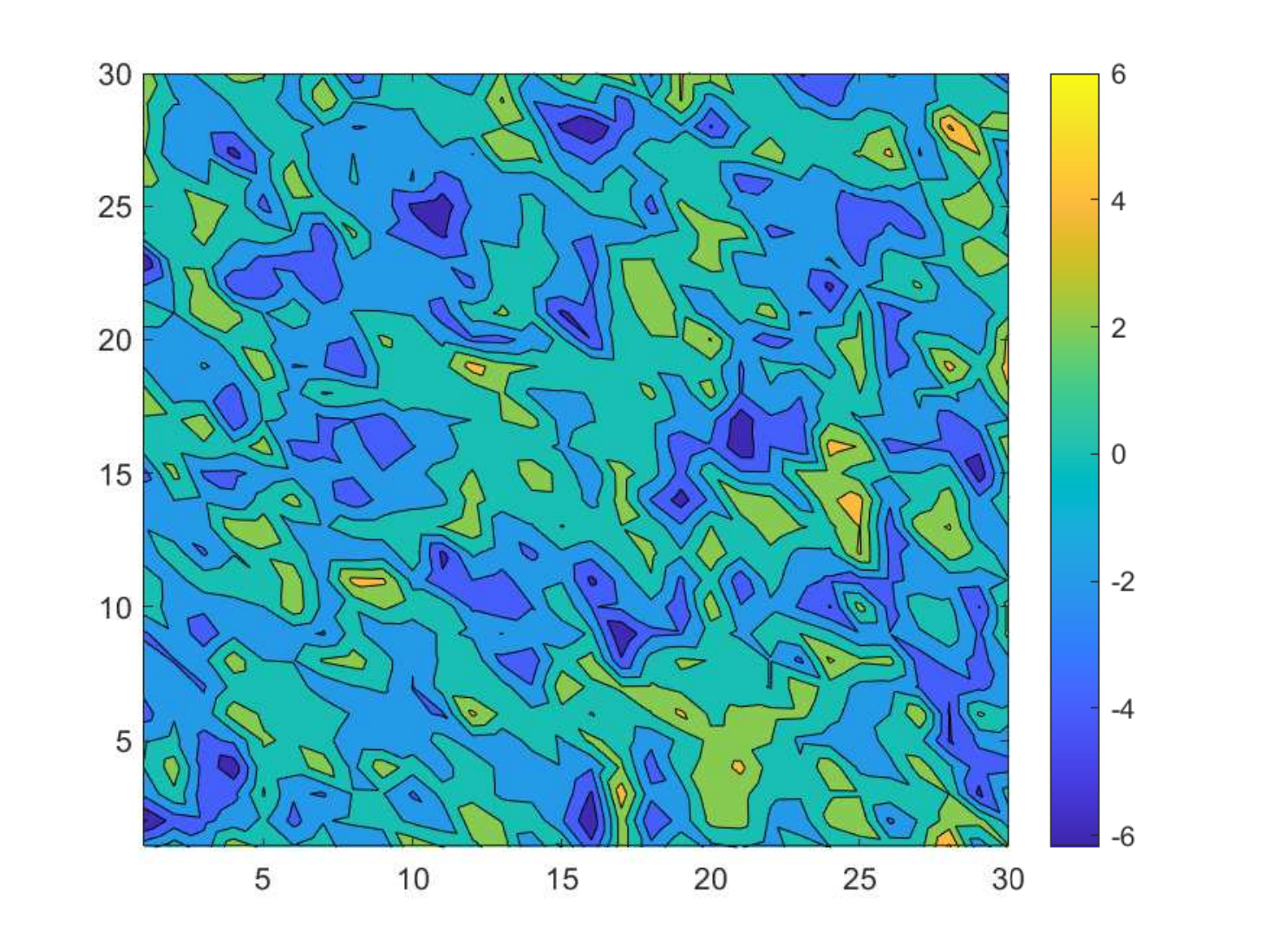}
        \hspace*{0.56cm}
        \includegraphics[width=2.5cm,height=2.5cm]{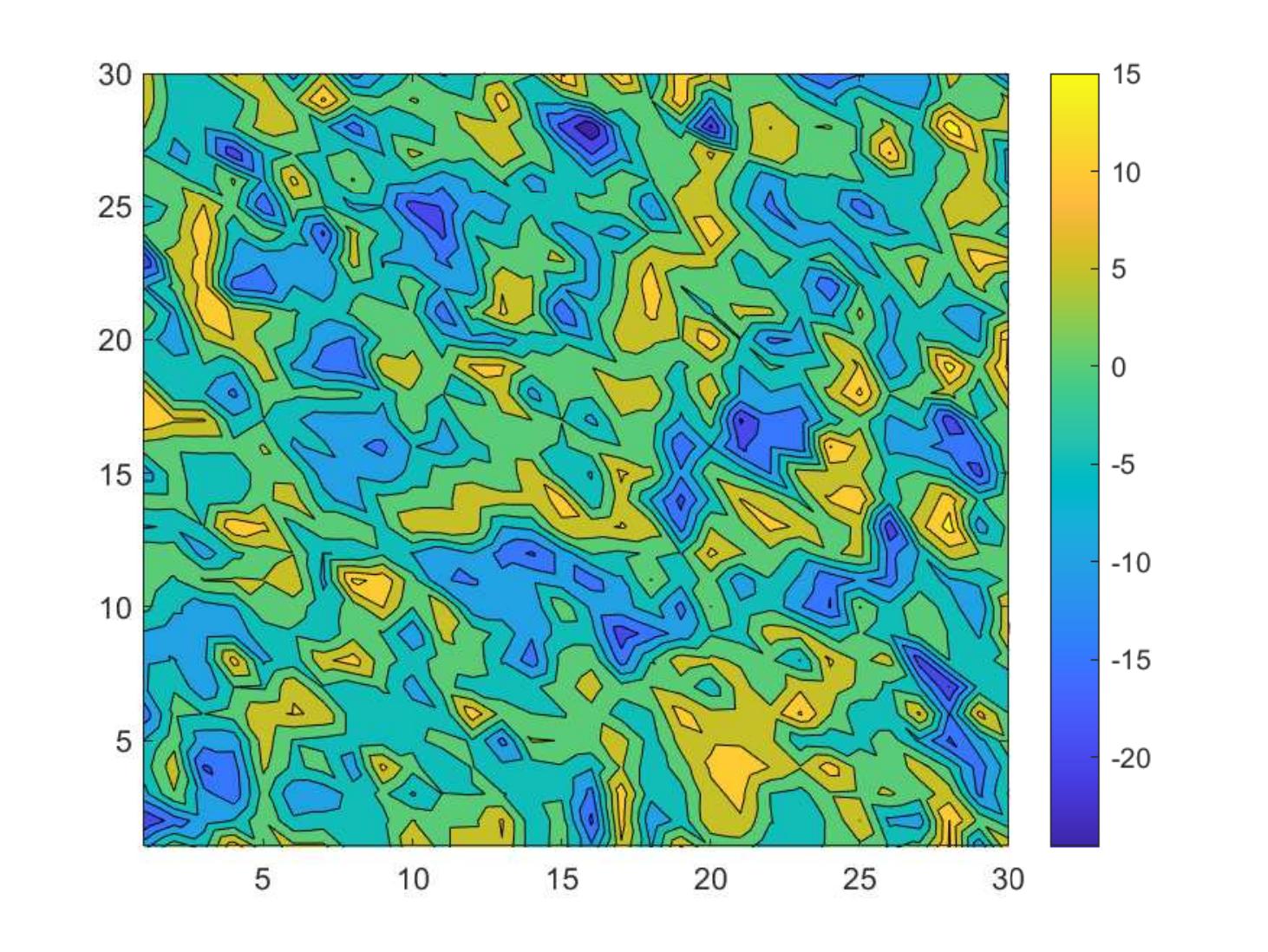}
        \includegraphics[width=2.5cm,height=2.5cm]{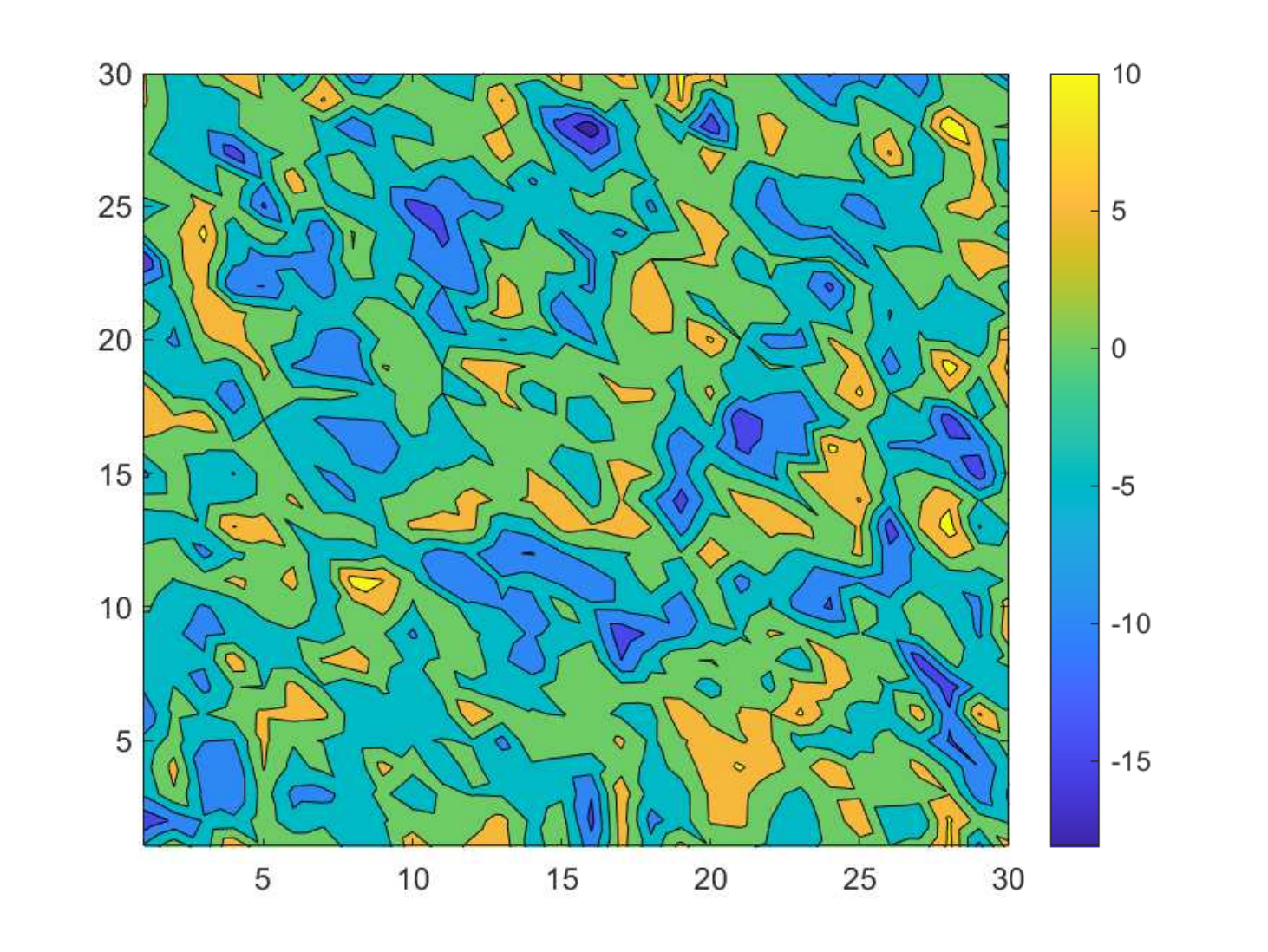}
        \includegraphics[width=2.5cm,height=2.5cm]{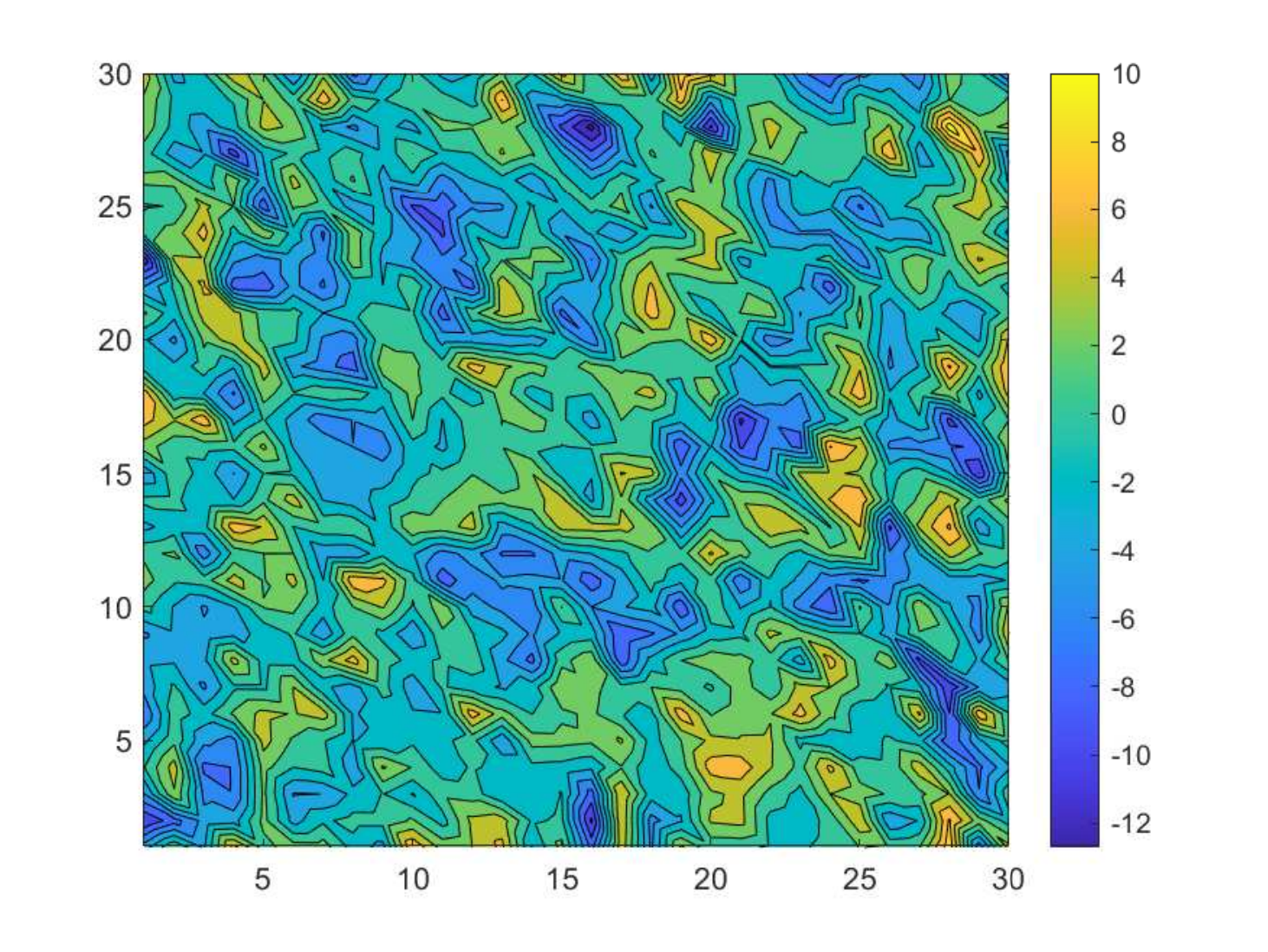}
        \includegraphics[width=2.5cm,height=2.5cm]{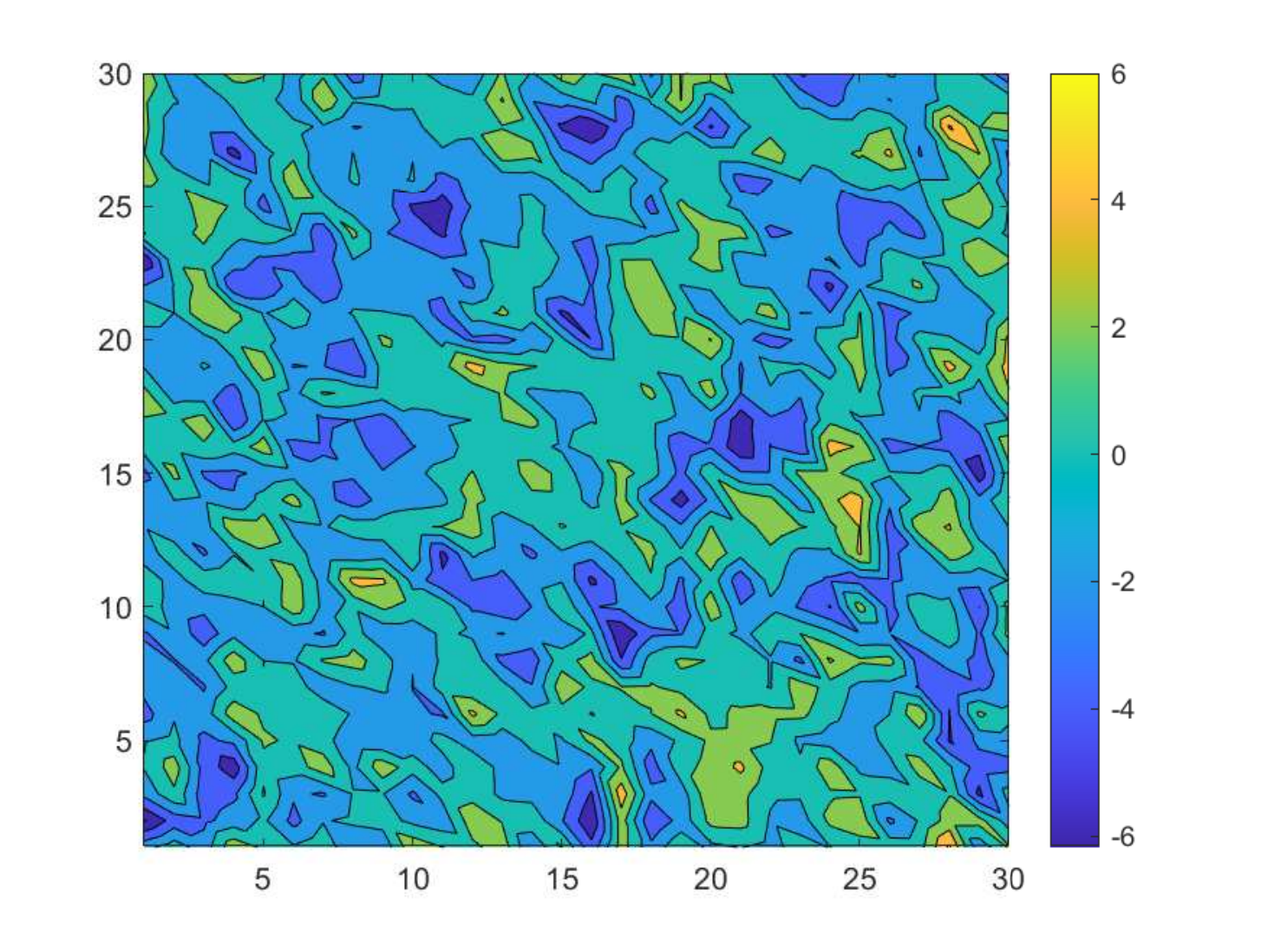}
 \caption{Original, non--smoothed (top--row), and estimated (bottom--row) spatial log--intensity field $\mathbf{X},$ at time $t=1/2,$ through the scales $j=7,8,9,10$ (from  left to right), over a $30\times 30$ spatial regular grid}
 \label{figoest}
 \end{figure}
\end{center}

\clearpage
 \section{Real-data example}
\label{s6}

The Spanish National Statistical Institute  provided the  data  on the observed cases of respiratory disease deaths, consisting of 432 monthly records, in the period 1980--2015, collected at the  48  Spanish  provinces in the Iberian Peninsula. The data are temporal, and spatial interpolated over a $20\times 20$ regular grid. Specifically, $1725$  temporal nodes, and 400 spatial nodes are considered.
A flexible  fitting of the underlying local behaviour (or singularity) of the observed and interpolated data is obtained, from a suitable choice of the scale or   resolution level (see Figures \ref{DR10NS}--\ref{D9}).   Note that FDA preprocessing  usually leads to an over--smoothing. That is the case of B--spline smoothing often applied  to construct  curve data sets (see Figure \ref{DR10S}).

\begin{figure}[h!]
\includegraphics[width=12cm,height=10cm]{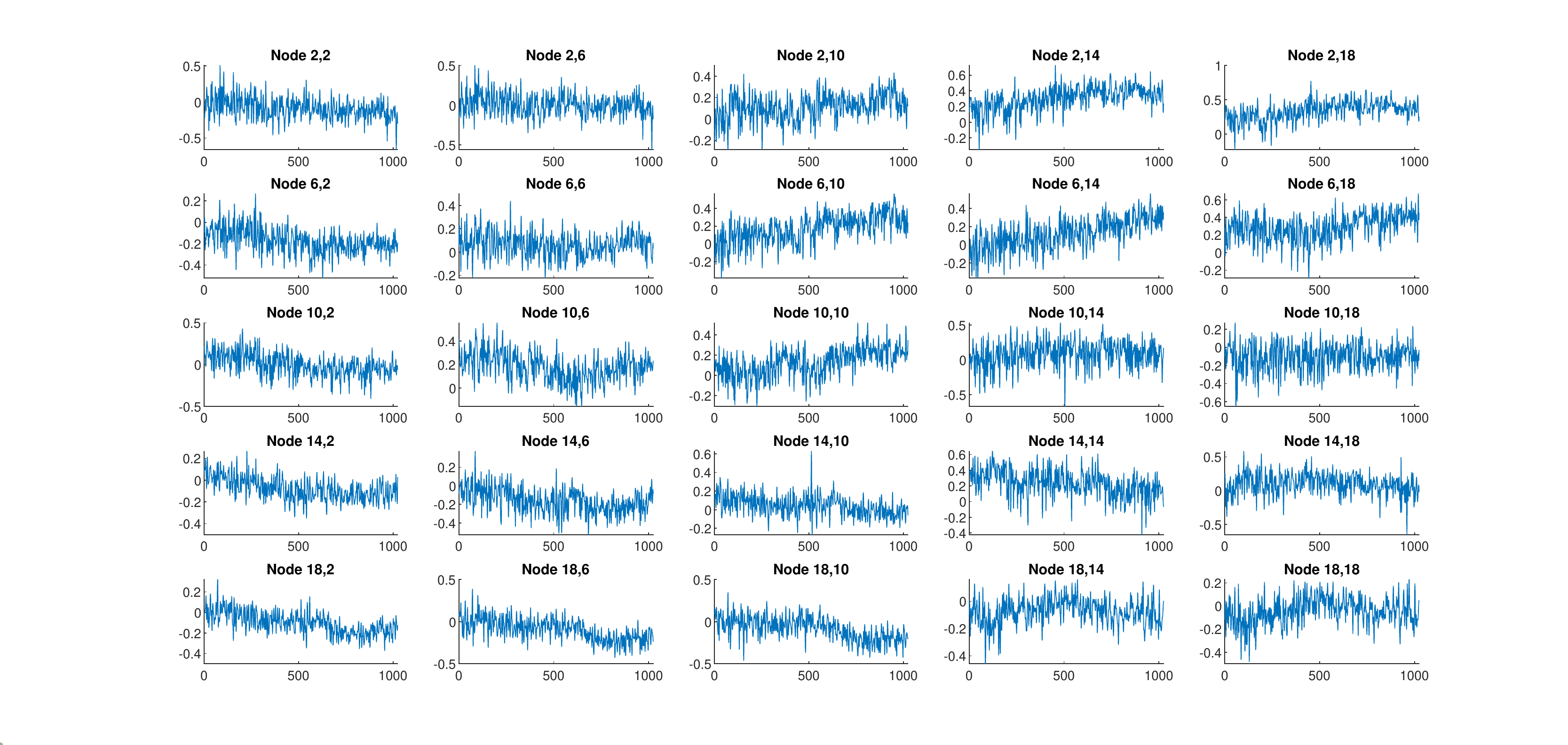}
\caption{Temporal and spatial interpolated data over a $20\times 20$ spatial regular grid}\label{DR10NS}
 \end{figure}

 \begin{figure}[h!]
\includegraphics[width=12cm,height=10cm]{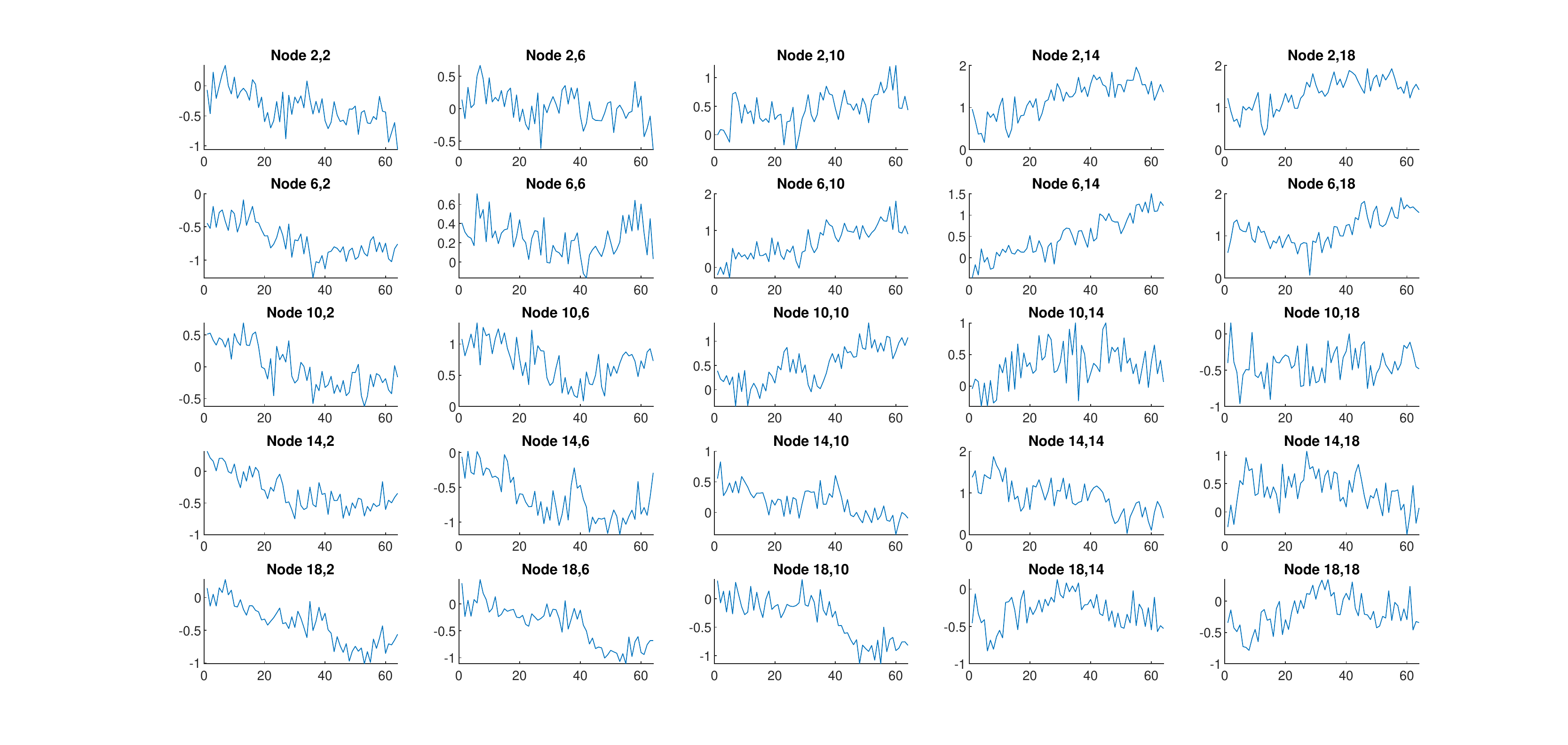}
\caption{Temporal and spatial interpolated data over  a $20\times 20$ spatial regular grid at scale (resolution level) 7}\label{D9}
 \end{figure}

 \begin{figure}[h!]
\includegraphics[width=12cm,height=10cm]{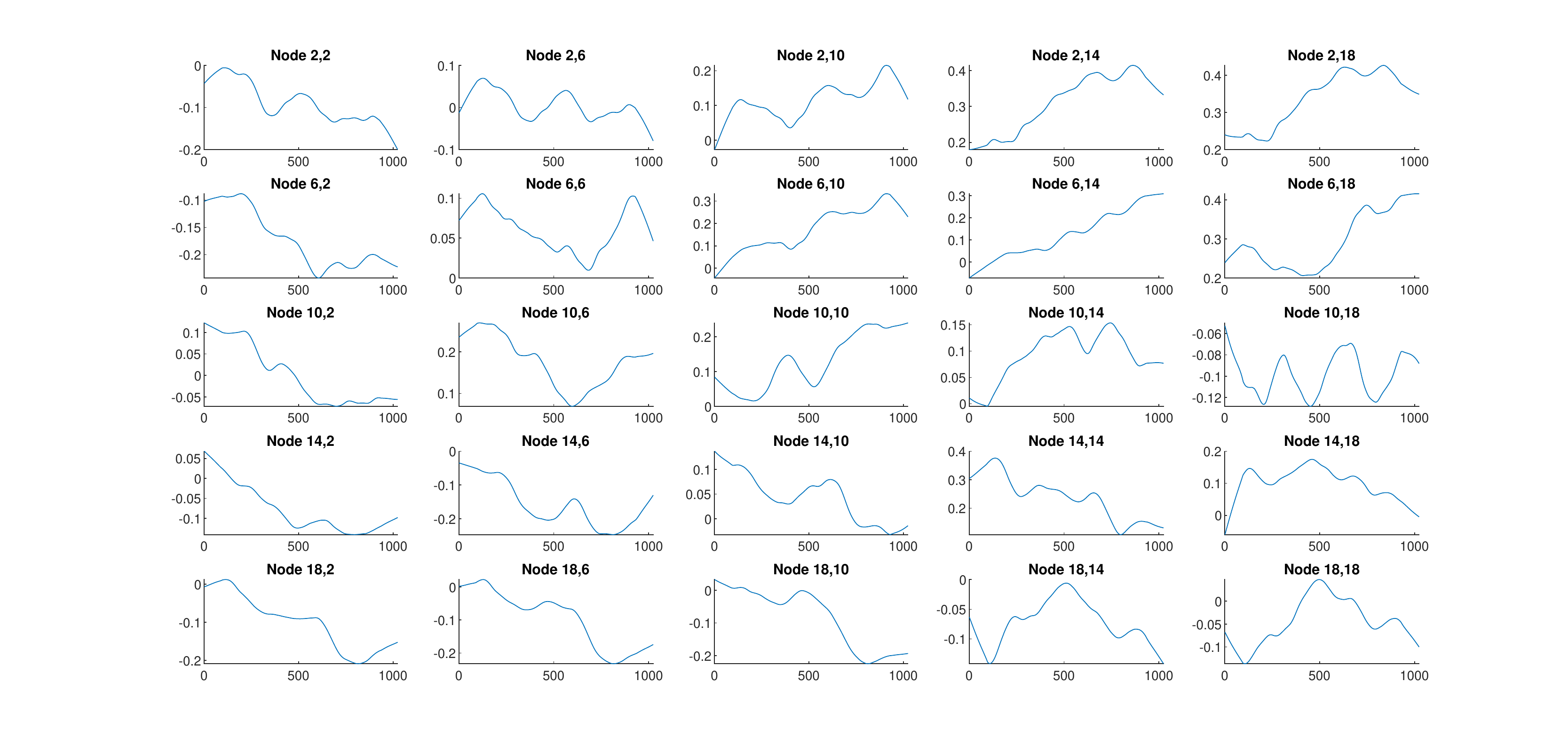}
\caption{B--spline smoothed curve data over  a $20\times 20$ spatial regular grid}\label{DR10S}
 \end{figure}
 \clearpage

\subsection{Multiscale estimation}
Equations (\ref{ps})--(\ref{SARHpp}) are implemented in terms of the empirical eigenvectors, and the Haar wavelet basis.
The computed  estimates at scales
(resolution levels) $j=7,8,9,10,$  of the autocorrelation operators $L_{1}$ and $L_{2}$ can be found  in Figure \ref{OPL1EA},  for $k_{N}=\left[\ln(N)\right]^{-}= \left[\ln(400)\right]^{-}=5=k_{400}.$
Contour plots in Figure \ref{figa1} display the spatial patterns of the observed and estimated log--intensity field over  a $20\times 20$ spatial regular grid, at monthly times $t=108$ and $t=216,$ through scales $j=7,8,9,10.$ Here, the multiscale analysis has been implemented from the interpolated non--smoothed data. Figure \ref{figa2} shows the original and estimated values  of the log--intensity field over the same temporal and spatial nodes,  from the  B-spline smoothed curve data. One can observe the loss of information in  Figure \ref{figa2},  about spatial variability displayed by the log--intensity field at scales $j=9,10,$ with respect to   Figure \ref{figa1}. Thus, similar spatial patterns are  observed, at  scales $j=7,8,9,10,$ when B--spline smoothed curve data are considered, hiding the heterogeneities that the log-intensity field presents through different scales.

\begin{figure}[h!]
\includegraphics[width=12cm,height=10cm]{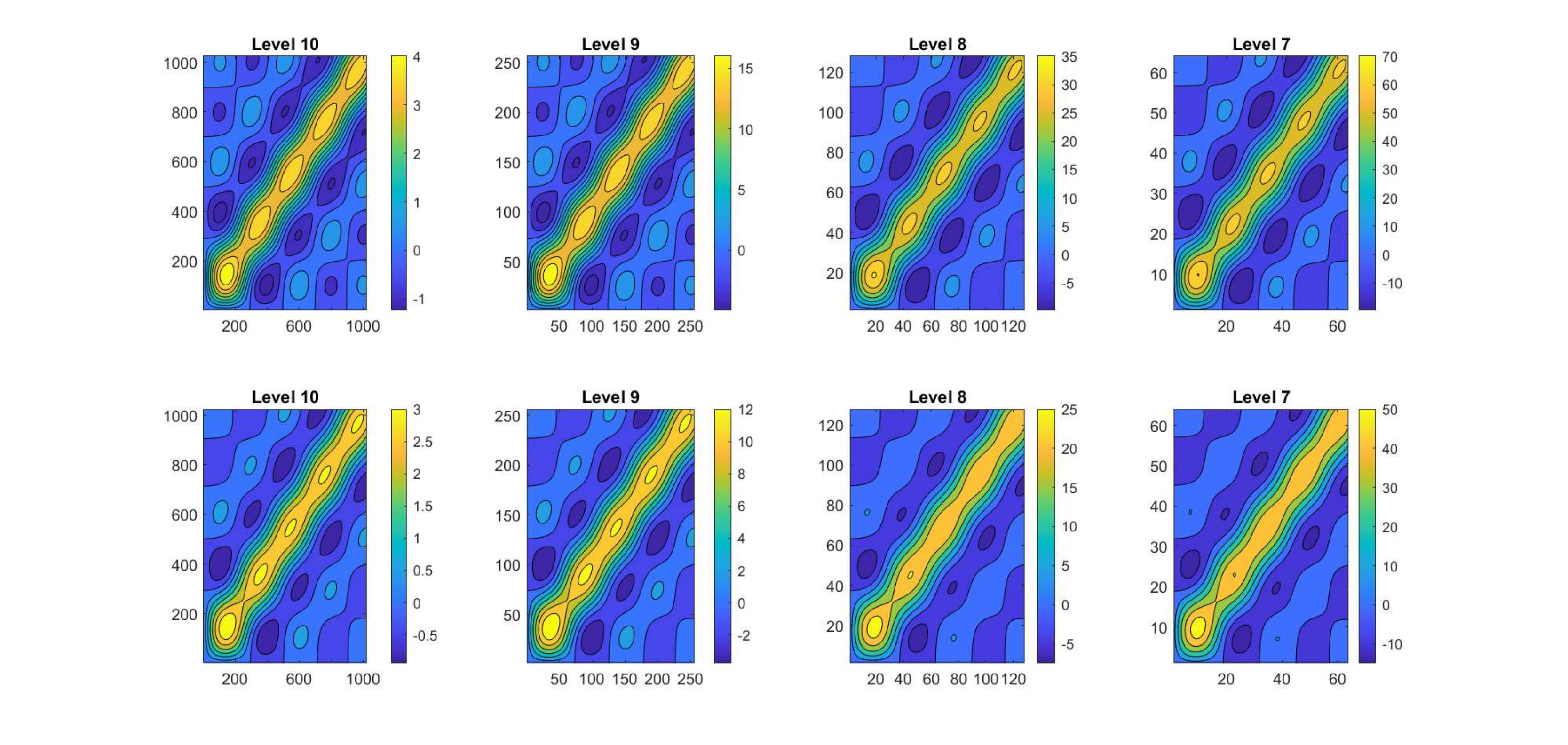}
\caption{Multiscale estimation of operator $L_{1}$ (top), and of $L_{2}$ (bottom) through scales (multiresolution levels) $j=7,8,9,10$ (from  right to left)}\label{OPL1EA}
 \end{figure}

\begin{center}
 \begin{figure}[!h]
 \centering
\includegraphics[width=2.5cm,height=2.5cm]{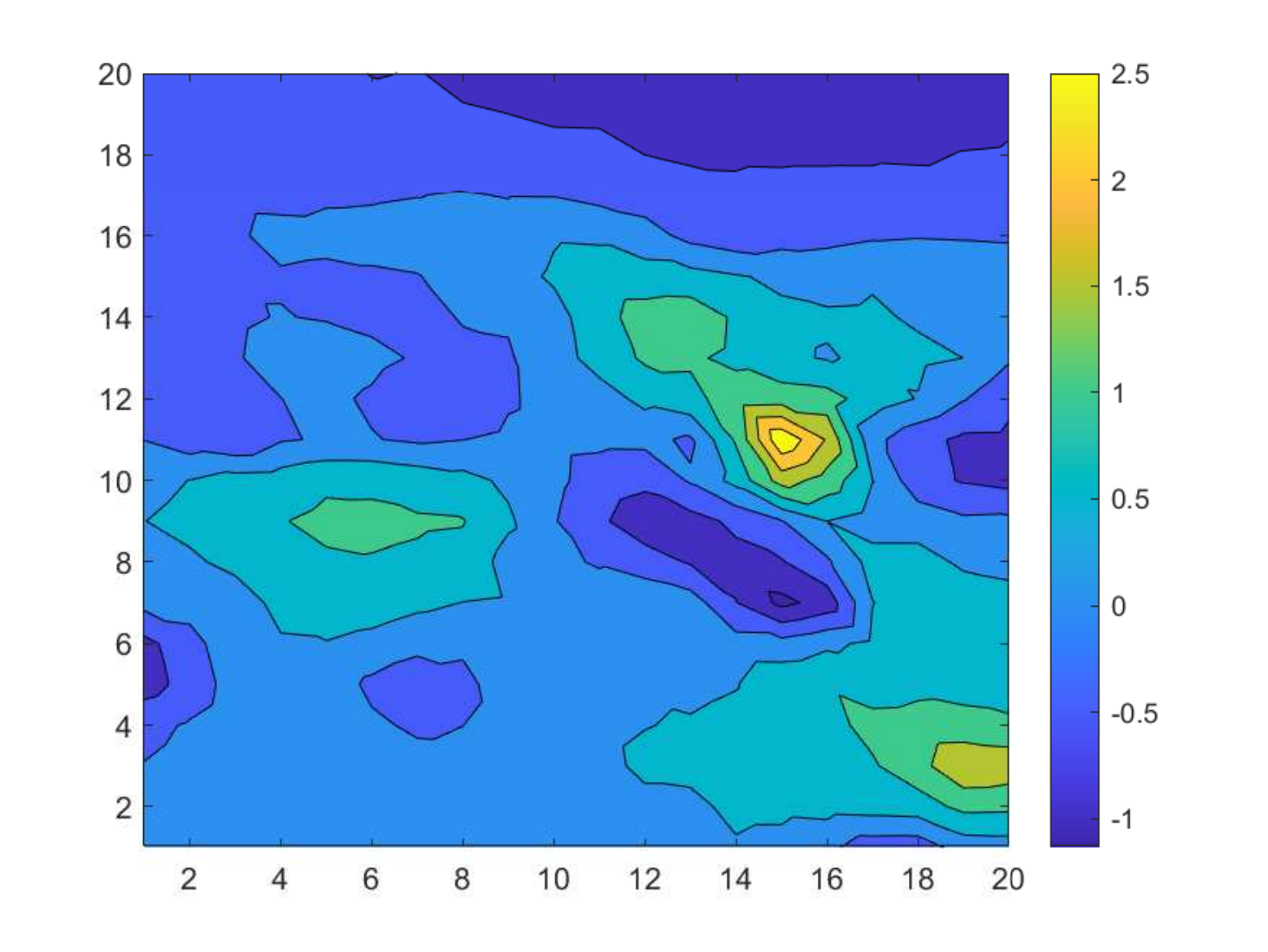}
        \includegraphics[width=2.5cm,height=2.5cm]{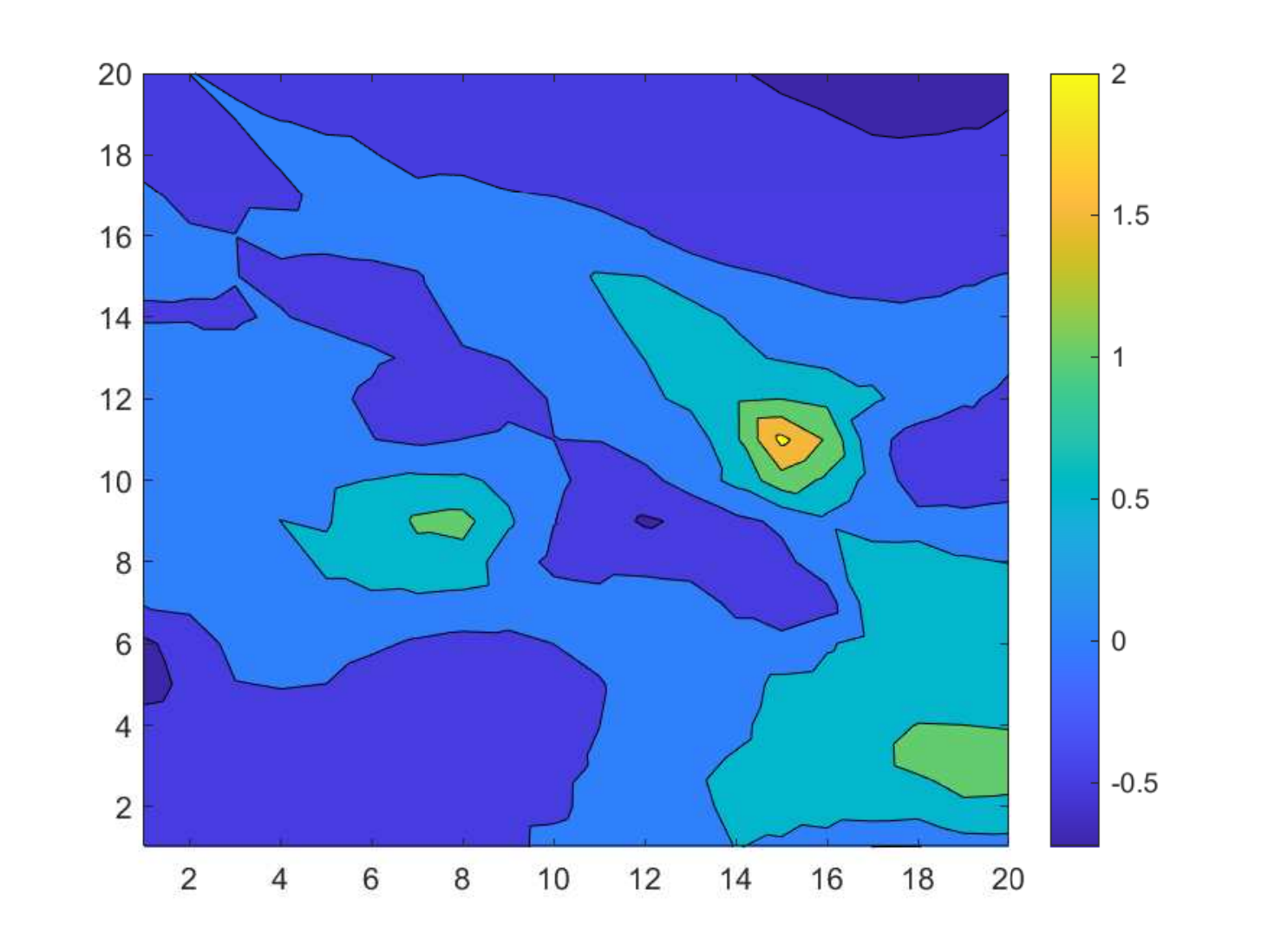}
        \includegraphics[width=2.5cm,height=2.5cm]{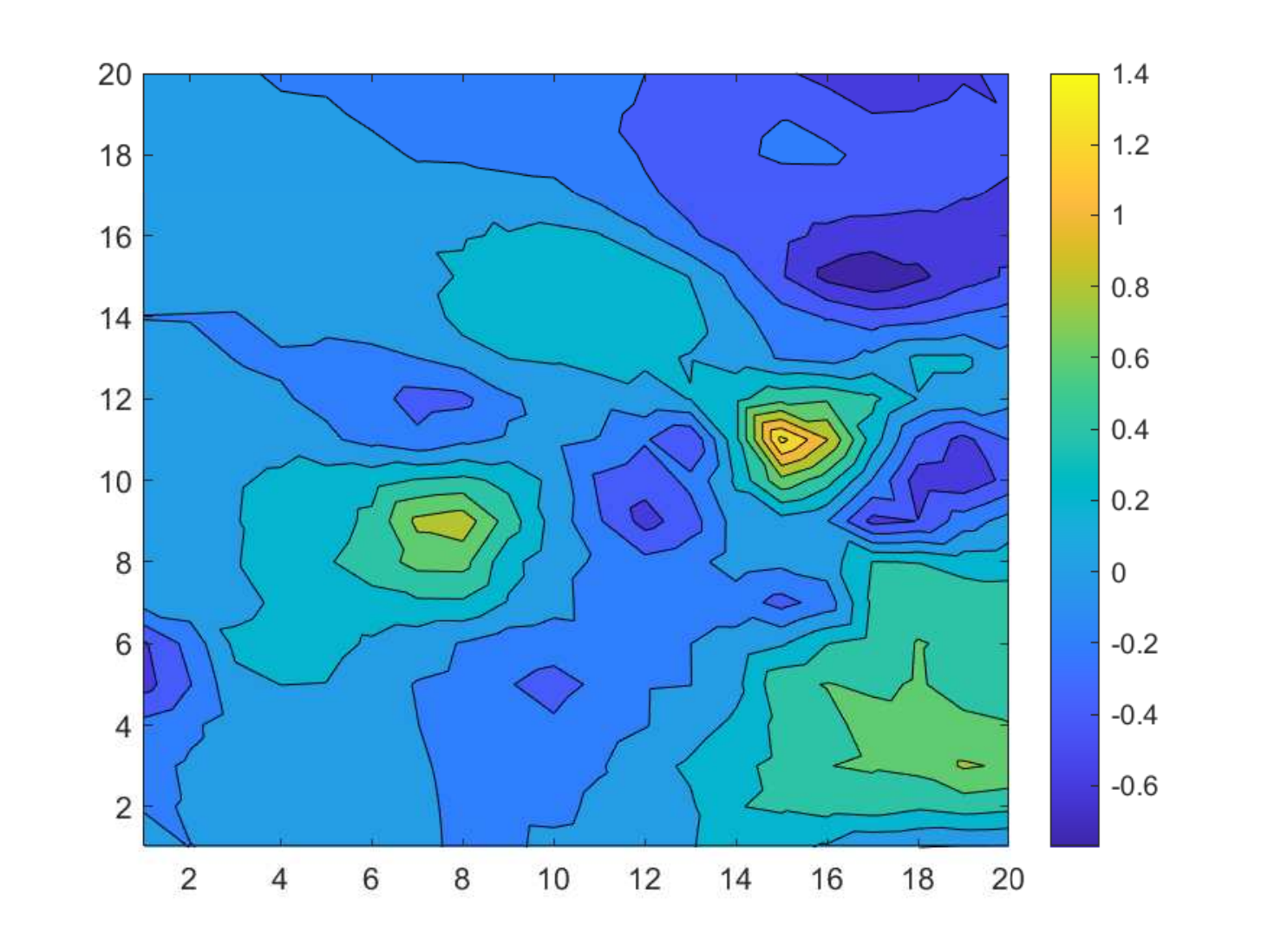}
        \includegraphics[width=2.5cm,height=2.5cm]{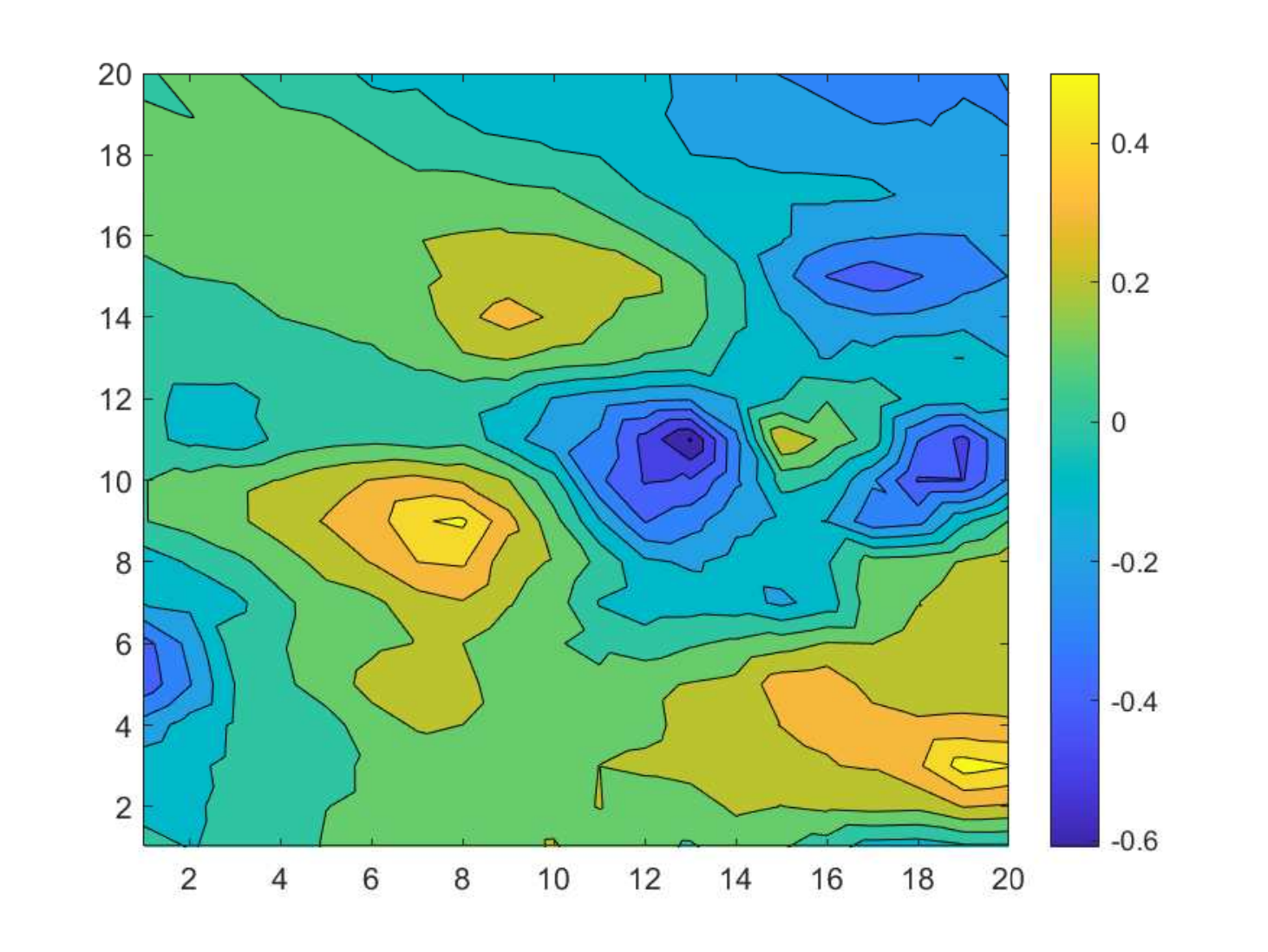}
        \hspace*{0.56cm}
        \includegraphics[width=2.5cm,height=2.5cm]{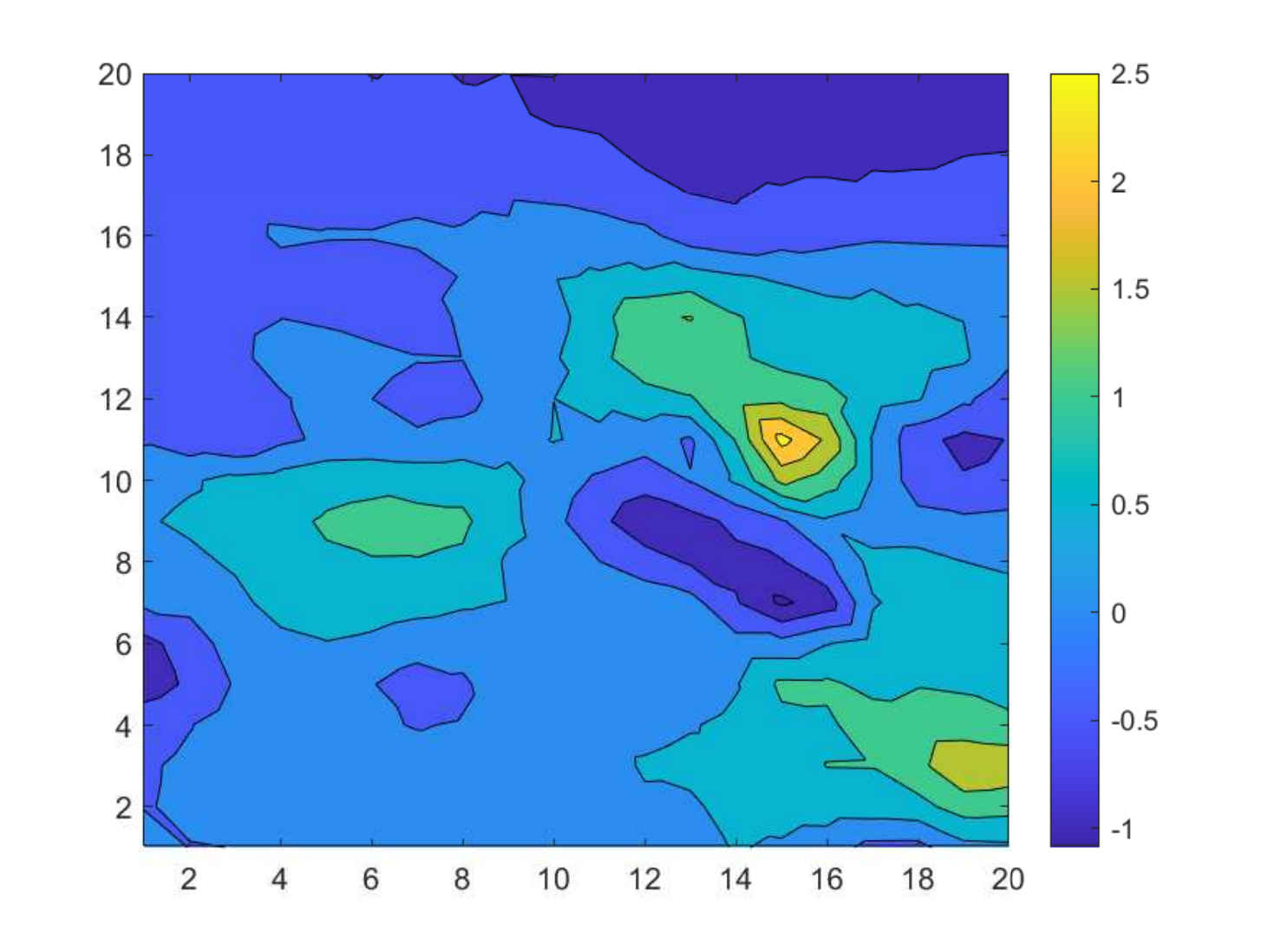}
        \includegraphics[width=2.5cm,height=2.5cm]{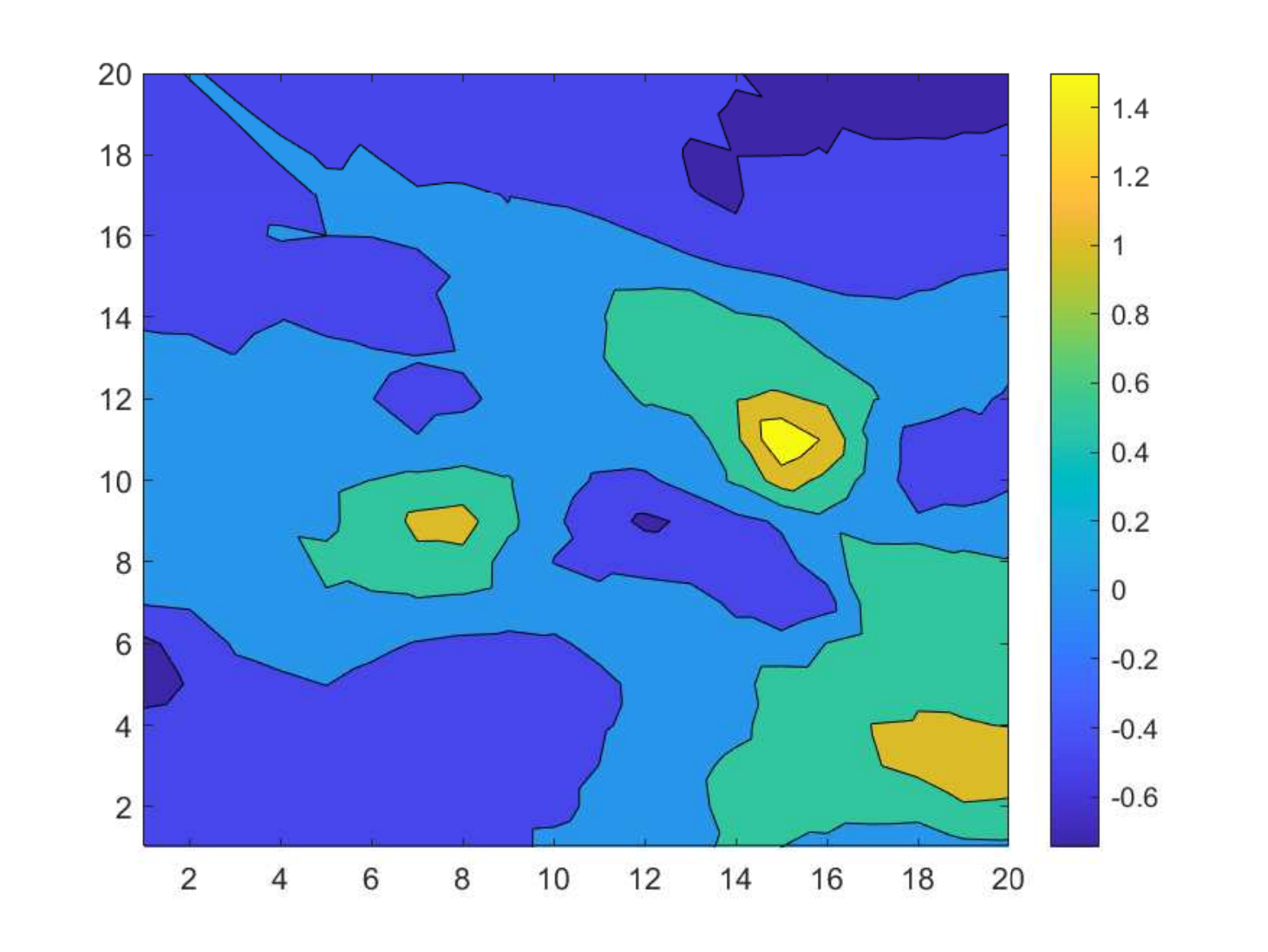}
        \includegraphics[width=2.5cm,height=2.5cm]{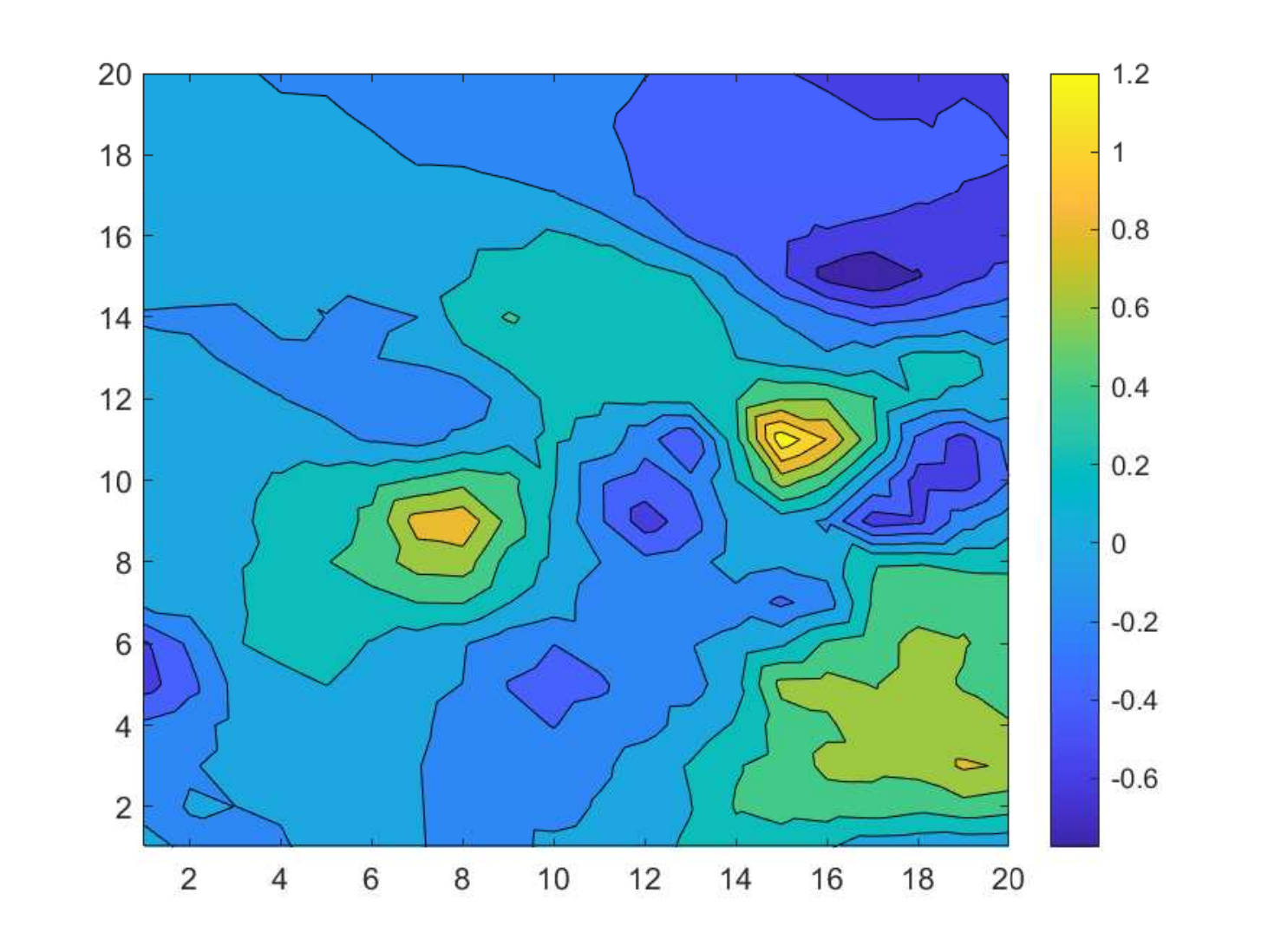}
        \includegraphics[width=2.5cm,height=2.5cm]{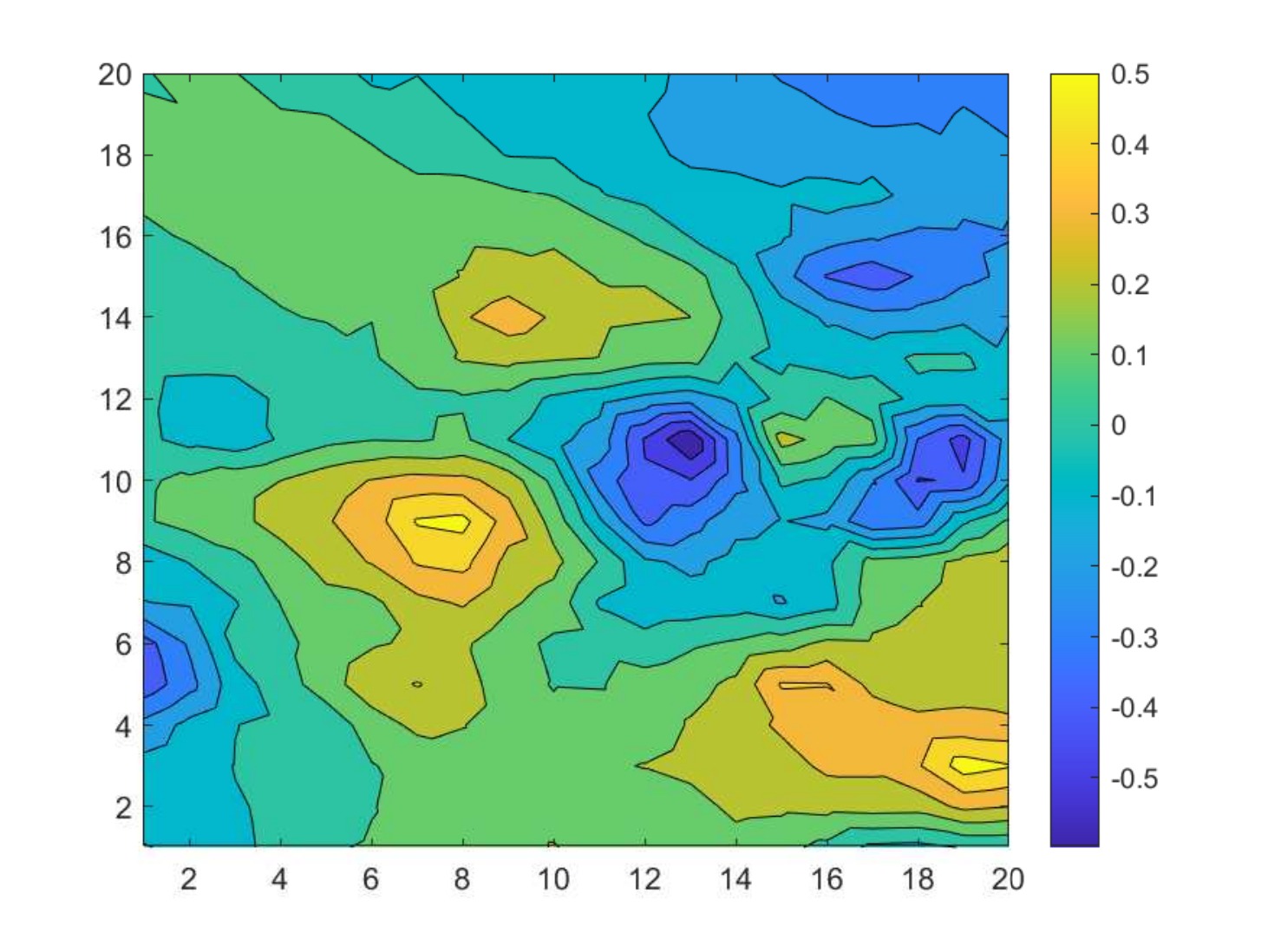}
        \hspace*{0.56cm}
        \includegraphics[width=2.5cm,height=2.5cm]{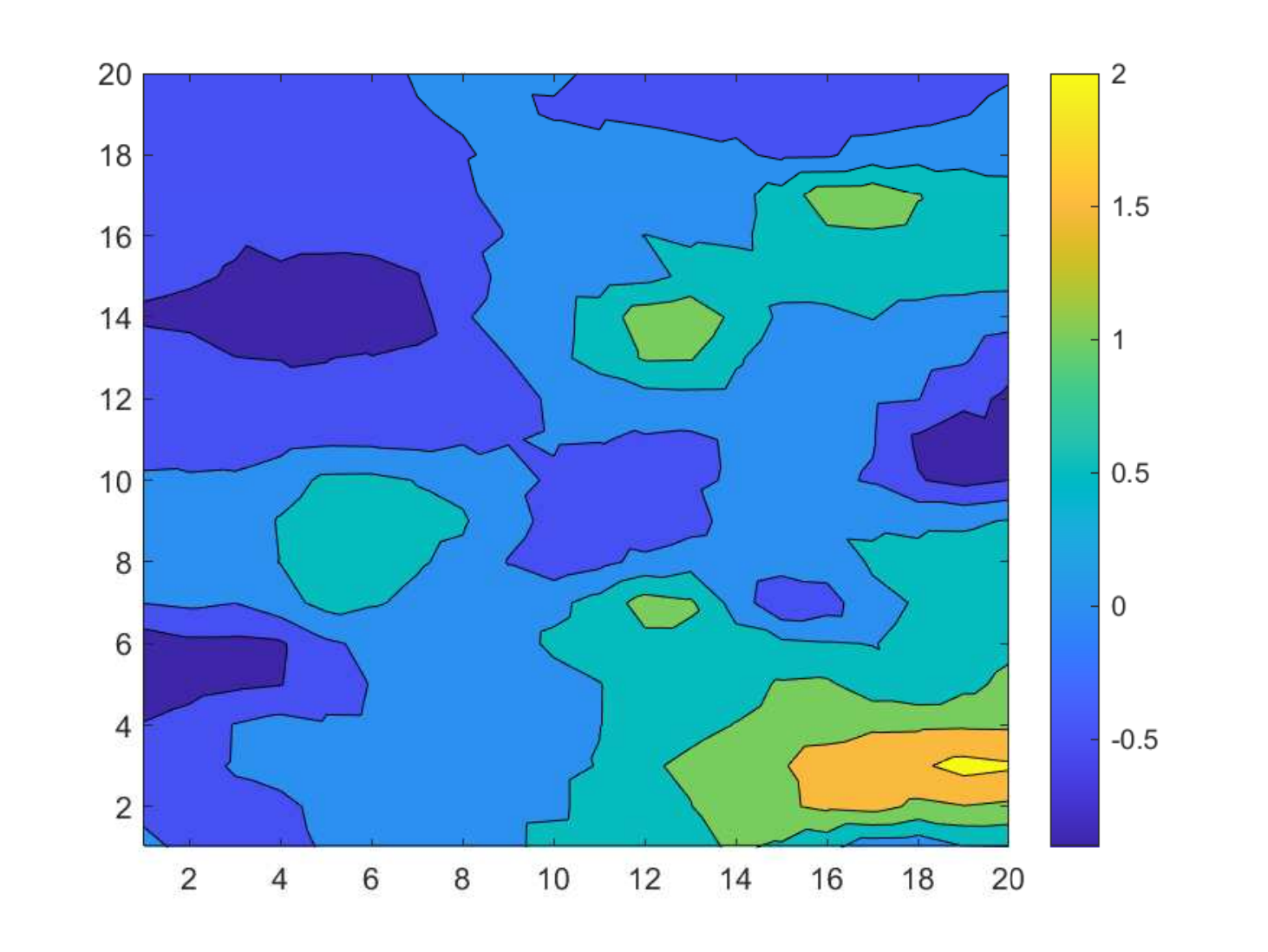}
        \includegraphics[width=2.5cm,height=2.5cm]{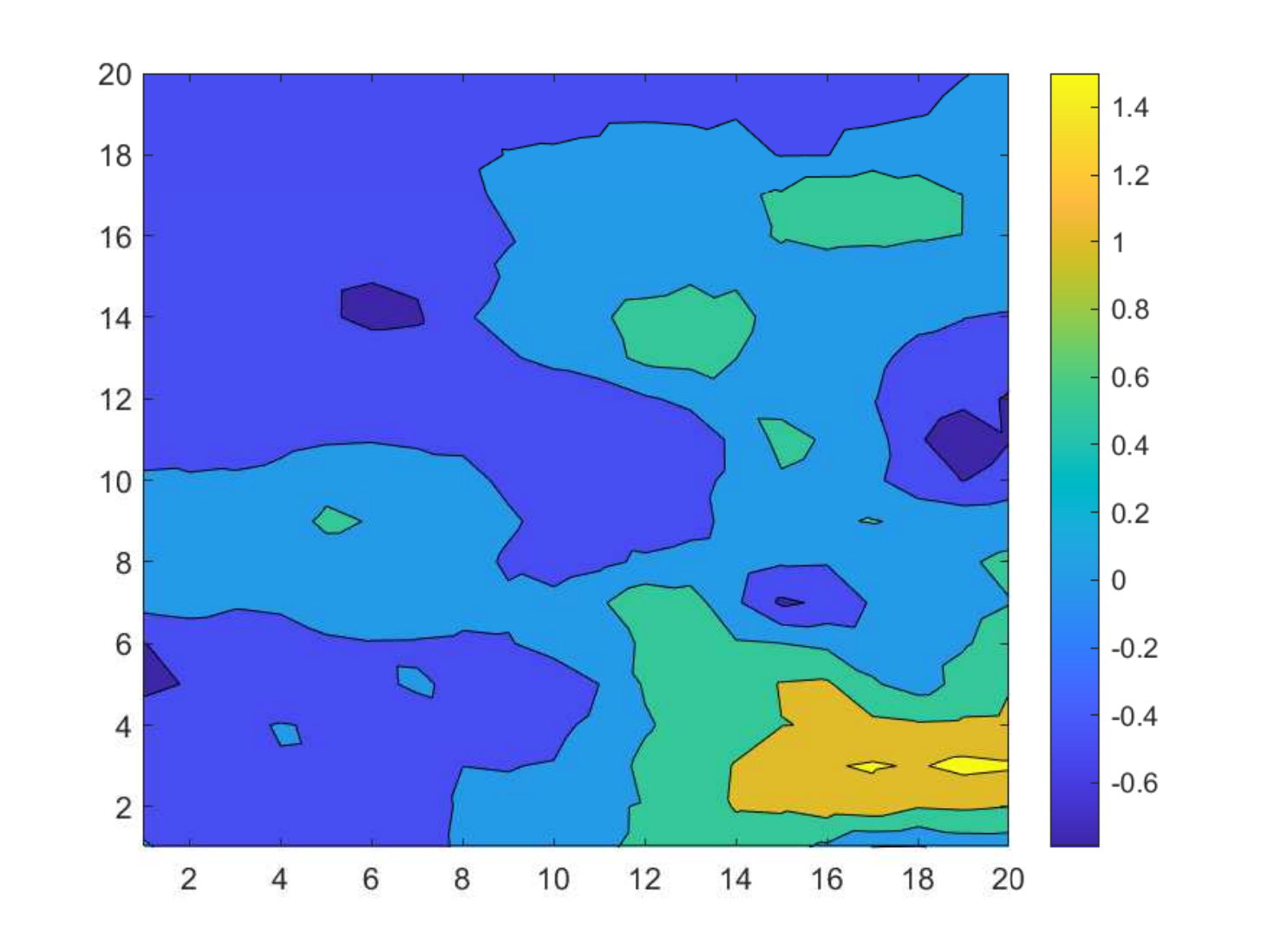}
        \includegraphics[width=2.5cm,height=2.5cm]{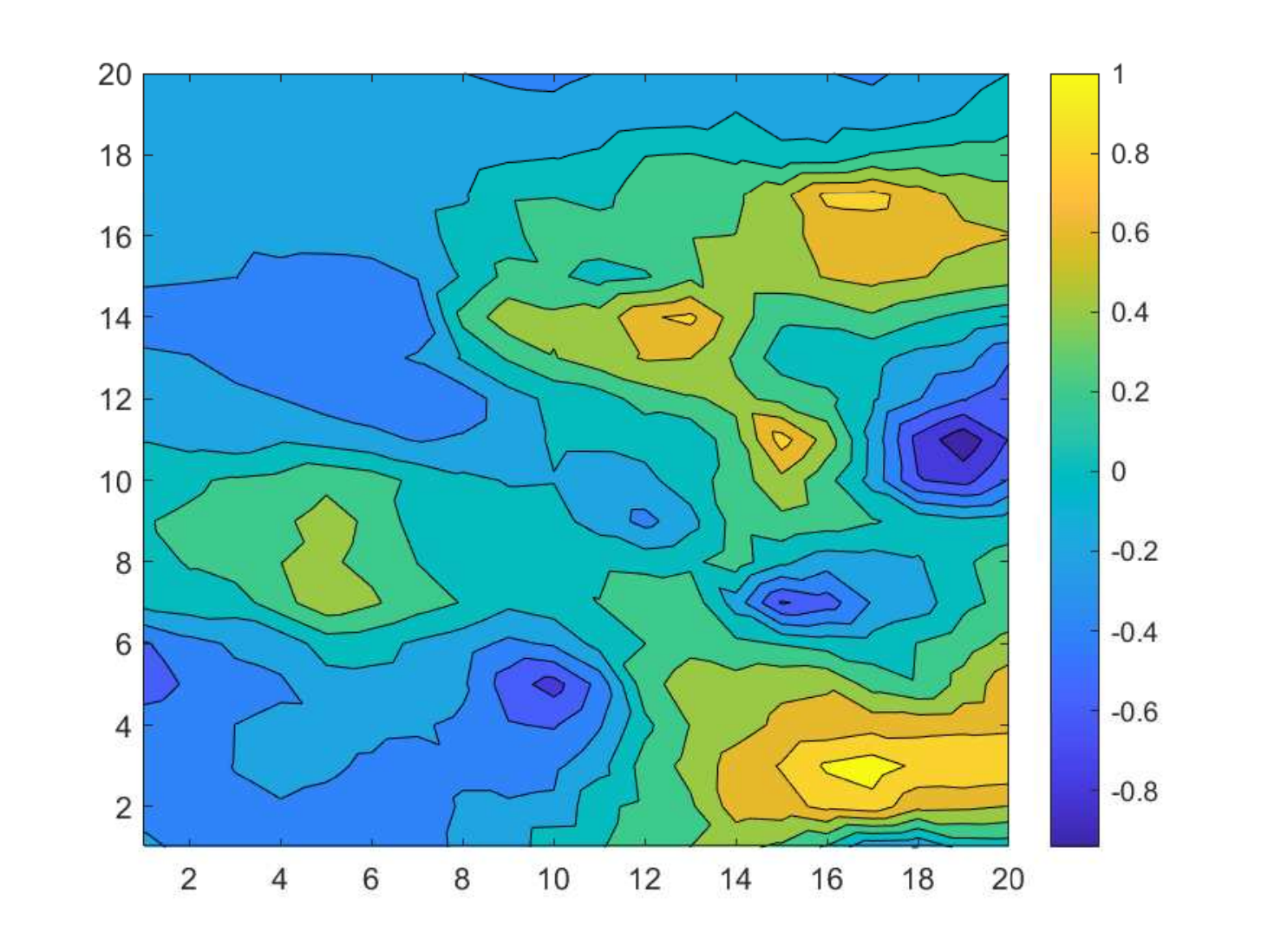}
        \includegraphics[width=2.5cm,height=2.5cm]{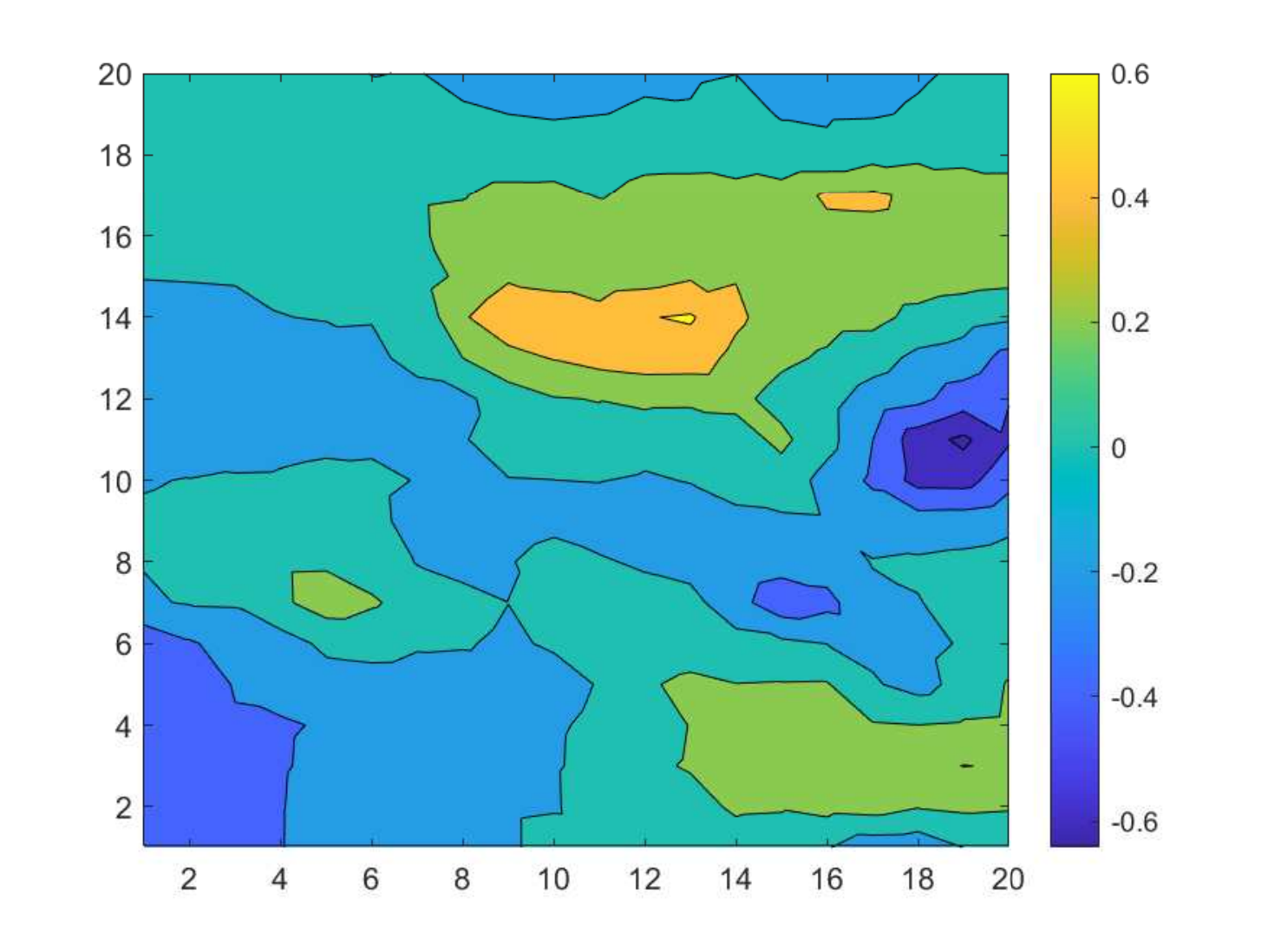}
        \hspace*{0.56cm}
        \includegraphics[width=2.5cm,height=2.5cm]{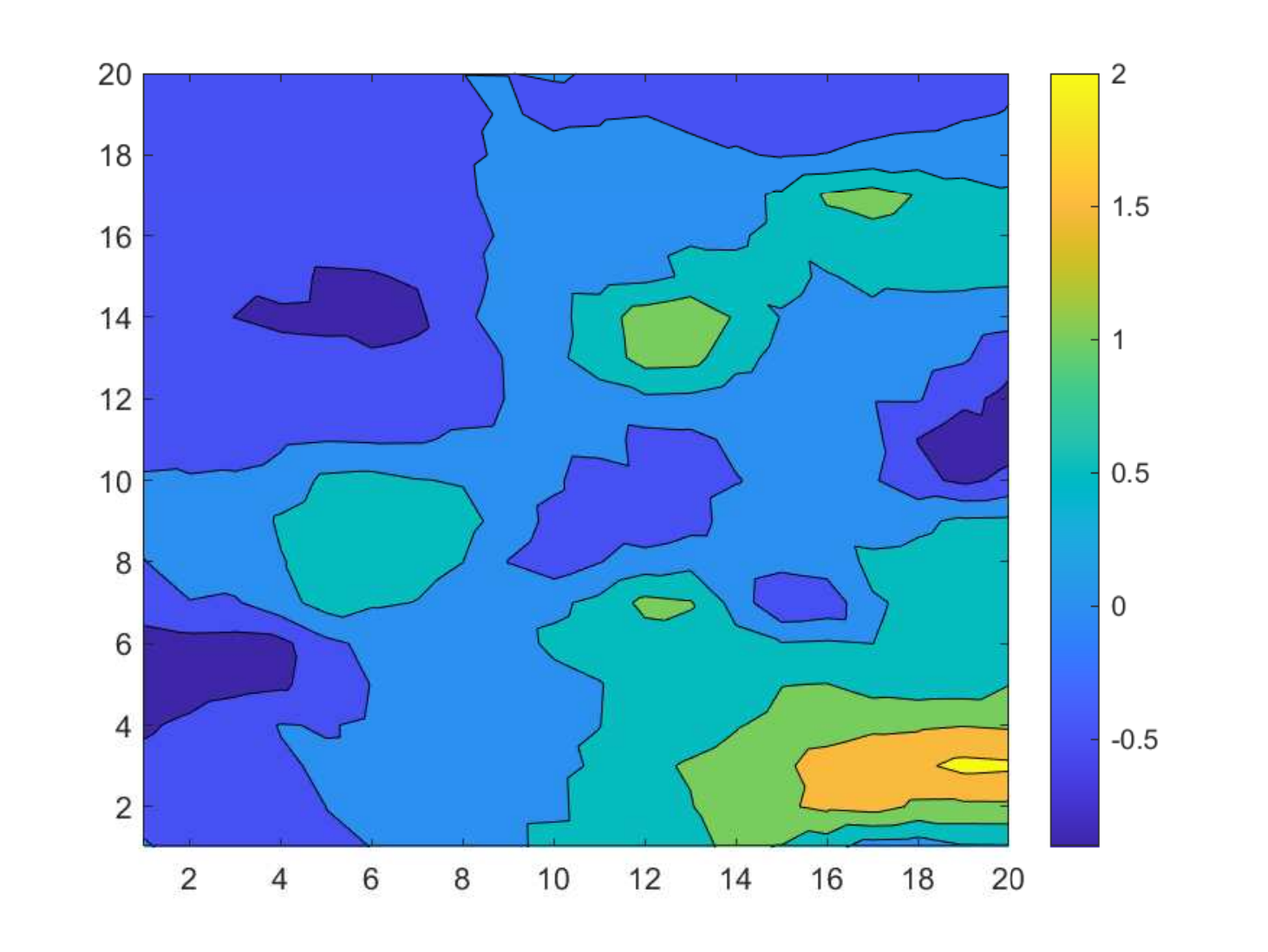}
        \includegraphics[width=2.5cm,height=2.5cm]{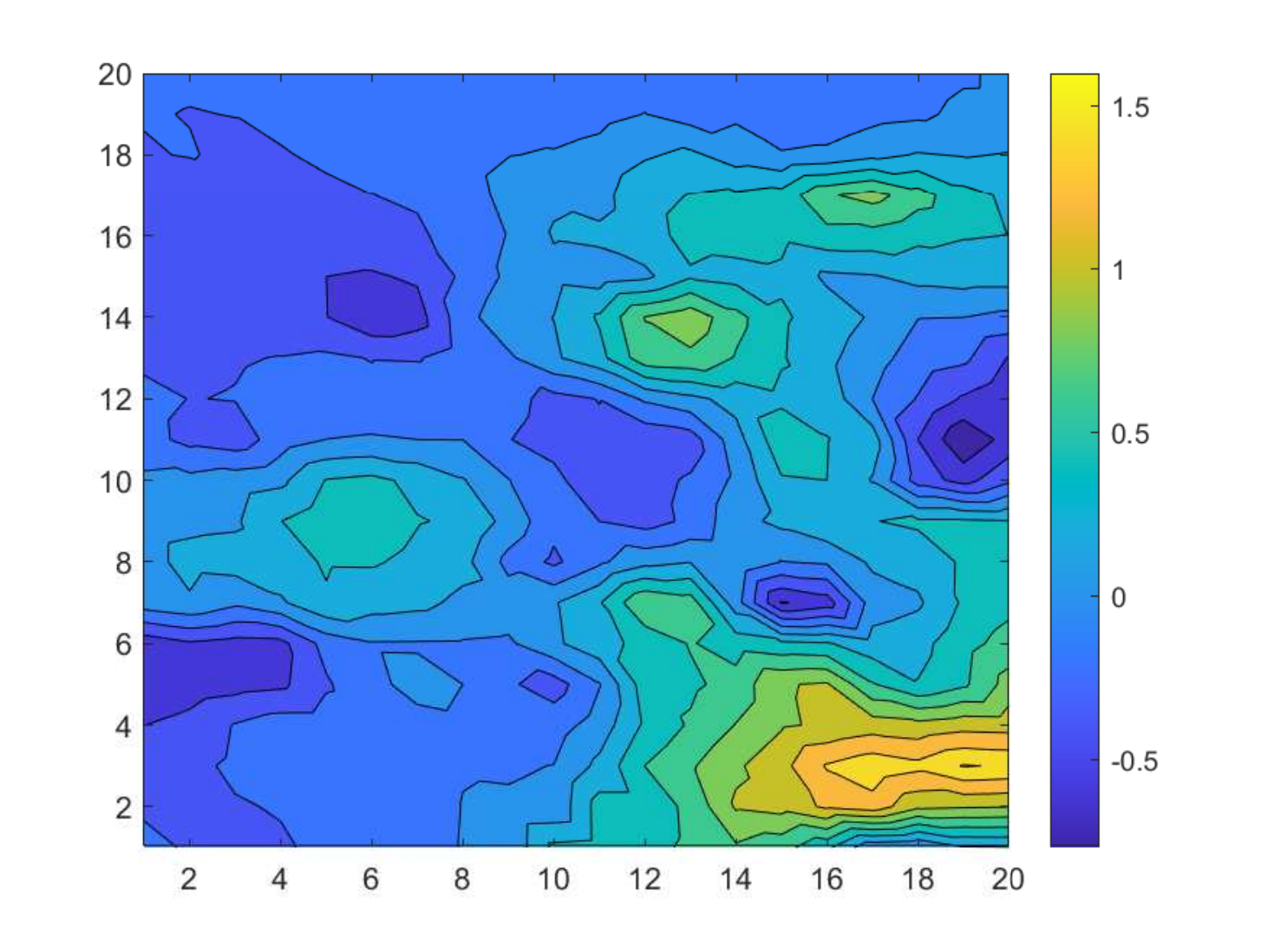}
        \includegraphics[width=2.5cm,height=2.5cm]{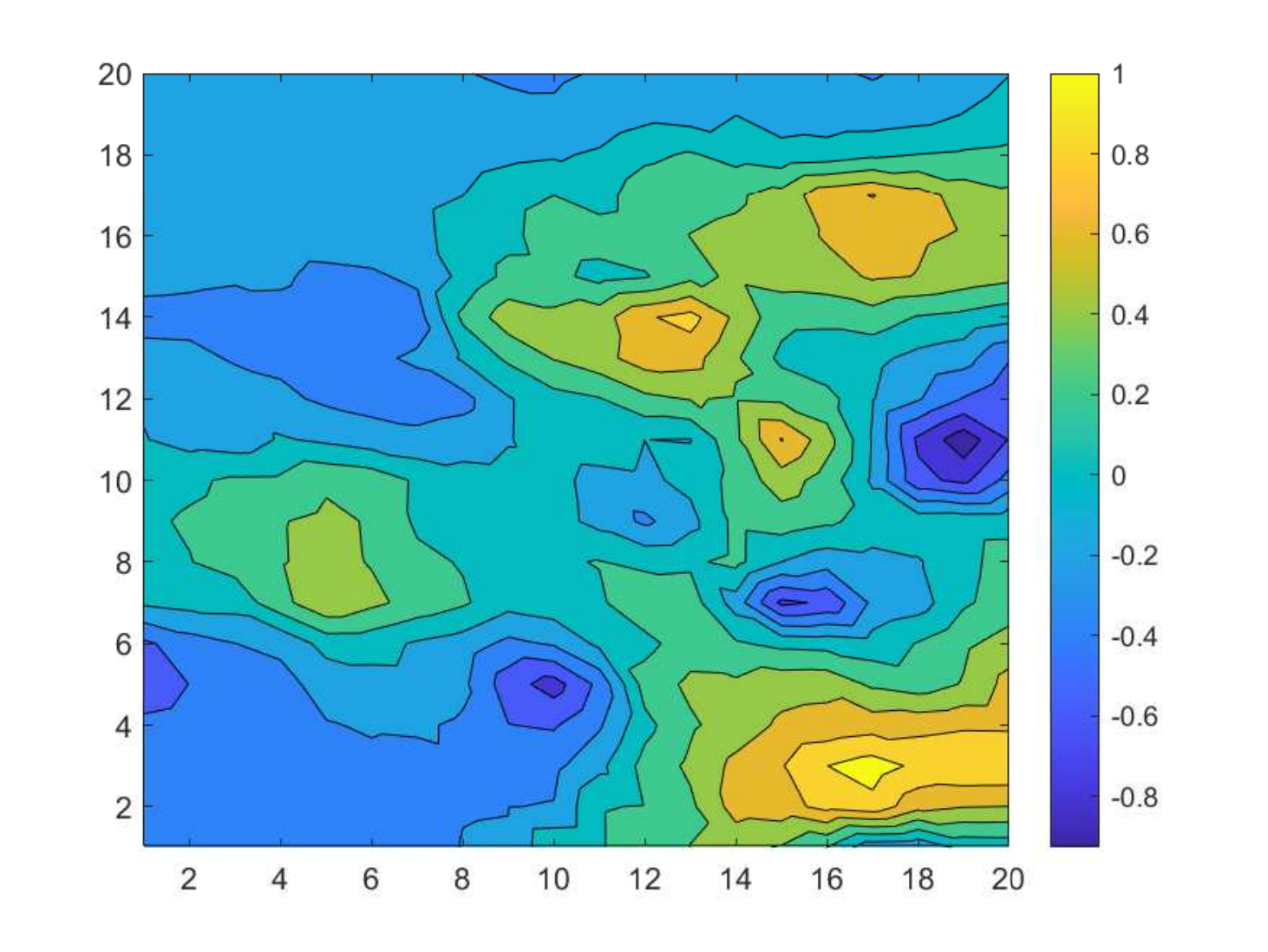}
        \includegraphics[width=2.5cm,height=2.5cm]{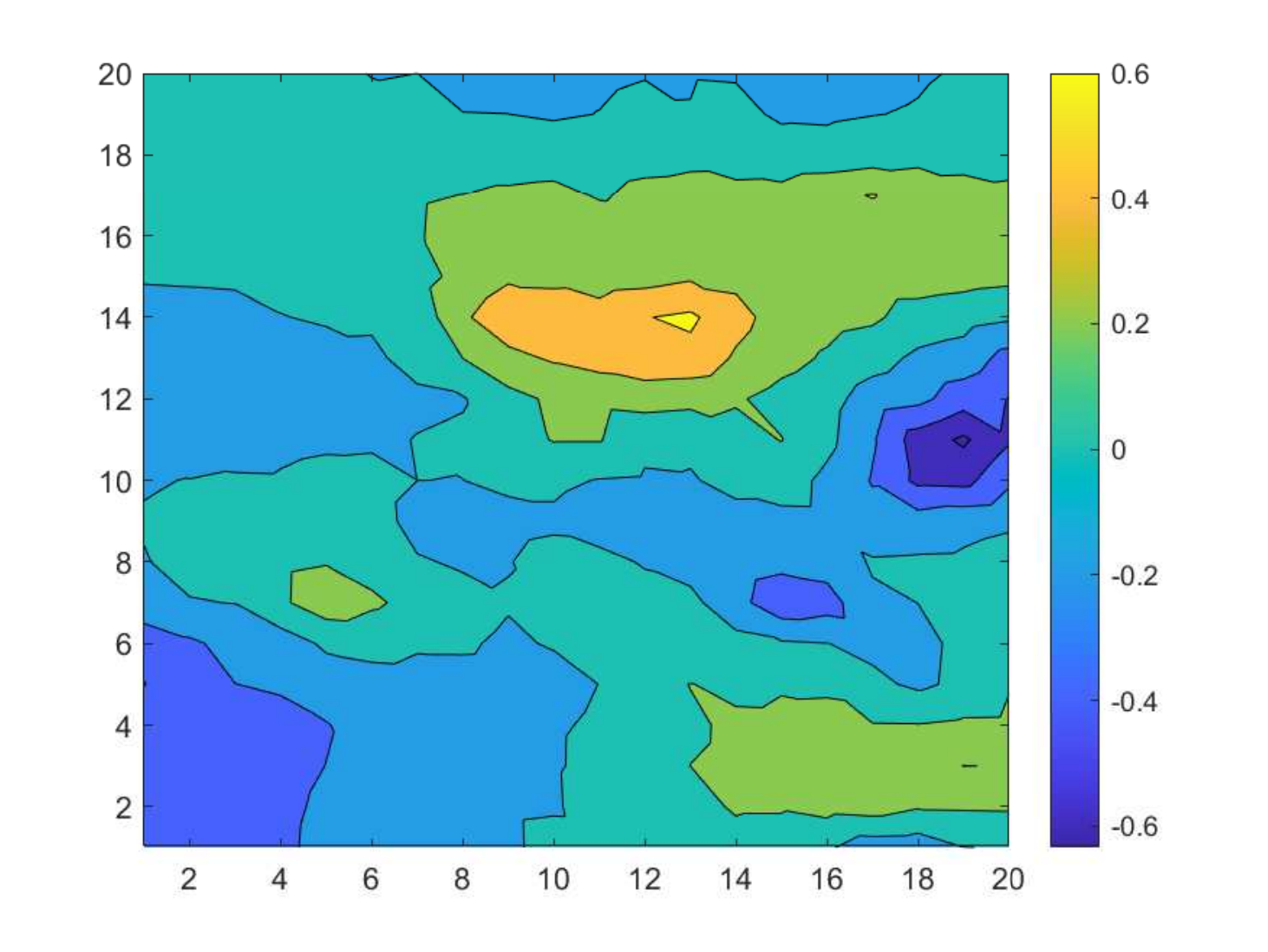}
        \caption{Contour plots of the observed log--intensity field at monthly times $t=108$ (top--row) and $t=216$ (third--row), and  the estimated log--intensity field at $t=108$ (second--row) and   $t=216$ (bottom--row). Both observed and estimated values at times $t=108,$ and  $t=216$  are displayed  through the scales $j=7,8,9,10$ (from left to right), in the Haar wavelet system, from the  temporal and spatial interpolated data over a $20\times 20 $  regular grid}
 \label{figa1}
 \end{figure}
\end{center}

\begin{center}
 \begin{figure}[!h]
 \centering
\includegraphics[width=2.5cm,height=2.5cm]{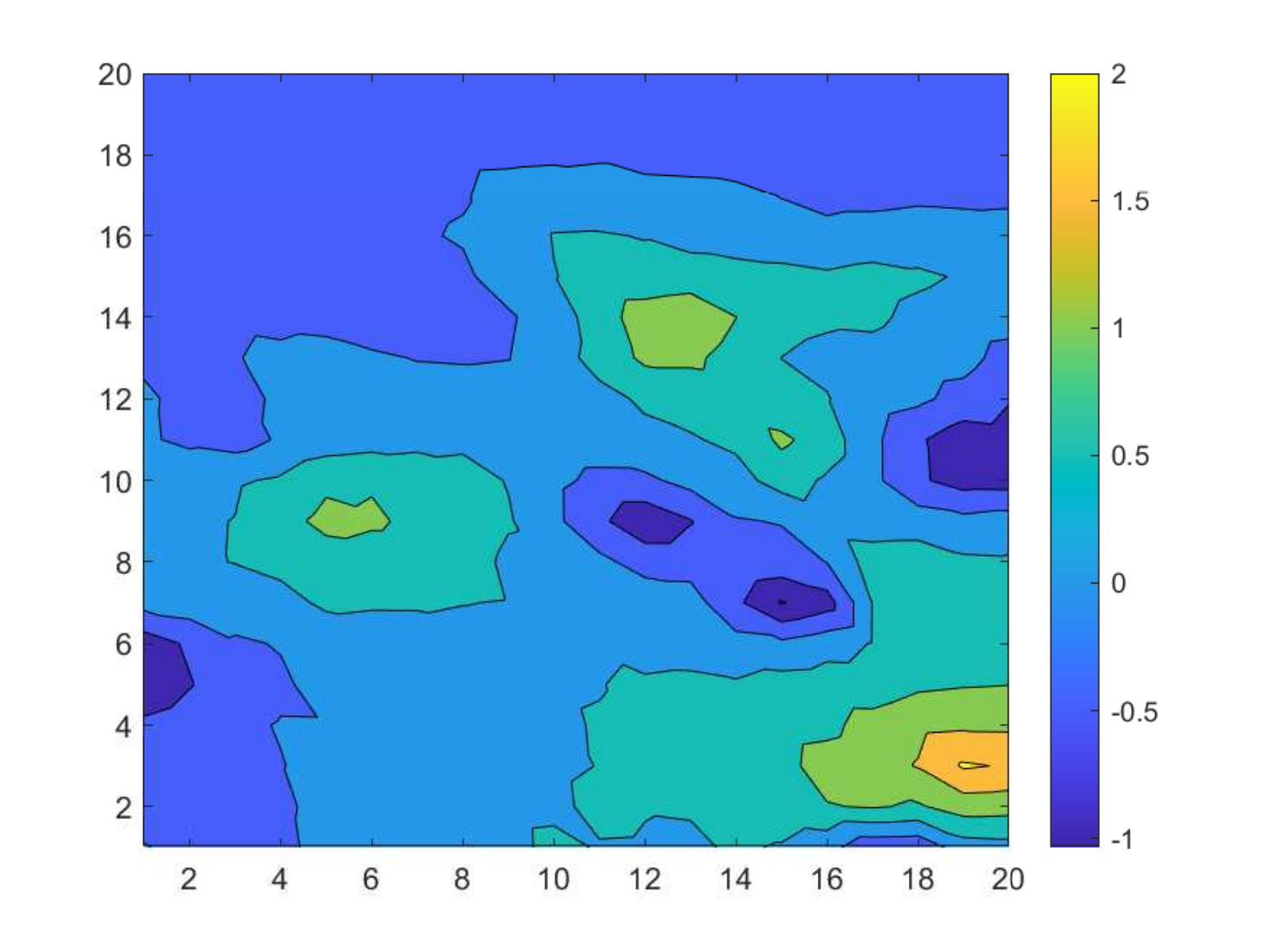}
        \includegraphics[width=2.5cm,height=2.5cm]{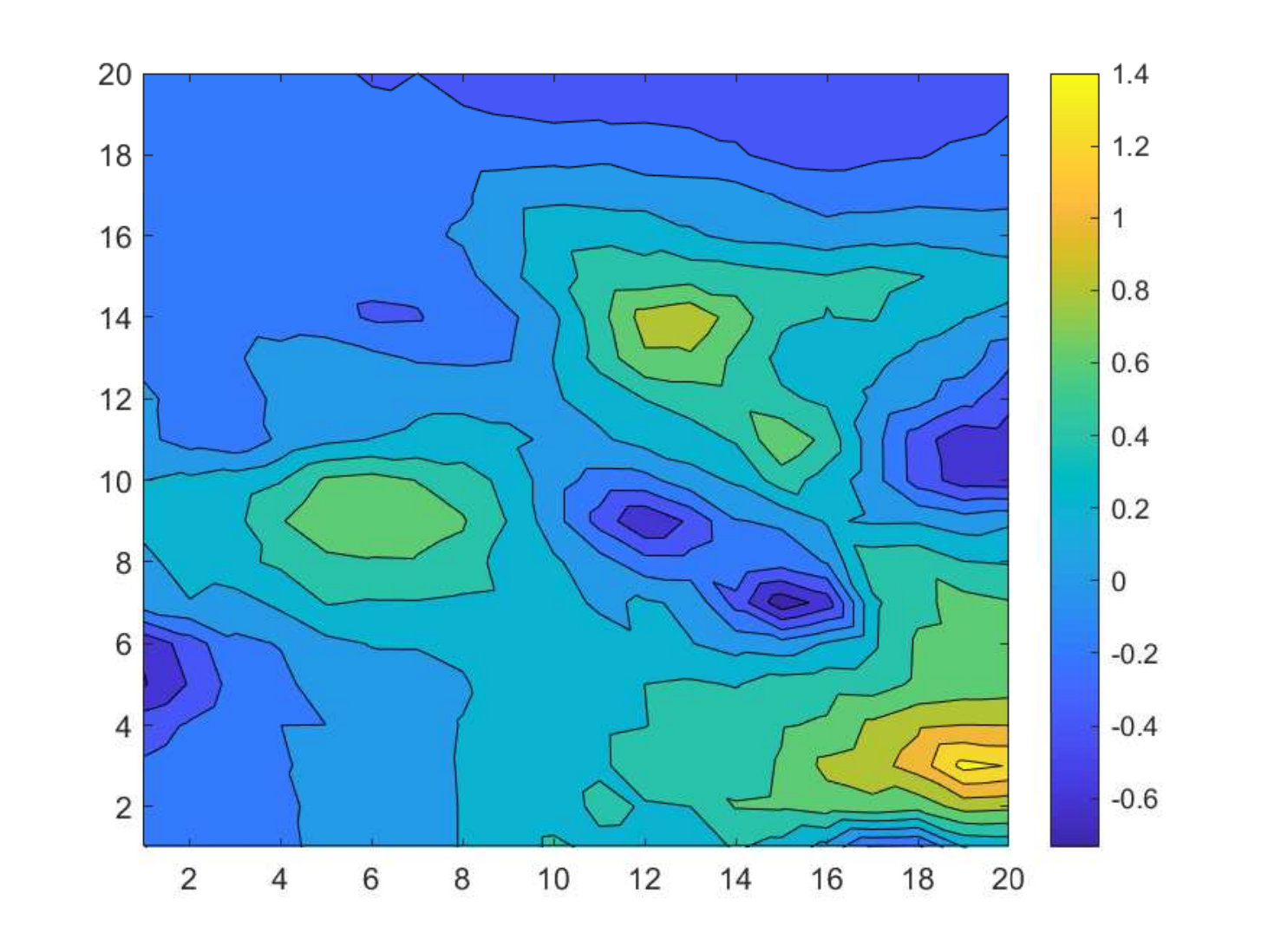}
        \includegraphics[width=2.5cm,height=2.5cm]{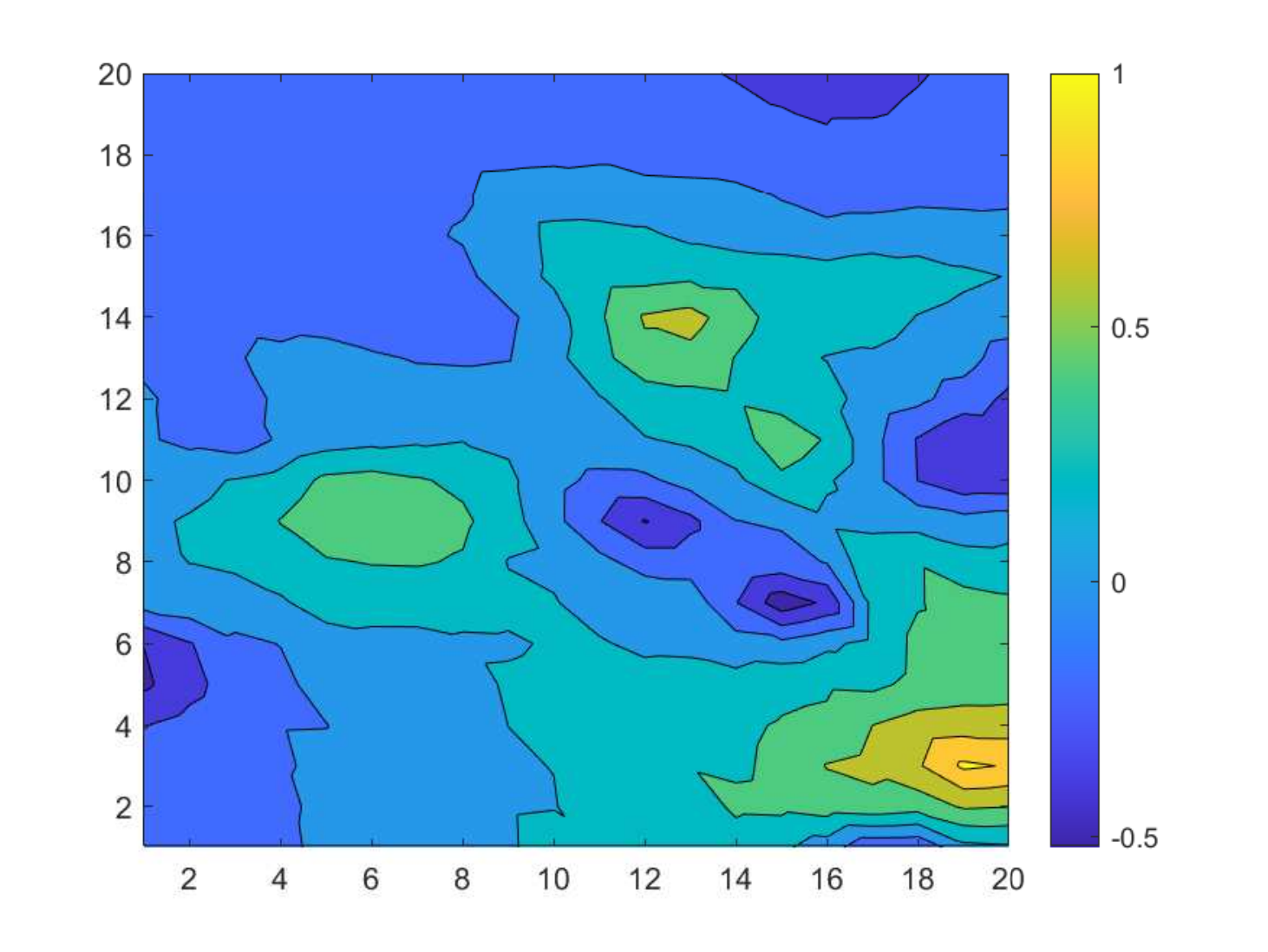}
        \includegraphics[width=2.5cm,height=2.5cm]{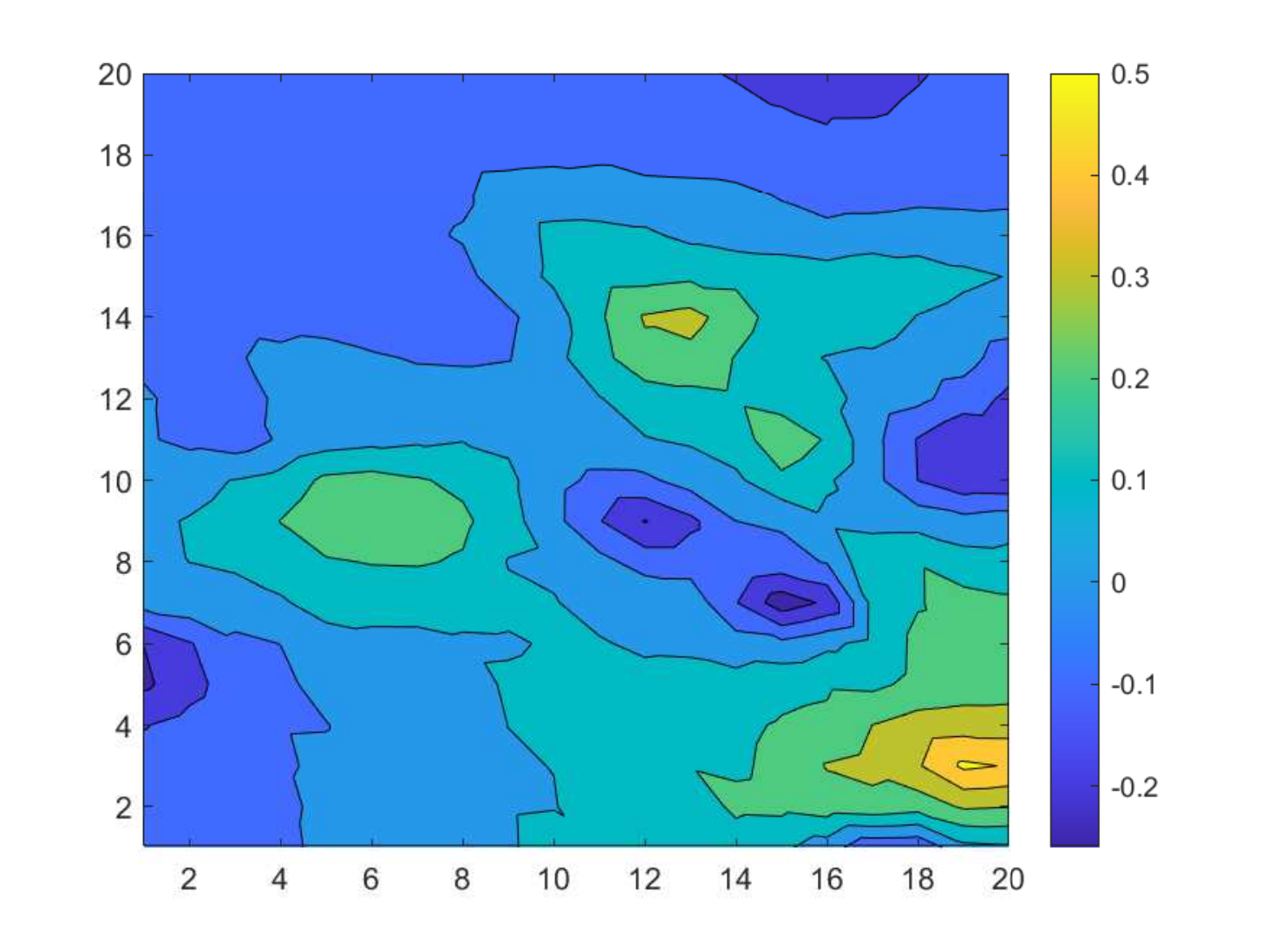}
        \hspace*{0.56cm}
         \includegraphics[width=2.5cm,height=2.5cm]{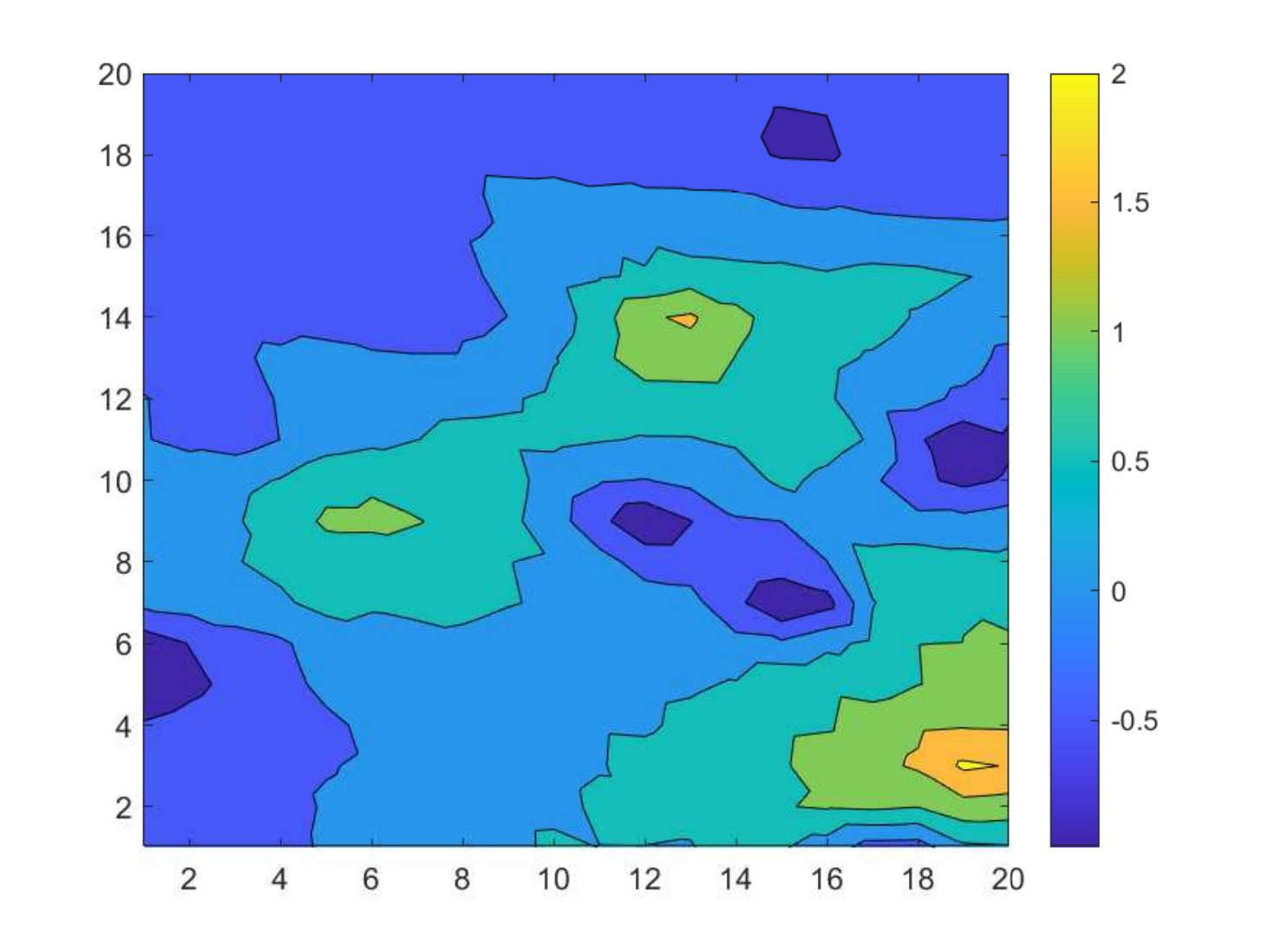}
        \includegraphics[width=2.5cm,height=2.5cm]{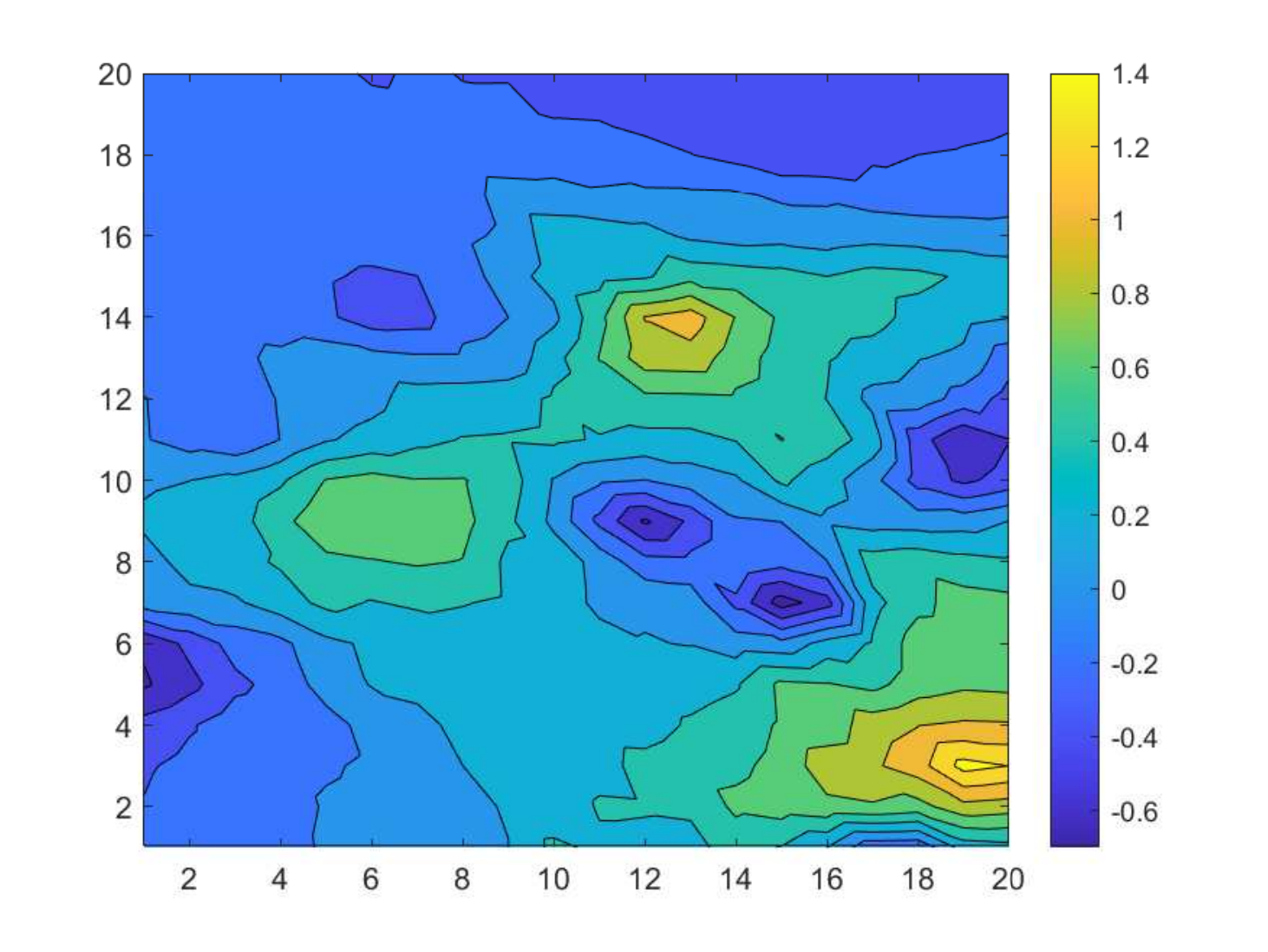}
        \includegraphics[width=2.5cm,height=2.5cm]{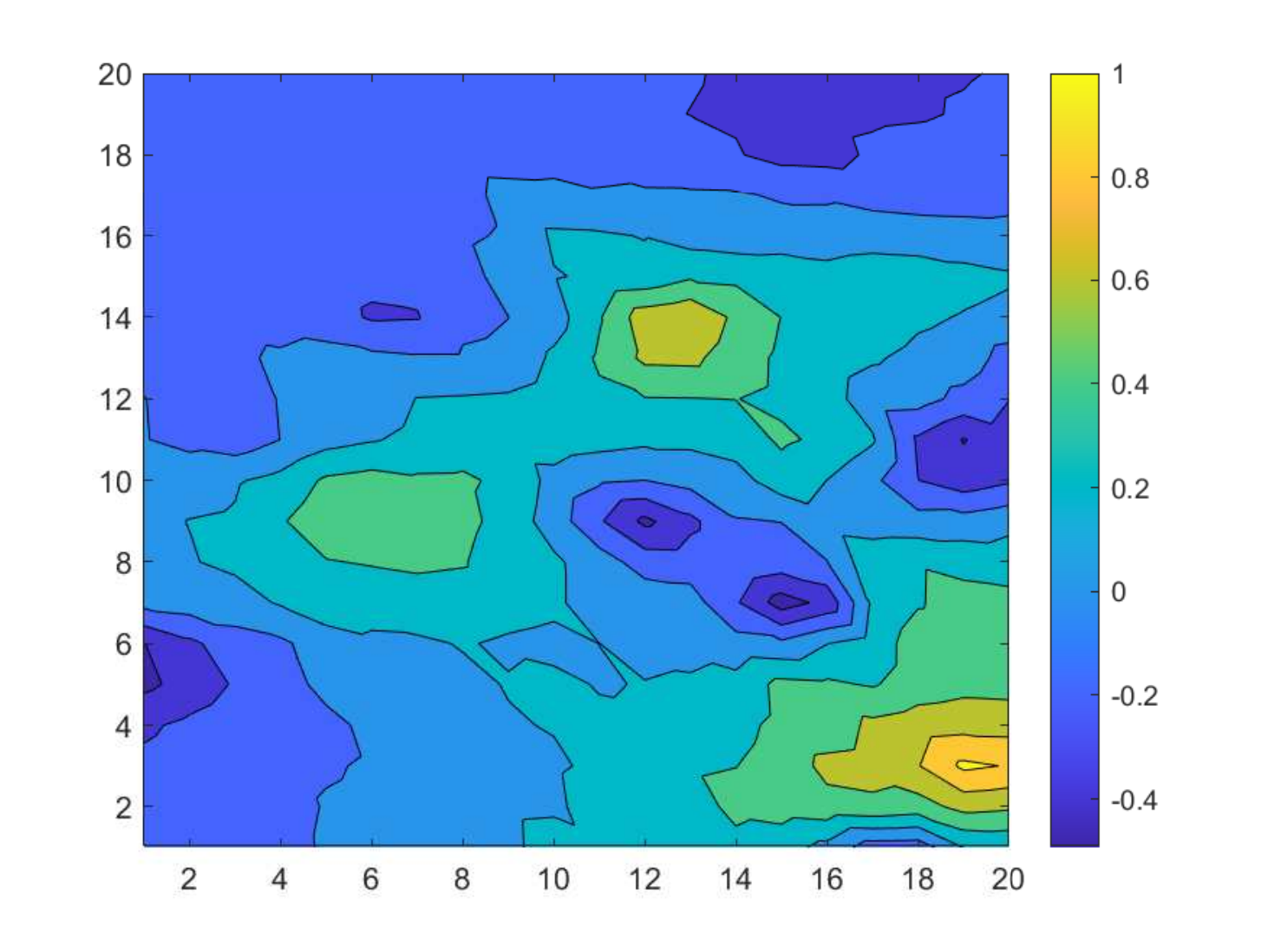}
        \includegraphics[width=2.5cm,height=2.5cm]{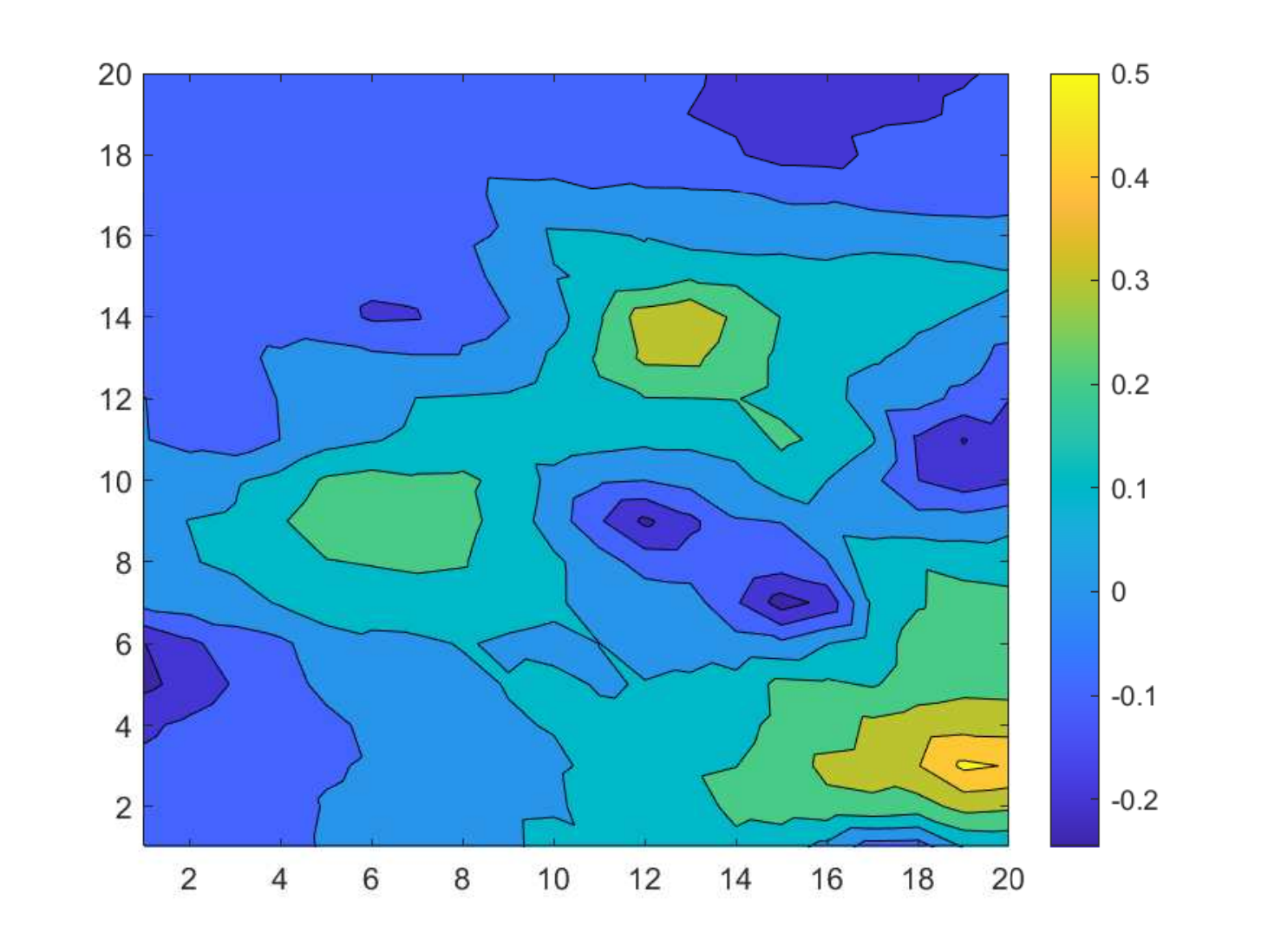}
        \hspace*{0.56cm}
         \includegraphics[width=2.5cm,height=2.5cm]{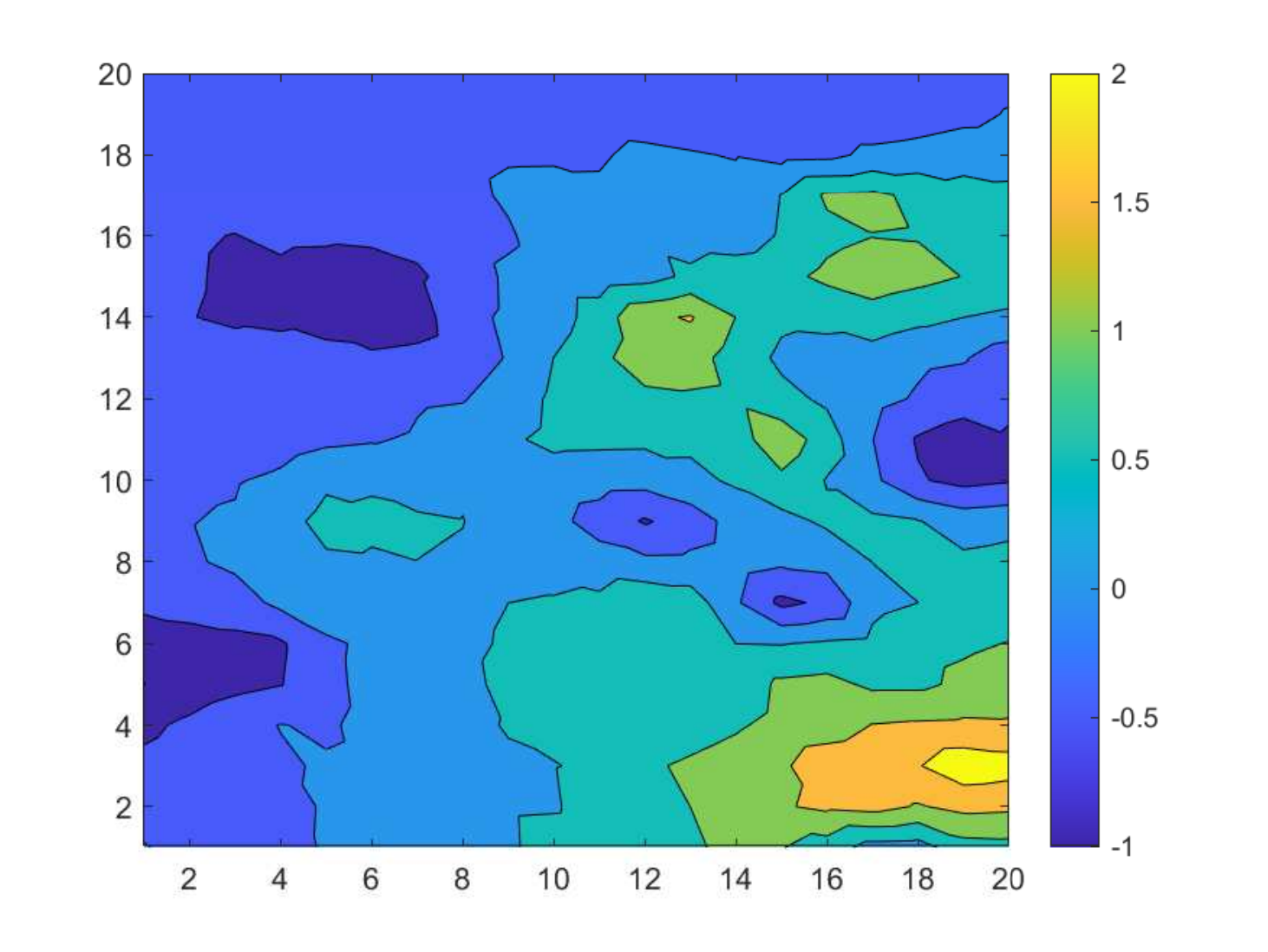}
        \includegraphics[width=2.5cm,height=2.5cm]{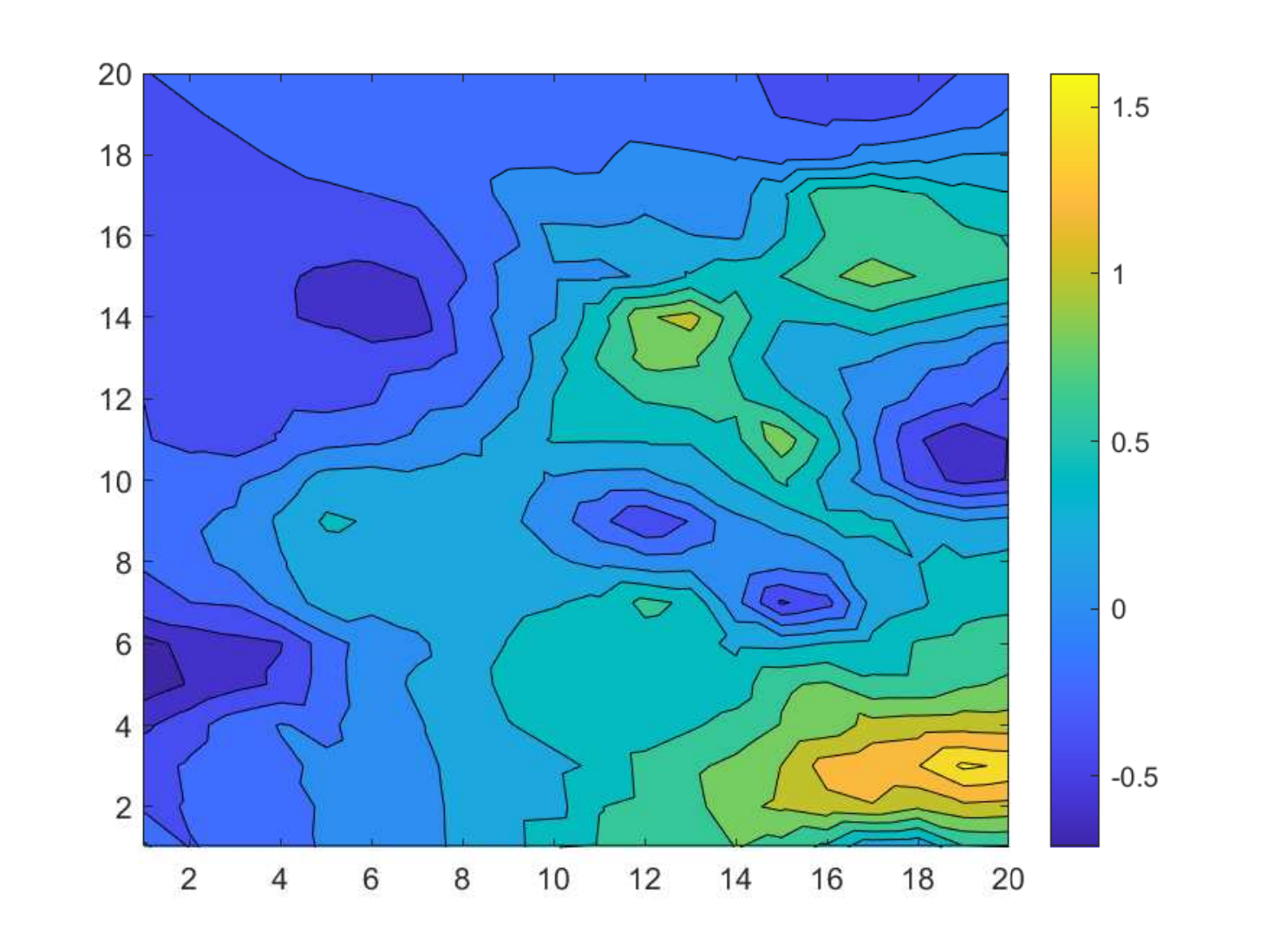}
        \includegraphics[width=2.5cm,height=2.5cm]{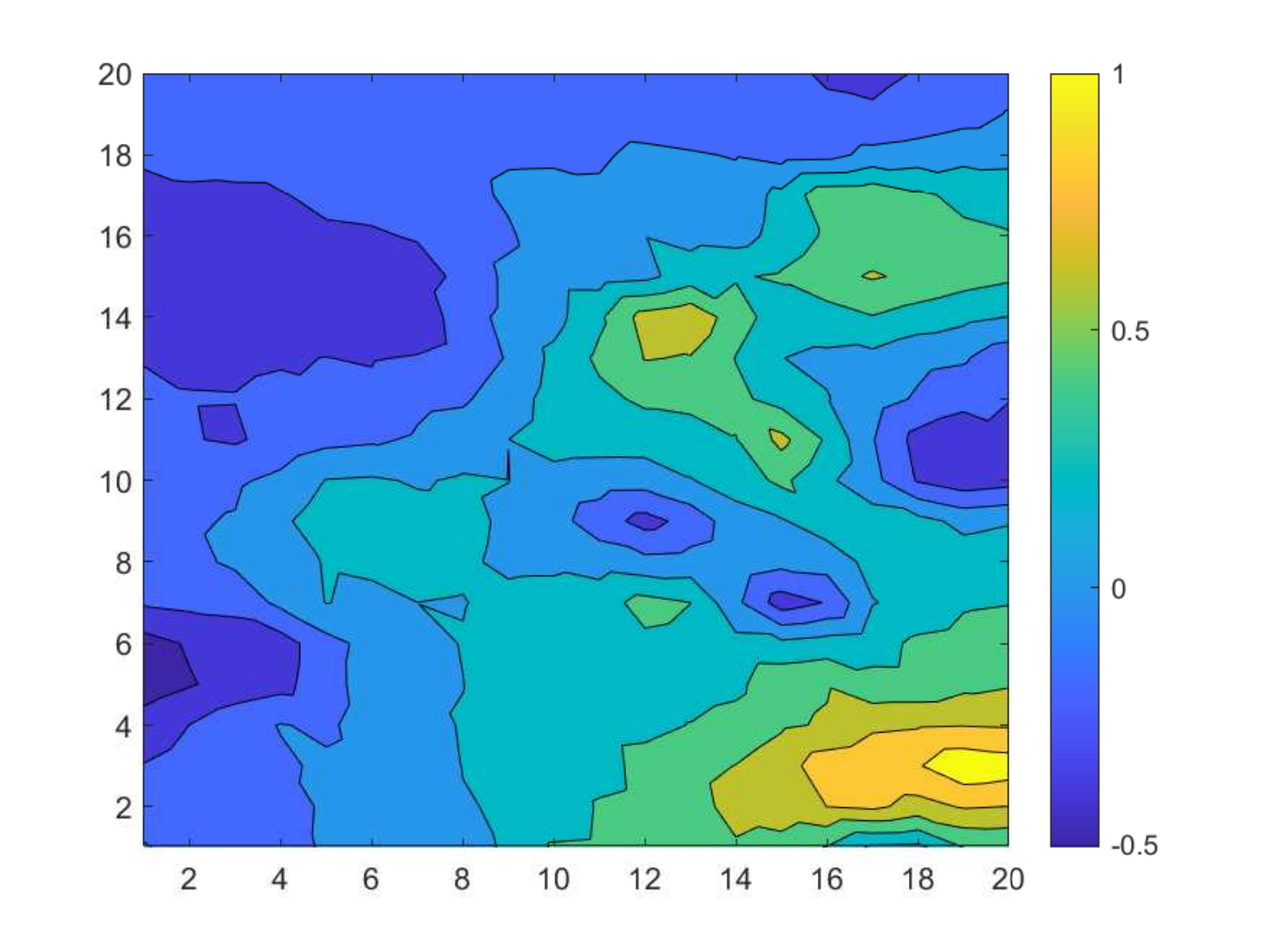}
        \includegraphics[width=2.5cm,height=2.5cm]{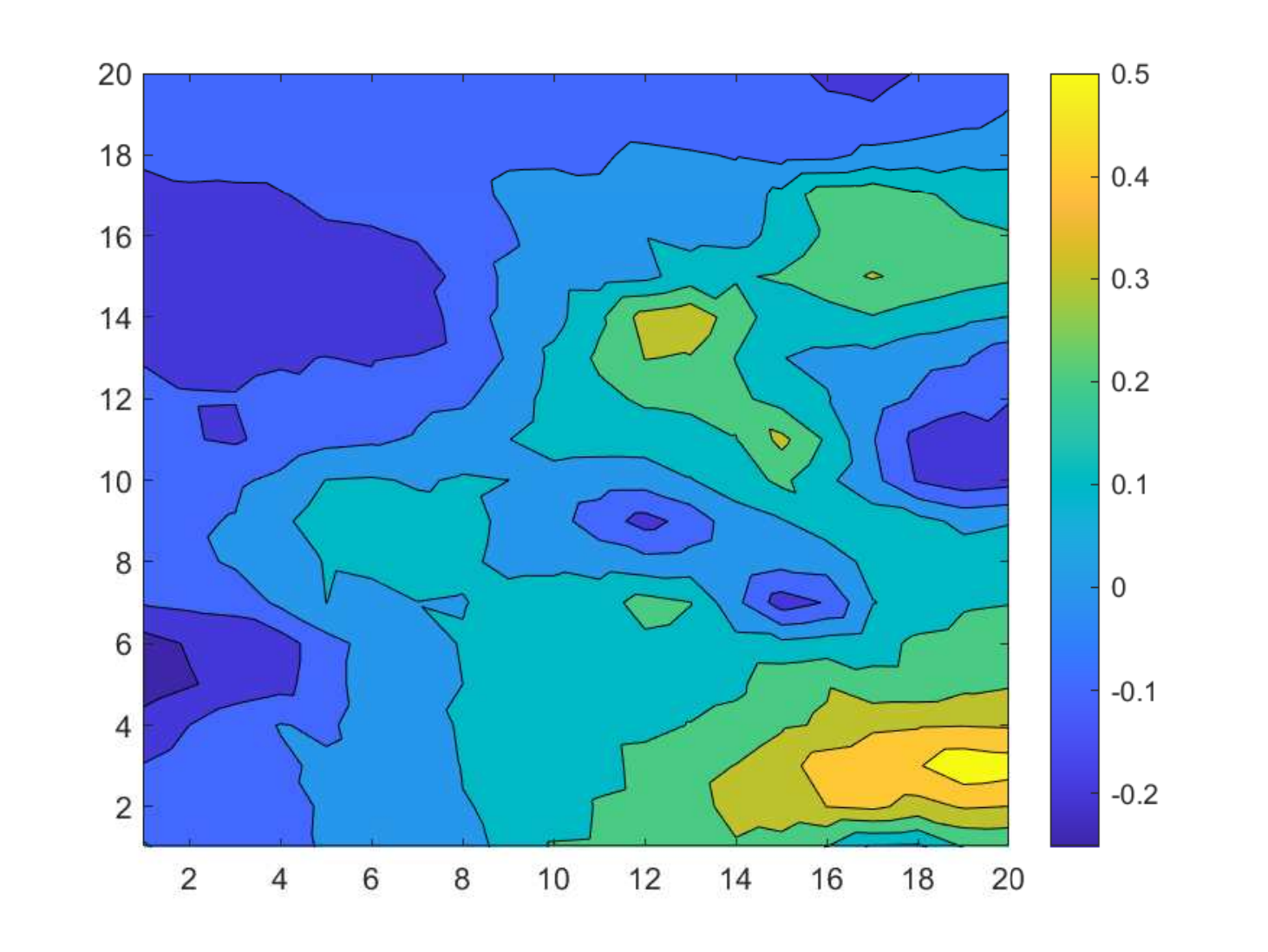}
        \hspace*{0.56cm}
        \includegraphics[width=2.5cm,height=2.5cm]{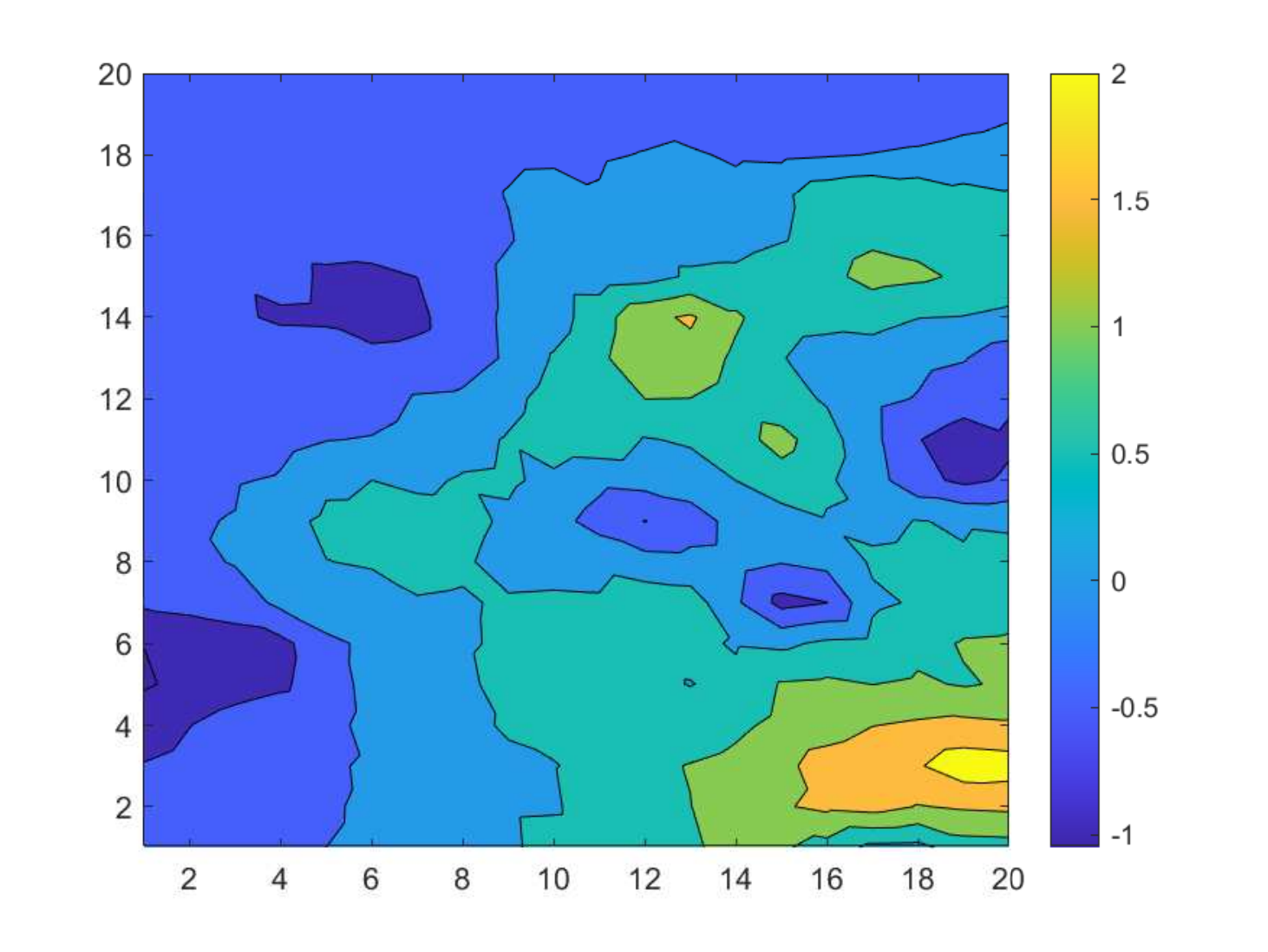}
        \includegraphics[width=2.5cm,height=2.5cm]{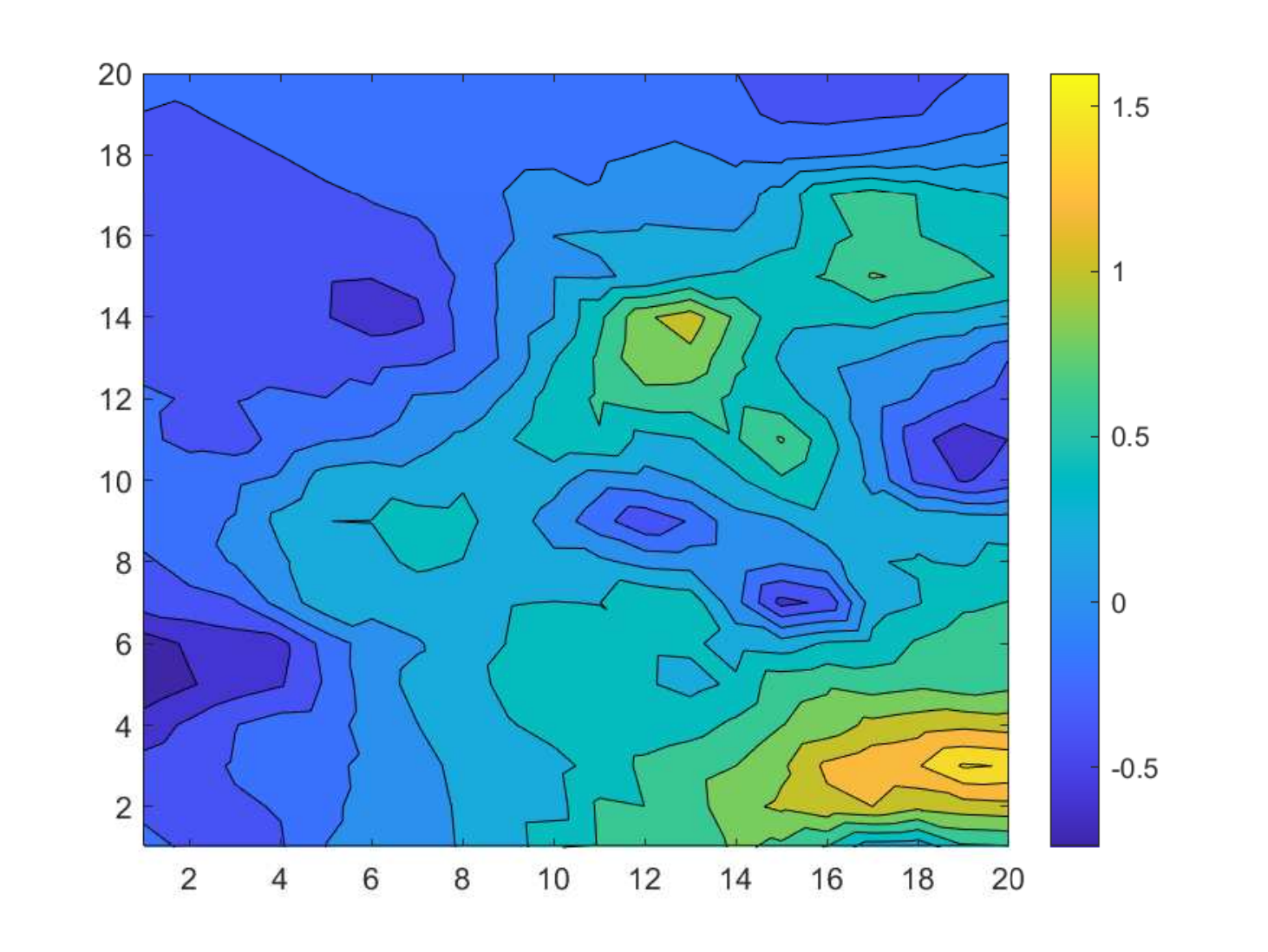}
        \includegraphics[width=2.5cm,height=2.5cm]{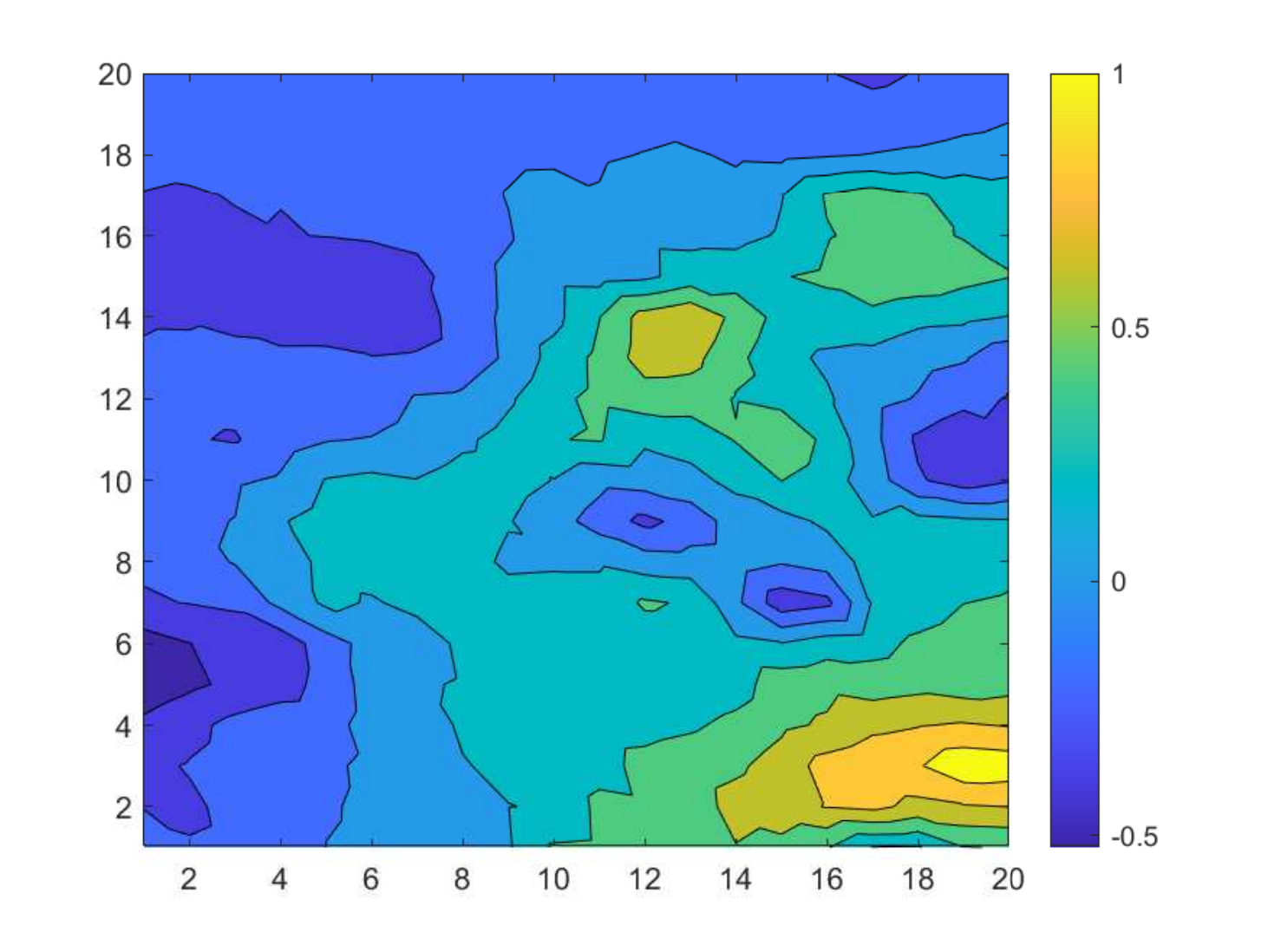}
        \includegraphics[width=2.5cm,height=2.5cm]{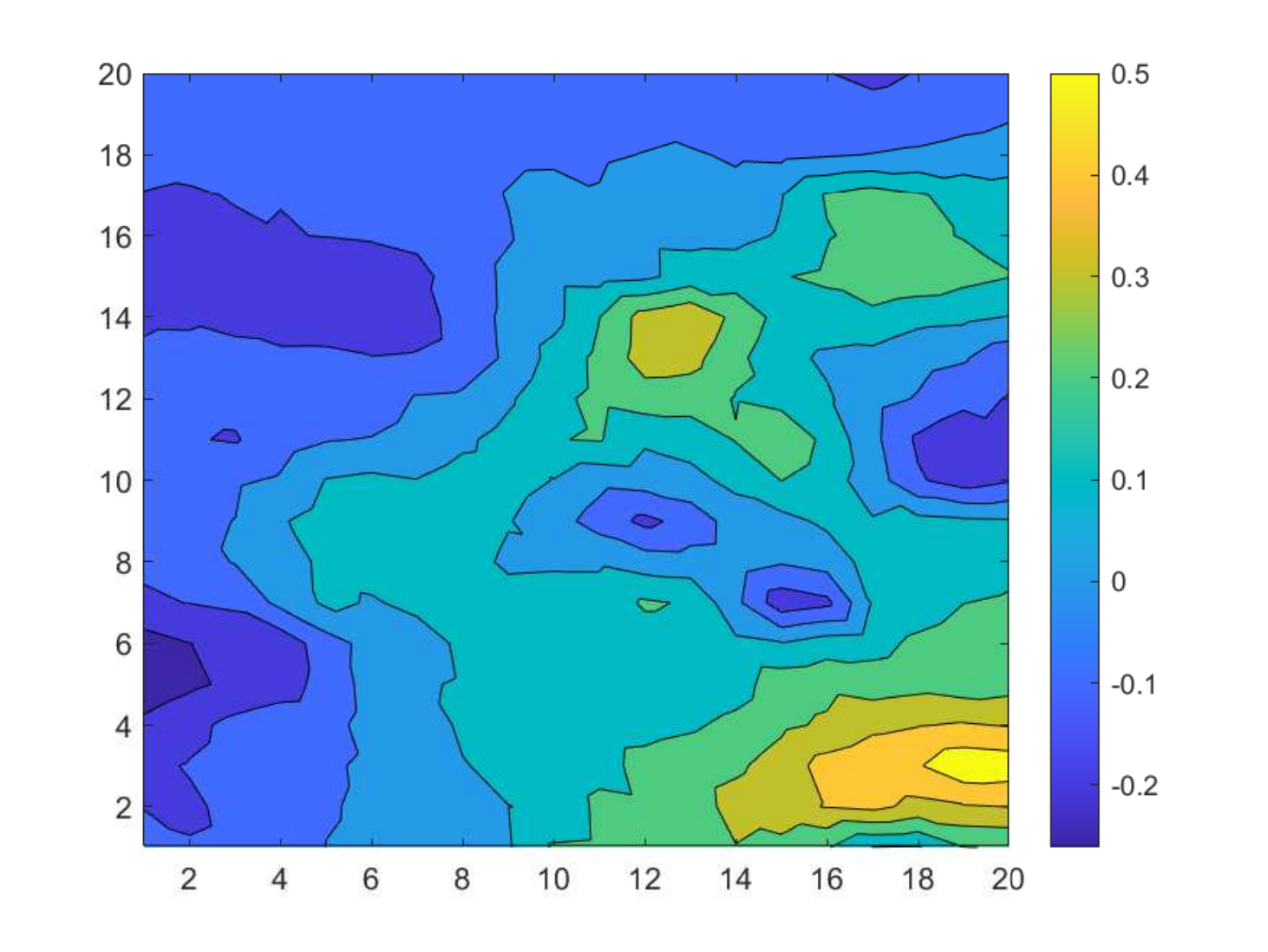}
 \caption{Contour plots of the observed log--intensity field at monthly times $t=108$ (top--row) and $t=216$ (third--row), and  the estimated log--intensity field at $t=108$ (second--row) and   $t=216$ (bottom--row). Both observed and estimated values at times $t=108,$ and  $t=216$  are displayed  through the scales $j=7,8,9,10$ (from  left to  right), in the Haar wavelet system, from the  temporal  interpolated and smoothed data over a $20\times 20 $  regular grid}
 \label{figa2}
 \end{figure}
\end{center}

\clearpage
\subsection{Validation results}

        Our multiscale spatial functional approach is now validated from the data. Specifically,  by leaving aside  the curves observed at the nodes  in a neighbourhood of the province defining the region of interest (the validation functional data set), equations (\ref{ps})--(\ref{SARHpp})  are computed  from the  remaining functional observations, spatially distributed at the neighbourhoods of the rest of the  Spanish provinces   (the  training functional data set). The corresponding multiscale SAR$\ell^{2}$(1) componentwise parameter estimators and predictors are then obtained, from the empirical wavelet reconstruction formula at resolution level 10 (truncated version of equation (\ref{SARHpp})). This process  is repeated  48 times. Thus,   the   cross-validation functional error is calculated as  the mean of the absolute functional  errors computed at each one of the  48 iterations.    The annual pointwise mean of the  computed    cross-validation functional  error can be  found in Table 4 above. The original and estimated  annually averaged  number of deaths at each province,    for each one of the years analysed, are also displayed in Figures \ref{fig3ccmapa} and \ref{fig4mapa}.
\begin{table}
\caption{ALOOCVE. Pointwise annually averaged
  cross-validation  errors.}
\label{LOOCVy}
\begin{tabular}{cccccc}
%\\
 \\
Year& ALOOCVE&Year&ALOOCVE&Year& ALOOCVE\\[5pt]
  1980 & 0.0247 & 1992  & 0.0118 & 2004  & 0.0132 \\
  1981  & 0.0144 & 1993  & 0.0130 & 2005  & 0.0117 \\
  1982  & 0.0112 & 1994  & 0.0163 & 2006  & 0.0135 \\
  1983  & 0.0125 & 1995  & 0.0159 & 2007  & 0.0140 \\
  1984  & 0.0144 & 1996  & 0.0111 & 2008  & 0.0118 \\
  1985  & 0.0122 & 1997  & 0.0099 & 2009  & 0.0113 \\
  1986  & 0.0126 & 1998  & 0.0108 & 2010  & 0.0143 \\
  1987  & 0.0155 & 1999  & 0.0141 & 2011  & 0.0131 \\
  1988  & 0.0161 & 2000  & 0.0167 & 2012  & 0.0122 \\
  1989  & 0.0144 & 2001  & 0.0161 & 2013  & 0.0115\\
  1990  & 0.0125 & 2002  & 0.0143 & 2014  & 0.0145\\
  1991  & 0.0118 & 2003  & 0.0140 & 2015 & 0.0221\\
\end{tabular}
\end{table}

\begin{figure}
\begin{center}
\includegraphics[width=12cm,height=12cm]{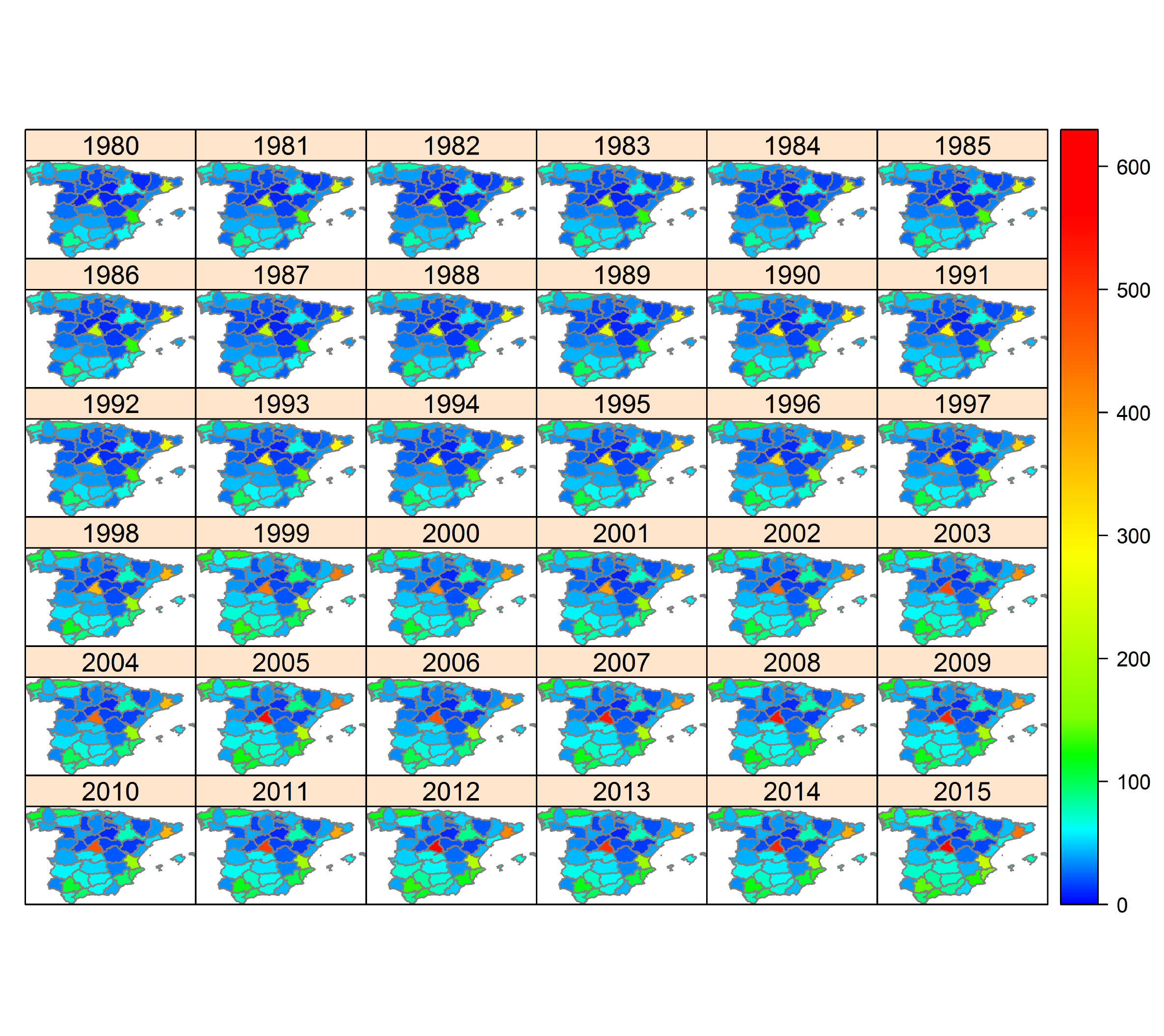}
\end{center}
\caption{Annually averaged   observed    number of respiratory disease deaths   at each one of  the 48 Spanish provinces from January 1980 to December 2015.}
\label{fig3ccmapa}
\end{figure}

\begin{figure}
\begin{center}
\includegraphics[width=12cm,height=12cm]{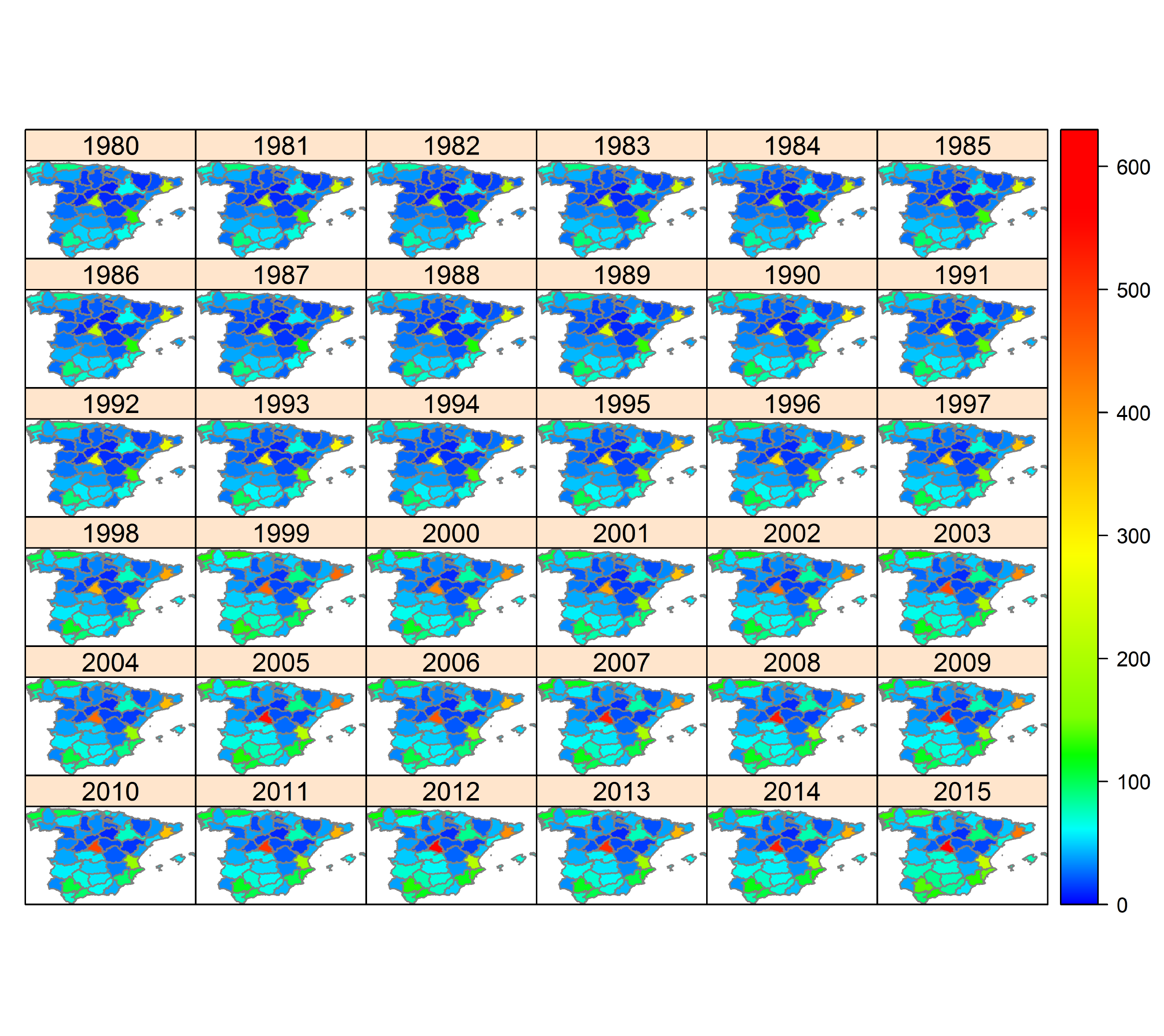}
\end{center}
\caption{Annually averaged  estimates of the number  of  respiratory disease deaths, at each one of the  48 Spanish provinces in the period  1980--2015.}\label{fig4mapa}
\end{figure}

\clearpage
\section{Concluding remarks}
\label{s7}
 The multiscale spatial functional prediction methodology presented allows heterogeneity analysis over different temporal scales of the log--intensity field. It is well--known that FDA preprocessing techniques (e.g., B--spline smoothing) usually hide or eliminate local variation at high resolution levels  (see, e.g., \cite{MullerStandmuler}; \cite{HorKok12};
  \cite{GB2016}, among others).
  The estimation approach adopted in this paper overcomes this limitation, providing a more flexible framework. Thus, a suitable choice of the scale where the    log--intensity field should be analysed can be performed, according to the    aims  of the  study and uncertainties in the counts associated with  the lack of sample information.

   The  infinite--dimensional parametric estimation approach proposed, based on  relative entropy in the spatial  spectral domain, through a multiscale analysis in time, does not require previous information about the parameter probability distribution, as  in the Bayesian framework. Furthermore heavy computational problems, arising in the latter framework (e.g., high--dimensional covariance matrices associated with latent Gaussian variables and  hyperparameters) are avoided with the presented estimation methodology.

  Our approach can be extended to the case where a multiresolution analysis is also performed in space, for  approximation of the  hidden spatial continuous functional log--intensity process driving the counts, as an alternative to the usual spatial B-spline smoothing techniques. The resulting  approach allows   heterogeneity analysis through temporal and spatial scales, providing a multiresolution approximation of  space--time interaction affecting the evolution of  the log--intensity process. This  topic constitutes the subject of a subsequent paper.

%%%%%%%%%%%%%%%%%%%%%%%%%%%%%%%%%%%%%%%%%%%%%%%%%%%%%%%%%%%%%%%%%%%%%%%%%%%%%%%%%%%%%%%%%%%%%%%%%%%%%%%%%%%%%%%%%%%%%%%%%%%%

%%%%%%%%%%%%%%%%%%%%%%%%%%%%%%%%%%%%%%%%%%%%%%%%%%%%%%%%%%%%%%%%%%%%%%%%%%%%%%%%%%%%%%%%%%%%%%%%%%%%%%%%%%%%%%%%%%%%%%%%%%%%

\noindent {\bfseries Acknowledgements}

\medskip

This work  was supported  by MCIU/AEI/ERDF, UE grant
PGC2018-099549-B-I00 (M.D. Ruiz--Medina, M.P. Fr\'{\i}as, A. Torres-Signes),  PID2019-107392RB-100 (J. Mateu), and by grant A-FQM-345-UGR18  (M.D. Ruiz--Medina, M.P. Fr\'{\i}as, A. Torres-Signes) cofinanced by ERDF Operational Programme 2014-2020 and the Economy and Knowledge Council of the Regional Government of Andalusia, Spain.

%%%%%%%%%%%%%%%%%%%%%%%%%%%%%%%%%%%%%%%%%%%%%%%%%%%%%%%%%%

\end{document}